\newtheorem{theorem}{Theorem}
\newtheorem{proposition}{Proposition}
\newtheorem{lemma}{Lemma}
\newtheorem{remarks}{Remarks}
\newtheorem{corollary}{Corollary}
\theoremstyle{definition}
\begin{document}
\small
\title[Renormalization of $\phi_4^4$ theory on the half space 
$\mathbb{R}^+ \times\mathbb{R}^3$ with flow equations]
{Perturbative renormalization of $\phi_4^4$ theory 
on the half space $\mathbb{R}^+ \times\mathbb{R}^3$ with flow equations}
\author{Majdouline BORJI \footnote{majdouline.borji@polytechnique.edu}, Christoph KOPPER \footnote{christoph.kopper@polytechnique.edu}}

\address{Centre de Physique Théorique CPHT, CNRS, UMR 7644}
\address{Institut Polytechnique de Paris, 91128 Palaiseau, France}
\vspace{10pt}
\begin{indented}
\item[]April 2022
\end{indented}

\begin{abstract}
In this paper, we give a rigorous proof of the renormalizability of the 
massive  $\phi_4^4$ theory on a half-space, using the renormalization group 
flow equations. We find that five counter-terms are needed to make the 
theory finite, namely $\phi^2$, $\phi\partial_z\phi$, $\phi\partial_z^2\phi$, 
$\phi\Delta_x\phi$ and $\phi^4$ for $(z,x)\in\mathbb{R}^+\times\mathbb{R}^3$. 
The amputated correlation functions are distributions in position space. 
We consider a suitable class of test functions and prove inductive bounds 
for the correlation functions folded with these test functions. The bounds 
are uniform in the cutoff and thus directly lead to renormalizability.
\end{abstract} 

\section{Introduction}
\indent The renormalization of quantum field theories that break translation invariance is of great importance since many interesting quantum field theories break this symmetry.  One may ask if the renormalizability of a given theory depends only on the interaction introduced and the dimension of space-time, or whether it depends also on the geometrical and topological properties of the space-time. In a previous work \cite{25}, we studied the $\phi^4_4$-theory on a lattice which is a regularization scheme that breaks translation invariance, and we found that the theory is renormalizable and the euclidean symmetries are restored for the renormalized correlation functions. In \cite{9}, the authors considered the breaking of translation invariance by studying the $\phi^4_4$ interaction on a Riemanian manifold and found that it is renormalizable. Only one additional counter-term which renormalizes the curvature is needed, compared to the $\phi^4_4$-theory in the euclidean space-time. In this work, we are interested in the renormalizability of the $\phi_4^4$ scalar field theory on a space with a boundary, which is another manifestation of the breaking of translation invariance. \\
 \indent A simple model to study surface effects in quantum field theory is the semi-infinite scalar field model which first appeared in 1971 \cite{2}. It is defined starting from the massive $\phi_4^4$ model in infinite space, with the difference that it is defined on a half space bounded by a plane. In this model, three types of boundary conditions are considered in the litterature, namely Dirichlet, Neumann and Robin boundary conditions (b.c.). From a mathematical point of view, each b.c. corresponds to a self-adjoint extension of the Laplacian in $\mathbb{R}^+\times \mathbb{R}^3$. The self-adjointness of the Laplacian is normally required in order to be able to define the propagator of a quantum field theory. Each boundary condition defines a particular propagator. \\
\indent Lubensky and Rubin \cite{17,18} studied a model of ferromagnetically coupled classical spins on a semi-infinite lattice. Using a mean-field approach they provided a qualitatively correct understanding of the different phases undergone by the system which are: the ordinary, extraordinary, surface and special transitions. The phenomenological theory of scaling \cite{22,24} was generalized to surfaces, and it implied relations between bulk critical exponents and the additional surface critical exponents, needed to describe the singular behaviour of surface related properties. However, renormalization of the model is necessary when wants to go beyond the mean-field approximation.\\
Diehl and Dietrich \cite{10,12} studied the critical behavior of the semi-infinite system using renormalization group methods.
They considered the ordinary \cite{10} and special transitions \cite{12} which correspond respectively to the Dirichlet and Robin boundary conditions and found that in addition to the usual two bulk counter-terms, an additional surface counter-term is needed to make the two-point function finite in the case of the Dirichlet boundary condition. For the Robin boundary condition, two surface counter-terms are needed. The calculations were performed to two-loop order using dimensional regularization, and the surface counter-terms were obtained by inserting the operators $\lim_{z\rightarrow 0}\partial_z\phi(z,x)$ where 
$x\in \mathbb{R}^3$ in the case of Dirichlet b.c., and $\phi(0,x)$, 
$\phi^2(0,x)$ in the case of the Robin b.c., 
where $\phi$ is the considered scalar field. \\
\indent The semi-infinite model was also adressed by Symanzik in his study of the Schrödinger representation for renormalizable quantum fields \cite{1} in which he discusses the renormalization of surface operators in a different context, but also finds that surface counter-terms are required to make the two-point function finite.\\ 
\indent In \cite{15}, Albuquerque calculated the one-loop two-point function using a cut-off regularization in the case of the Robin boundary condition. In addition to the usual mass counter-term, two additional counter-terms are needed to make the (non-amputated) tadpole finite. They depend on whether the external points are on the surface or not. If none of them is on the surface, then only one surface counter-term is required, and it diverges linearly in the cutoff. However, if at least one of the external points lies on the surface, then in addition to the linearly diverging counter-term, an extra surface counter-term is needed, and it diverges logarithmically with the cutoff. These findings suggest that the renormalization of the amputated and unamputated diagrams is different.\\
\indent In this paper, we give a rigorous proof of the renormalizability of the $\phi_4^4$ massive semi-infinite model using the renormalization group flow equations. The paper is organized as follows: In Section 2 we present the semi-infinite model scalar field theory with all the possible boundary conditions and their associated propagators. We also present the properties of the flowing propagator and the associated heat kernel together with the considered action and the system of perturbative flow equations satisfied by the connected amputated Schwinger distributions (CAS). Section 3 will be devoted to prove some regularity properties of the support of the gaussian measure associated to the regularized propagator. To establish bounds on the CAS, which are distributions, they have to be folded first with test functions. In Section 4 a suitable class of test functions is introduced, together with tree structures that will be used in the bounds on the CAS to be derived. In Section 5 we state the boundary and the renormalization conditions used to integrate the flow equations of the irrelevant and relevant terms respectively. Section 6 is the central one of this paper. We state and prove inductive bounds on the amputated Schwinger distributions folded with the introduced test functions which, being uniform in the cutoff, directly lead to renormalizability. 
\section{The Action and the flow equations}
\subsection{The half space and the possible boundary conditions}
We consider the half space or what we call also the semi-infinite space $\mathbb{R}^+\times\mathbb{R}^3$ where $\mathbb{R}^+:=\left[0,\infty\right)$ and $\mathbb{R}^{+*}:=\left(0,\infty\right)$.\\
Let $\mathcal{C}^{\infty}_o(\mathbb{R}^+\times\mathbb{R}^3)$ be the space of compactly supported smooth functions defined on the considered half space. For $k\in \mathbb{N}^*$ we consider the Hilbert Sobolev spaces $H^{k}:=W^{k,2}$. We denote by $H^{1}_0\left(\mathbb{R}^+\right)$ the set of functions in $H^{1}\left(\mathbb{R}^+\right)$ that vanish at the boundary $0$. The Laplacian $\Delta$ defined on the Hilbert space $L^2\left(\mathbb{R}^+\times\mathbb{R}^3\right)$ has the following self-adjoint extensions:
\begin{itemize}
    \item The Dirichlet Laplacian $\Delta_D$ defined by 
    $$\forall u \in \mathcal{C}^{\infty}_o(\mathbb{R}^+\times\mathbb{R}^3),\qquad \Delta_Du=\Delta u$$
with the following domain
\begin{eqnarray*}
\fl D\left(\Delta_D\right):=\left\{u\in L^2\left(\mathbb{R}^+\times\mathbb{R}^3\right)\left.\right|u(z,\cdot)
\in H^2(\mathbb{R}^3)~\forall z\geq 0;\right.\\\left.
u(\cdot,x)\in H^{1}_0\left(\mathbb{R}^+\right)\cap H^2\left(\mathbb{R}^+\right)~\forall x \in \mathbb{R}^3\right\}~.
\end{eqnarray*}
    \item The Neumann Laplacian $\Delta_N$ defined by 
    $$\forall u \in \mathcal{C}^{\infty}_o(\mathbb{R}^+\times\mathbb{R}^3),\qquad \Delta_Nu=\Delta u$$
with the following domain
\begin{eqnarray*}
\fl D\left(\Delta_N\right):=\left\{u\in L^2\left(\mathbb{R}^+\times\mathbb{R}^3\right)\left.\right|u(z,\cdot)
\in H^2(\mathbb{R}^3)~\forall z\geq 0;\right.\\
\left.u(\cdot,x)\in H^2\left(\mathbb{R}^+\right),
\partial_zu(z,x)|_{z=0}=0~~\forall x \in \mathbb{R}^3\right\}~.
\end{eqnarray*}
    \item The Robin Laplacian $\Delta_R$ defined by 
    $$\forall u \in \mathcal{C}^{\infty}_o(\mathbb{R}^+\times\mathbb{R}^3),\qquad \Delta_Ru=\Delta u$$
with the domain
\begin{eqnarray*}
\fl D\left(\Delta_R\right):=\left\{u\in L^2\left(\mathbb{R}^+\times\mathbb{R}^3\right)\left.\right|u(z,\cdot)
\in H^2(\mathbb{R}^3)~\forall z\geq 0;u(\cdot,x)
\in H^2\left(\mathbb{R}^+\right),\right.\\
\left.\partial_zu(z,x)|_{z=0}=cu(0,x)~~\forall x \in \mathbb{R}^3\right\},
~~c>0~.
\end{eqnarray*}
Each self-adjoint extension corresponds to a possible boundary condition. The massive propagators associated to these boundary conditions are defined from functional calculus by \begin{equation}\label{bla}
    C_{\bullet}\left((z,x);(z',x')\right)
=\int_0^{\infty}d\lambda~e^{-\lambda \left(-\Delta_{\bullet}+m^2\right)}
\left((z,x);(z',x')\right),
\end{equation}
where $\bullet\in \left\{D,R,N\right\}$ for respectively Dirichlet, Robin and Neumann boundary conditions. (\ref{bla}) can be written in terms of the heat kernels as 
\begin{eqnarray*}
    \fl \indent C_D\left((z,x);(z',x')\right)
&=\int_0^{\infty}d \lambda\ e^{-\lambda m^2}p_{B}\left(\lambda;x,x'\right)\frac{1}{\sqrt{2\pi \lambda}}\left(e^{-\frac{(z-z')^2}{2\lambda}}-e^{-\frac{(z+z')^2}{2\lambda}}\right),\nonumber\\
    \fl \indent C_N\left((z,x);(z',x')\right)
&=\int_0^{\infty}d \lambda\ e^{-\lambda m^2}p_{B}\left(\lambda;x,x'\right)\frac{1}{\sqrt{2\pi \lambda}}\left(e^{-\frac{(z-z')^2}{2\lambda}}+e^{-\frac{(z+z')^2}{2\lambda}}\right),\nonumber\\
    \fl \indent C_R\left((z,x);(z',x')\right)
&=\int_0^{\infty}d \lambda\ e^{-\lambda m^2}p_{B}\left(\lambda;x,x'\right)p_R\left(\lambda;z,z'\right),\nonumber
\end{eqnarray*}
where 
\begin{equation}\label{pb}
p_B(\lambda;x,x'):=\frac{1}{(2\pi \lambda)^\frac{3}{2}}e^{-\frac{(x-x')^2}{2\lambda}}
\end{equation}
and
\begin{equation}\label{2prime}
p_R(\lambda;z,z'):=
p_N(\lambda;z,z')-2\int_0^{\infty}\frac{dw}{\sqrt{2\pi \lambda}} 
\ e^{-w}\, e^{-\frac{\left(z+z'+\frac{w}{c}\right)^2}{2\lambda}}~.
\end{equation}
Here $p_N$ denotes the one-dimensional Neumann heat kernel
\begin{equation}\label{pneumann}
p_N(\lambda;z,z'):=\frac{1}{\sqrt{2\pi \lambda}}\left(\frac{e^{-\frac{(z-z')^2}{2\lambda}}+e^{-\frac{(z+z')^2}{2\lambda}}}{2}\right).
\end{equation}
In the $pz$-representation, which corresponds to taking the partial Fourier transformation with respect to the variable $x\in\mathbb{R}^3$, the Dirichlet, Neumann and Robin propagators simply read
\begin{eqnarray}
\fl~~~~C_D(p;z,z')=\frac{1}{2\sqrt{p^2+m^2}}\left[e^{-\sqrt{p^2+m^2}\left|z-z'\right|}-e^{-\sqrt{p^2+m^2}\left|z+z'\right|}\right]
\ ,
\end{eqnarray}
\begin{eqnarray}
\fl~~~~C_N(p;z,z')=\frac{1}{2\sqrt{p^2+m^2}}\left[e^{-\sqrt{p^2+m^2}\left|z-z'\right|}+e^{-\sqrt{p^2+m^2}\left|z+z'\right|}\right]\ ,
\end{eqnarray}
\begin{eqnarray}\label{Robpro}
\fl~~~~C_R(p;z,z')=\frac{1}{2\sqrt{p^2+m^2}}\left[e^{-\sqrt{p^2+m^2}\left|z-z'\right|}+\frac{\sqrt{p^2+m^2}-c}{\sqrt{p^2+m^2}+c}
\ e^{-\sqrt{p^2+m^2}\left|z+z'\right|}\right]\ .
\end{eqnarray}
The Dirichlet boundary condition corresponds to $c\rightarrow \infty$ and the Neumann boundary condition to $c=0$. We study the Robin boundary condition since the other two conditions are limit cases of the former. One can easily verify that we have 
\begin{eqnarray}
 \fl C_D(p;0,z')=C_D(p;z,0)=0\ ,~~\lim_{z\rightarrow 0}~\partial_zC_N(p;z,z')
=\lim_{z'\rightarrow 0}~\partial_{z'}C_N(p;z,z')=0\ ,\\
 \fl ~~~~\lim_{z\rightarrow 0}~\partial_zC_R(p;z,z')=c~C_R(p;0,z')\ ,
\ ~\lim_{z'\rightarrow 0}~\partial_{z'}C_R(p;z,z')=c~C_R(p;z,0),\label{rob}
\end{eqnarray}
where we used that the associated 
heat kernels verify respectively 
the Dirichlet, Neumann and Robin boundary conditions.
\end{itemize}
\subsection{$\phi_4^4$ scalar field theory on the semi-infinite space}
We will analyze the perturbative renormalizability of the semi-infinite 
$\phi_4^4$ theory with Robin boundary conditions. It will be proved by 
analyzing the generating functional $L^{\Lambda,\Lambda_0}$ of connected 
amputated Schwinger distributions (CAS). The upper indices 
$\Lambda_0$ and $\Lambda$ enter through the regularized propagator. 
We choose the following regularization 
\begin{equation}
    C^{\Lambda,\Lambda_0}_R(p;z,z')=
\int_{\frac{1}{\Lambda_0^2}}^{\frac{1}{\Lambda^2}}
d\lambda\  e^{-\lambda(p^2+m^2)}p_R(\lambda;z,z')\ .
\end{equation}
Clearly, $C^{\Lambda,\Lambda_0}_R$ verifies the b.c. (\ref{rob}). For $\Lambda\rightarrow 0$ and $\Lambda_0\rightarrow \infty$ we recover the unregularized 
propagator (\ref{Robpro}).
We denote  
\begin{equation}\label{p3}
\dot{C}_{R}^{\Lambda}(p;z,z')=\frac{\partial}{\partial\Lambda}{C}_{R}^{\Lambda,\Lambda_0}(p;z,z')=\dot{C}^{\Lambda}(p)\ p_R(\frac{1}{\Lambda^2};z,z')~,
\end{equation}
where $\dot{C}^{\Lambda}(p)=-\frac{2}{\Lambda^3}e^{-\frac{p^2+m^2}{\Lambda^2}}$.\\
The starting point in writing an euclidean quantum field theory is to define the associated path integral given by the corresponding gaussian measure. We assume $0\leq \Lambda\leq \Lambda_0<\infty$ so that the flow parameter $\Lambda$ takes the role of an infrared cutoff, whereas $\Lambda_0$ is a UV cutoff. The full propagator is recovered for $\Lambda=0$ and $\Lambda_0\rightarrow\infty$. For finite $\Lambda_0$ and in finite volume the positivity and the regularity properties of $C^{\Lambda,\Lambda_0}_R$ permit to define the theory rigorously from the functional integral 
\begin{eqnarray}\label{fl}
 e^{-\frac{1}{\hbar}\left(L^{\Lambda,\Lambda_0}(\phi)+I^{\Lambda,\Lambda_0}\right)}:&=
\int d\mu_{\Lambda,\Lambda_0,R}(\Phi)\  
e^{-\frac{1}{\hbar}L^{\Lambda_0,\Lambda_0}(\Phi+\phi)}\ ,\\
    L^{\Lambda,\Lambda_0}(0)&=0\ ,\nonumber
\end{eqnarray}
where the factors of $\hbar$ have been introduced to allow for a consistent loop expansion in the sequel. Here, $d\mu_{\Lambda,\Lambda_0,R}$ denotes the Gaussian measure with covariance $\hbar C_R^{\Lambda,\Lambda_0}$. The test functions $\phi$ and
 $\Phi$  are supposed to be in the support of the Gaussian measure $d\mu_{\Lambda,\Lambda_0,R}$, which in particular implies that they are in $\mathcal{C}^{\infty}\left(\mathbb{R}^+\times\mathbb{R}^3\right)$ as we will prove in Section 3. The normalization factor $e^{-\frac{1}{\hbar}I^{\Lambda,\Lambda_0}}$ is due to vacuum contributions. It diverges in infinite volume so that we can take the infinite volume limit only when it has been eliminated \cite{16}. We do not make the finite volume explicit here since it plays no role in the sequel. \\
\indent The functional $L^{\Lambda_0,\Lambda_0}(\phi)$ is the bare interaction of a renormalizable theory including counter-terms, viewed as a formal power series in $\hbar$. For shortness we will pose in the following, with $z\in \mathbb{R}^+$, $p\in \mathbb{R}^3$ and $x\in\mathbb{R}^3$,
$$\int_{z}:=\int_0^{\infty}dz~,~~~~\int_p:=
\int_{\mathbb{R}^3}\frac{d^3p}{(2\pi)^3}~,~~~~\int_S:=\int_{\mathbb{R}^3}d^3x~,~~~~\int_V:=
\int_0^{\infty}dz\int_{\mathbb{R}^3}d^3x\ .$$
Since translation invariance is broken in the $z$-direction (the semi-line), all counter-terms may be $z$-dependent. In general, the constraints on the bare action result from the symmetry properties of the theory which are imposed, on its field content and on the form of the propagator. It is therefore natural to consider the general bare interaction
\begin{eqnarray}\label{bare}
    \fl L^{\Lambda_0,\Lambda_0}(\phi)=\frac{\lambda}{4!}\int_{V}\phi^4(z,x)
+\frac{1}{2}\int_V \biggl(a^{\Lambda_0}(z)\phi^2(z,x)
-b^{\Lambda_0}(z)\phi(z,x)\Delta_x \phi(z,x)\\
-d^{\Lambda_0}(z)\phi(z,x)\partial_z^2\phi(z,x) 
+s^{\Lambda_0}(z)\phi(z,x)(\partial_z\phi)(z,x)
+\frac{2}{4!}c^{\Lambda_0}(z)\phi^4(z,x)\biggr)
\ .\nonumber
\end{eqnarray}
Here we supposed the theory to be symmetric under $\phi\rightarrow-\phi\,$, 
and we included only relevant terms with respect to (\ref{fl}) in the sense of the renormalization group.
The functions $a^{\Lambda_0}(z),~b^{\Lambda_0}(z),~c^{\Lambda_0}(z),~d^{\Lambda_0}(z)$ and $s^{\Lambda_0}(z)$ are supposed to be smooth. \\
The flow equation (FE) is obtained from (\ref{fl}) on differentiating w.r.t. $\Lambda$. For the steps of the computation, we refer the reader to \cite{16} and \cite{18}. It is a differential equation for the functional $L^{\Lambda,\Lambda_0}$:\\
\begin{equation}\label{floEq}
    \fl~~~~~\partial_{\Lambda}(L^{\Lambda,\Lambda_0}+I^{\Lambda,\Lambda_0})=\frac{\hbar}{2}\langle \frac{\delta}{\delta \phi},\dot{C}_R^{\Lambda}\,\frac{\delta}{\delta \phi}\rangle L^{\Lambda,\Lambda_0}-\frac{1}{2}\langle\frac{\delta}{\delta \phi}L^{\Lambda,\Lambda_0},\dot{C}_R^{\Lambda}\,\frac{\delta}{\delta \phi}L^{\Lambda,\Lambda_0}\rangle\ .
\end{equation}
By $\langle,\rangle$ we denote the standard inner product in 
$L^2(\mathbb{R}^+\times\mathbb{R}^3)$.\\
We may expand the functional 
$L^{\Lambda,\Lambda_0}(\phi)$ in a formal power series w.r.t. $\hbar$,
\begin{equation*}
    L^{\Lambda,\Lambda_0}(\phi)=\sum_{l=0}^{\infty}\hbar^l L^{\Lambda,\Lambda_0}_l(\phi)\ .
\end{equation*}
Corresponding expansions for $a^{\Lambda_0}(z),~b^{\Lambda_0}(z)$..., are $a^{\Lambda_0}(z)=\sum_{l=0}^{\infty}\hbar^l a^{\Lambda_0}_l(z) $, etc. From $L^{\Lambda,\Lambda_0}_l(\phi)$ we obtain the CAS distributions of loop order $l$ as 
\begin{equation*}
    \mathcal{L}^{\Lambda,\Lambda_0}_{l,n}\left((z_1,x_1),\cdots,(z_n,x_n)\right)
:=\delta_{\phi(z_1,x_1)}\cdots \delta_{\phi(z_n,x_n)}L^{\Lambda,\Lambda_0}_l|_{\phi=0}~,
\end{equation*}
where we used the notation $\delta_{\phi(z,x)}=\delta/\delta \phi(z,x)\,$. \\
Since translation invariance in the $x$-directions is preserved, we will use in all what follows a mixed representation, where the Fourier transform to $p$-space is performed only with respect to $x \in \mathbb{R}^3$. In this representation, we set 
\begin{eqnarray}\label{13*}
\fl\mathcal{L}_{l,n}^{\Lambda,\Lambda_0}\left(z_1;\vec{p}_n;\Phi_{n}\right)=\int_0^{\infty}dz_2\cdots dz_n~\mathcal{L}_{l,n}^{\Lambda,\Lambda_0}
\left((z_1,p_1),\cdots,(z_n,p_n)\right)\phi_2(z_2)\cdots\phi_n(z_n)\ .
\end{eqnarray}
Here we denote 
 $$\Phi_{n}(z_2,\cdots,z_n):=\prod_{i=2}^n\phi_i(z_i),~~~\vec{p}_n:=\left(p_1,\cdots,p_n\right)\in\mathbb{R}^{3n},~~\left\|\vec{p}_n\right\|:
=\sup_{1\leq i\leq n}|p_i|$$
and 
$$\delta^{(3)}(p_1+\cdots+p_n)
\mathcal{L}_{l,n}^{\Lambda,\Lambda_0}\left((z_1,p_1),\cdots,(z_n,p_n)\right)
=(2\pi)^{3(n-1)}\frac{\delta^n}{\delta \phi(z_1,p_1)
\cdots\delta \phi(z_n,p_n)}L^{\Lambda,\Lambda_0}_l(\phi)|_{\phi\equiv0}~.$$
The $\delta^{(3)}(p_1+\cdots+p_n)$ appears because of the translation
invariance in the $x$ directions.
The FE for the CAS distributions derived from (\ref{floEq}) are \cite{9}-\cite{18}
\begin{eqnarray}\label{FEL}
    \fl\partial_{\Lambda}\partial^w\mathcal{L}_{l,n}^{\Lambda,\Lambda_0}\left((z_1,p_1),\cdots,(z_n,p_n)\right)\\
\fl=\frac{1}{2}\int_z \int_{z'}\int_k 
\partial^w\mathcal{L}_{l-1,n+2}^{\Lambda,\Lambda_0}
\left((z_1,p_1),\cdots,(z_n,p_n),(z,k),(z',-k)\right)
\dot{C}^{\Lambda}_R(k;z,z')\nonumber\\
   \fl \indent -\frac{1}{2}\int_z \int_{z'} \sum_{l_1,l_2}'\sum_{n_1,n_2}'\sum_{w_i}c_{w_i} \left[\partial^{w_1}\mathcal{L}_{l_1,n_1+1}^{\Lambda,\Lambda_0}((z_1,p_1),\cdots,(z_{n_1}p_{n_1}),(z,p))\partial^{w_3}\dot{C}^{\Lambda}_R(p;z,z')\right.\nonumber\\\left.\times\ 
\partial^{w_2}\mathcal{L}_{l_2,n_2+1}^{\Lambda,\Lambda_0}((z',-p),\cdots,(z_{n},p_{n}))\right]_{rsym},\nonumber\\
    p=-p_1-\cdots-p_{n_1}=p_{n_1+1}+\cdots+p_n\ .
\nonumber
\end{eqnarray}
Here we wrote (\ref{FEL}) directly in a form where a number $|w|$ of momentum derivatives, characterized by a multi-index, act on both sides and we used the shorthand notation
\begin{eqnarray}
   \fl \partial^{w}:=\prod^{n}_{i=1}\prod_{\mu=0}^3\left(\frac{\partial}{\partial p_{i,\mu}}\right)^{w_{i,\mu}}~ \mathrm{with}~ w=(w_{1,0},\cdots,w_{n,3}),~
|w|=\sum{w_{i,\mu}},~w_{i,\mu}\in \mathbb{N}^*\ .
\end{eqnarray}{}
The symbol "rsym" means summation over those permutations of the momenta $(z_1,p_1)$, $\cdots$ ,$(z_n,p_n)$, which do not leave invariant the (unordered) subsets $\left((z_1,p_1),\cdots,(z_{n_1},p_{n_1})\right)$ and $\left((z_{n_1+1},p_{n_1+1})\right.$,$\left.\cdots,(z_n,p_n)\right)$, and therefore, produce mutually different pairs of (unordered) image subsets, and the primes restrict the summations to $n_1+n_2=n$, $l_1+l_2=l$, $w_1+w_2+w_3=w$, respectively. The combinatorial factor $c_{\left \{w_i\right \}}=w!(w_1!w_2!w_3!)^{-1}$ stems from Leibniz's rule. In the loop order $l=0$, the first term on the RHS is absent.\\
 \section{Regularity of the support of the regularized gaussian measure}
The bare interaction $L^{\Lambda_0,\Lambda_0}$ is composed of powers of the field $\phi$ and of its derivatives. It can not be given any mathematical meaning if the field $\phi$ is not sufficiently regular, e.g. in $\mathcal{C}^2\left(\mathbb{R}^+\times\mathbb{R}^3 \right)$. In this section, we prove that the field $\phi$ belongs to $\mathcal{C}^{\infty}(\mathbb{R}^+\times\mathbb{R}^3)\,$. This result
is due to the regularity properties of the support of the gaussian measure $\mu_{\Lambda,\Lambda_0,R}\,$.\\
Since the theory which we study is massive, we consider the UV-regularized propagator without infrared cut-off
\begin{equation*}
    C^{\Lambda_0}_R(p;x,y)=\int_{\frac{1}{\Lambda_0^2}}^{\infty}d\lambda~e^{-\lambda(p^2+m^2)}p_R(\lambda;x,y)
\end{equation*}
assuming that $\Lambda_0\geq 1$, $p\in \mathbb{R}^3$ and $x,~y\in \mathbb{R}^+$.\\
The same arguments work for the Gaussian measure associated to the propagator $C_R^{\Lambda,\Lambda_0}(p;x,y)$. We prove the following 
\begin{proposition}\label{prop1}
Let $\mu_{\Lambda_0,R}$ be the Gaussian measure associated to the propagator $C_R^{\Lambda_0}$. The support of $\mu_{\Lambda_0,R}$ satisfies by 
\begin{equation*}
    \mathrm{supp}\mu_{\Lambda_0,R}\subset \bigcap_{n\geq 1} \left\{\left(-\Delta_R+m^2\right)^{-n}{L}^{2}\left(\mathbb{R}^+\times\mathbb{R}^3\right)\right\}.
\end{equation*}
\end{proposition}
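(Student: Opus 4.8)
The plan is to show that a Gaussian measure whose covariance is the operator $C_R^{\Lambda_0} = (-\Delta_R + m^2)^{-1}$ restricted to the UV-regularized regime is concentrated on the intersection of the ranges of all powers $(-\Delta_R+m^2)^{-n}$. The natural framework is the abstract theory of Gaussian measures on Hilbert spaces: a centered Gaussian measure $\mu$ with covariance operator $Q$ on $L^2$ lives on the Cameron–Martin-type scale of spaces associated with $Q$, and the key quantitative input is that $\mu$ charges the range of an operator $A$ provided $A^{-1}Q^{1/2}$ is Hilbert–Schmidt (equivalently, provided $\mathbb{E}_\mu\|A^{-1}\Phi\|^2 = \operatorname{tr}(A^{-1}Q\,A^{-1*}) < \infty$). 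Here I would take $A = (-\Delta_R+m^2)^{n}$ and check that the relevant trace is finite for every $n$, using that the regularized covariance decays like $e^{-(p^2+m^2)/\Lambda_0^2}$ in the transverse momentum and is smoothing in $z$ as well.

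The steps, in order, are as follows. First I would fix $n\geq 1$ and reduce the claim $\operatorname{supp}\mu_{\Lambda_0,R}\subset (-\Delta_R+m^2)^{-n}L^2$ to the statement that the second moment $\int \|(-\Delta_R+m^2)^{n}\Phi\|_{L^2}^2\, d\mu_{\Lambda_0,R}(\Phi)$ is finite, via the standard fact that for a centered Gaussian this second moment equals the trace of the operator $(-\Delta_R+m^2)^{n}\,C_R^{\Lambda_0}\,(-\Delta_R+m^2)^{n}$; finiteness of the trace forces $\Phi$ into the domain of $(-\Delta_R+m^2)^n$ $\mu$-almost surely, hence into the range of $(-\Delta_R+m^2)^{-n}$. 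Second, I would diagonalize using the spectral representation supplied by the $pz$-representation of Section 2: in the transverse directions $-\Delta_R+m^2$ acts as multiplication by $p^2+m^2$ after Fourier transform in $x$, and in the $z$-direction the Robin Laplacian has an explicit (generalized) spectral resolution whose eigenfunctions are the Robin modes implicit in the propagator formula \eqref{Robpro}. Third, I would compute the trace as a momentum/spectral integral of the form $\int_p \int d\nu(\mu_z)\,(p^2+m^2+\mu_z)^{2n}\,C_R^{\Lambda_0}$, insert the explicit decay $C_R^{\Lambda_0}\sim \int_{1/\Lambda_0^2}^\infty d\lambda\, e^{-\lambda(p^2+m^2+\mu_z)}$, and observe that because of the lower cutoff $1/\Lambda_0^2$ on $\lambda$ the exponential $e^{-(p^2+m^2+\mu_z)/\Lambda_0^2}$ dominates any polynomial factor $(p^2+m^2+\mu_z)^{2n}$, so the integral converges for every $n$.

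I expect the main obstacle to be the $z$-direction rather than the transverse directions. Transversally the argument is essentially the classical one for the full-space massive propagator, where the Gaussian integrability of $e^{-(p^2+m^2)/\Lambda_0^2}$ against any polynomial is immediate. In the $z$-direction, however, $-\Delta_R$ is a self-adjoint operator on the half-line with a Robin boundary condition, and its spectral measure is not simply Lebesgue measure on a half-line of frequencies; one must handle both the absolutely continuous part and, for $c>0$, the possible presence of a boundary (bound-state) mode, and verify that the trace class estimate survives the non-trivial reflection coefficient $\tfrac{\sqrt{p^2+m^2}-c}{\sqrt{p^2+m^2}+c}$ appearing in \eqref{Robpro}. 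The clean way around this is to keep the heat-kernel representation \eqref{2prime} intact and bound traces directly through the kernel $p_R(\lambda;z,z)$ on the diagonal, using that $p_R(\lambda;z,z)$ is integrable in $z$ uniformly for $\lambda \geq 1/\Lambda_0^2$ and that each factor of $-\Delta_R$ brings down at most a polynomial-in-$\lambda^{-1}$ weight that is again controlled by the lower cutoff; summing (integrating) over $n$ factors then gives a finite trace for each fixed $n$, and taking the intersection over all $n$ yields the stated inclusion.
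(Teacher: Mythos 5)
There is a genuine gap, and it sits exactly at the step you flagged as the ``key quantitative input''. Your criterion requires $\mathrm{tr}\bigl(P^{n}\,C_R^{\Lambda_0}\,P^{n}\bigr)<\infty$ with $P=-\Delta_R+m^2$; by functional calculus $C_R^{\Lambda_0}=\int_{1/\Lambda_0^2}^{\infty}d\lambda\,e^{-\lambda P}=P^{-1}e^{-P/\Lambda_0^2}$, so the operator in question is $P^{2n-1}e^{-P/\Lambda_0^2}$, and its trace on $L^2(\mathbb{R}^+\times\mathbb{R}^3)$ is infinite for two independent \emph{infrared} reasons that the UV cutoff cannot touch. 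Transversally, the operator is translation invariant in $x$, so its trace carries a factor $\mathrm{Vol}(\mathbb{R}^3)$ (a $\delta^{(3)}(0)$): the finite integral $\int_p (p^2+m^2)^{2n-1}e^{-(p^2+m^2)/\Lambda_0^2}$ is only a trace per unit transverse volume. In the $z$-direction, your explicit assertion that $p_R(\lambda;z,z)$ is integrable in $z$ uniformly for $\lambda\geq\Lambda_0^{-2}$ is false: from (\ref{2prime}) and (\ref{pneumann}), as $z\to\infty$ the image term and the Robin correction vanish and $p_R(\lambda;z,z)\to\frac{1}{2\sqrt{2\pi\lambda}}>0$, so $\int_0^{\infty}dz\,p_R(\lambda;z,z)=\infty$. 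Put differently, the cutoff $e^{-(p^2+m^2)/\Lambda_0^2}$ makes sample fields smooth but does nothing to their decay at spatial infinity; a Gaussian field stationary in $x$ is almost surely not square-integrable over the half-space, so $\int\|P^n\Phi\|_{L^2}^2\,d\mu_{\Lambda_0,R}=\infty$ and the second-moment criterion yields nothing. Two smaller points: even granting a finite trace, almost-sure membership in $D(P^n)$ gives a set of full measure, and upgrading that to a statement about $\mathrm{supp}\,\mu$ needs an extra argument (which Minlos-type theorems supply); and for the paper's convention $c>0$ there is in fact no Robin bound state (that requires $c<0$), so the spectral subtlety you worried about in the $z$-direction is absent --- the genuine obstacle is the volume divergence above, which you did not address.

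This is precisely why the paper does not argue through a trace. It applies a corollary of Minlos' theorem on the nuclear space $\mathcal{S}(\mathbb{R}^+\times\mathbb{R}^3)$, which needs only (i) \emph{continuity} of the covariance as a bilinear form on the weighted space $H_{-n}$ --- an operator-norm type bound obtained via Cauchy--Schwarz from the off-diagonal decay $e^{-\sqrt{p^2+m^2}|x-y|}$ of the kernel of $P^{2n}C_R^{\Lambda_0}$, never integrability of its diagonal --- and (ii) Hilbert--Schmidtness of one fixed map $T=P^{-2}$, a square-integrability condition on the kernel which is strictly weaker than trace class. The conclusion $\mathrm{supp}\,\mu_{\Lambda_0,R}\subset (T^{-1})^{*}H_{-n}'=P^{2-n}L^2$ then holds for every $n$, giving the stated intersection. (Even these kernel estimates are to be read under the finite-volume convention the paper makes implicit in Section 2.) To salvage your route you would have to either work in genuinely finite volume --- cutting off both $x$ and the half-line, where $P^{2n-1}e^{-P/\Lambda_0^2}$ really is trace class --- or insert spatial weights so that the relevant trace converges; in either case you are led back to a weighted-Hilbert-space scale of exactly the kind on which the paper's Minlos argument already runs.
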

\noindent Before stating the proof of Proposition \ref{prop1}, we recall the following corollary of the Minlos theorem \cite{3,4}.
\begin{corollary}\label{coco}
Given a nuclear space $E$, $\mu$ a measure on $E'$, and $C$ its characteristic function, we introduce a continuous inner product $\left(\cdot,\cdot\right)_0$ on $E$ and let $H_0$ be the completion of $E$ with respect to $\left(\cdot,\cdot\right)_0$. Suppose that $C$ is continuous on $H_0$. Let $T$ be a Hilbert-Schmidt operator on $H_0$ satisfying:
\begin{itemize}
    \item[(a)] $T$ is one to one (injective map).
    \item[(b)] $E\subset \mathrm{Im}T$ and $T^{-1}(E)$ is dense in $H_0\,$.
    \item[(c)] The map $T^{-1}:E\rightarrow H_0$ is continuous.
\end{itemize}
Then the support of $\mu$ is on $(T^{-1})^*H_0'\subset E'$.
The notations $(T^{-1})^*$ and $H_0'$ are used for the "adjoint" and the "dual space" in the pairing between $E$ and $E'$. 
\end{corollary}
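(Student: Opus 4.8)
The plan is to recast the conclusion as a concentration statement, $\mu\bigl((T^{-1})^*H_0'\bigr)=1$, by building a Gelfand chain of Hilbert spaces adapted to $T$ and showing that a suitable Hilbertian norm of the generic field is $\mu$-almost everywhere finite. First I would use the injectivity (a) to equip $\mathrm{Im}\,T$ with the inner product that makes $T$ an isometry, namely $\|y\|_+:=\|T^{-1}y\|_0$, and let $H_+$ be the corresponding completion. By construction $T:H_0\to H_+$ is unitary, so $H_+$ is separable and the canonical embedding $\iota:H_+\hookrightarrow H_0$ factors as $\iota=T\circ(\mathrm{unitary})$; since $T$ is Hilbert--Schmidt on $H_0$, $\iota$ is Hilbert--Schmidt with $\|\iota\|_{\mathrm{HS}}=\|T\|_{\mathrm{HS}}<\infty$. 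Dualizing the chain $E\hookrightarrow H_+\hookrightarrow H_0\cong H_0'$ against the $E$--$E'$ pairing identifies $H_-:=H_+'$ with $(T^{-1})^*H_0'\subset E'$, hypothesis (c) guaranteeing that $(T^{-1})^*:H_0'\to E'$ is well defined and that the chain closes inside $E'$. Thus it suffices to prove $\mu(H_-)=1$.

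The heart of the argument is the estimate $\int_{E'}N(\phi)\,d\mu(\phi)<\infty$, where $N(\phi):=\sum_k|\langle e_k,\phi\rangle|^2\in[0,\infty]$ and $\{e_k\}_{k\ge1}$ is an orthonormal basis of $H_+$ chosen inside $E$; note $N(\phi)<\infty$ is exactly the statement $\phi\in H_-$. Such a basis exists because hypothesis (b), ``$T^{-1}(E)$ dense in $H_0$'', is equivalent under the unitary $T$ to ``$E$ dense in $H_+$'', and Gram--Schmidt applied to a countable dense subset of the subspace $E\subset H_+$ yields orthonormal vectors that are finite linear combinations of elements of $E$, hence lie in $E$. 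Integrating and exchanging sum and integral,
\begin{equation*}
\int_{E'}N(\phi)\,d\mu(\phi)=\sum_k\int_{E'}|\langle e_k,\phi\rangle|^2\,d\mu(\phi).
\end{equation*}
Each summand is the second moment of the random variable $\phi\mapsto\langle e_k,\phi\rangle$, controlled by $C$: in the Gaussian case relevant here $C(\xi)=\exp(-\frac{1}{2}s(\xi,\xi))$ with $s(\xi,\xi)\le\kappa\,\|\xi\|_0^2$ (continuity of $C$ on $H_0$), whence $\int|\langle e_k,\phi\rangle|^2\,d\mu=s(e_k,e_k)\le\kappa\,\|e_k\|_0^2$. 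Summing,
\begin{equation*}
\int_{E'}N(\phi)\,d\mu(\phi)\le\kappa\sum_k\|e_k\|_0^2=\kappa\,\|\iota\|_{\mathrm{HS}}^2=\kappa\,\|T\|_{\mathrm{HS}}^2<\infty,
\end{equation*}
the middle equality being the definition of the Hilbert--Schmidt norm of $\iota$ on the basis $\{e_k\}$.

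From finiteness of this integral, $N(\phi)<\infty$ for $\mu$-almost every $\phi$, so $\mu(H_-)=1$ and therefore $\mathrm{supp}\,\mu\subset H_-=(T^{-1})^*H_0'$, which is the claim. The role of each hypothesis is then transparent: (a) lets us invert $T$ and define $H_+$; the Hilbert--Schmidt property supplies the finite constant $\|T\|_{\mathrm{HS}}$; (b) lets us place the basis inside $E$ so that $C$ may be applied to each $e_k$; and (c) makes $(T^{-1})^*$ a continuous map into $E'$, so the target is genuinely a subset of $E'$. I expect the main obstacle to be the rigorous interchange of summation and integration together with the second-moment bound in the general, non-Gaussian formulation: mere continuity of $C$ on $H_0$ does not by itself grant finite second moments, so one must instead work with the truncated quantity $\min(1,|\langle\xi,\phi\rangle|^2)$, bound $\int(1-\mathrm{Re}\,C(\xi))\,d\mu$ over a ball of $H_0$, and pass to the limit by monotone convergence and a Chebyshev argument. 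For the Gaussian measure $d\mu_{\Lambda_0,R}$ to which the corollary is applied this subtlety disappears and the computation above is exact.
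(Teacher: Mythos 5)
The paper offers no proof of this corollary at all --- it states it and refers to Gelfand--Vilenkin \cite{3} and Hida \cite{4} --- so the comparison can only be made against the classical argument in those references, and your reconstruction is essentially that argument: the rigged chain $E\hookrightarrow H_+\hookrightarrow H_0$ with $\iota=T\circ(\text{unitary})$ Hilbert--Schmidt, the identification $H_+'\cong(T^{-1})^*H_0'$ via the density in (b) and the continuity in (c), the choice of an orthonormal basis of $H_+$ inside $E$ by Gram--Schmidt, the bound $\sum_k\|e_k\|_0^2=\|\iota\|_{\mathrm{HS}}^2=\|T\|_{\mathrm{HS}}^2$, and the Chebyshev conclusion. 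Your closing remark is also accurate: for general $\mu$ continuity of $C$ on $H_0$ gives no second moments, and one must run the Minlos-type truncation with $\min(1,|\langle\xi,\phi\rangle|^2)$ and $1-\mathrm{Re}\,C(\xi)$; for the Gaussian $\mu_{\Lambda_0,R}$ to which the paper applies the corollary, continuity of $C$ on $H_0$ forces $s(\xi,\xi)\le\kappa\|\xi\|_0^2$ by scaling, and your exact moment computation suffices.

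Two steps deserve tightening. First, ``$N(\phi)<\infty$ is exactly the statement $\phi\in H_-$'' is an overstatement: for $\phi\in E'$ with square-summable coefficients you only get a candidate $\psi:=\sum_k\phi(e_k)\,\langle\,\cdot\,,e_k\rangle_+\in H_-$, and you must still show $\psi=\phi$ on $E$; the expansion $\xi_N=\sum_{k\le N}\langle\xi,e_k\rangle_+e_k\to\xi$ holds in $H_+$ but not necessarily in the topology of $E$, so $\phi(\xi_N)\to\phi(\xi)$ is not automatic. In the Gaussian case this is repaired by noting that $\phi(\xi)-\phi(\xi_N)$ has variance $\le\kappa\|\xi-\xi_N\|_0^2\to0$, while the partial sums converge pointwise wherever $N(\phi)<\infty$, so $\psi(\xi)=\phi(\xi)$ a.s.\ for each $\xi$ in a countable dense subset of $E$, hence $\mu$-a.e.\ $\phi=\psi$ on $E$. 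Second, $H_-$ is not weak-$*$ closed in $E'$, so $\mu(H_-)=1$ does not control the topological support; the conclusion must be read, as in \cite{3,4} and as the paper itself uses Proposition 1, as the assertion that $(T^{-1})^*H_0'$ carries full (outer) measure. Both points are standard and do not affect the soundness of your overall route.
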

\noindent For a proof of this corollary see \cite{3,4}. 
\begin{proof}
We apply Corollary \ref{coco} to $E=\mathcal{S}(\mathbb{R}^+\times\mathbb{R}^3)\cong\mathcal{S}(\mathbb{R}^+)\otimes \mathcal{S}(\mathbb{R}^3)\,$. 
This is a nuclear space which is a tensor product of two nuclear spaces. See \cite{5} for the proof that $\mathcal{S}(\mathbb{R}^+)$ is a nuclear space. The theorem A.4.1 in \cite{26} implies the existence of the Gaussian measure $\mu_{\Lambda_0,R}$ 
with covariance $C_R^{\Lambda_0}$ with support included in $\mathcal{S}'(\mathbb{R}^+\times\mathbb{R}^3)$. We apply the corollary of Minlos's theorem to the scalar product $\langle f,g \rangle_n:=\langle f , P^{-2n}g\rangle$ where 
$\langle\,,\rangle$ is the usual scalar product in $L^2\left(\mathbb{R}^+\times\mathbb{R}^3\right)$, and $P=-\Delta_R+m^2$. $P^n$ is a unitary map from $L^2(\mathbb{R}^+\times\mathbb{R}^3)$ into $H_{-n}$, the completion of $\mathcal{S}(\mathbb{R}^+\times\mathbb{R}^3)$ with respect to $\langle\,,\rangle_n\,$. We verify first that the regularized covariance is continuous on $H_{-n}$ for any $n \in \mathbb{N}$, that is 
    \begin{equation*}
        \exists C>0~\mathrm{such~that~}\forall f, g \in H_{-n}:~~~~\left |\langle f,C_R^{\Lambda_0}g\rangle\right|\leq C\|f\|_{H_{-n}}\|g\|_{H_{-n}}~.
    \end{equation*}
One can verify that the operators $C^{\Lambda_0}_R$ and $\left(-\Delta_R+m^2\right)^{-n}$ commute. Since $\left(-\Delta_R+m^2\right)^{-n}$ is self-adjoint, we obtain
\begin{equation}\label{amounti}
    \langle f,C_R^{\Lambda_0}g\rangle=\left\langle \left(-\Delta_R+m^2\right)^{-n}f,\left(-\Delta_R+m^2\right)^{2n}C^{\Lambda_0}_R\left(-\Delta_R+m^2\right)^{-n}g\right\rangle.
\end{equation}
By the Cauchy-Schwarz inequality we obtain 
\begin{eqnarray}\label{**}
   \fl \left|\langle f,C_R^{\Lambda_0}g\rangle\right|
\leq \int \frac{d^3 p}{(2\pi)^3}\left\|\left(-\Delta_R+m^2\right)^{-n}f\right\|_{L^2(\mathbb{R}^+)}(p)\nonumber\\
\times\
\left\|\left(-\Delta_R+m^2\right)^{2n}C^{\Lambda_0}_R\left(-\Delta_R+m^2\right)^{-n} g\right\|_{L^2(\mathbb{R}^+)}(p)~,
\end{eqnarray}
where 
\begin{eqnarray}\label{star}
    \fl\left\|\left(-\Delta_R+m^2\right)^{2n}C^{\Lambda_0}_R\left(-\Delta_R+m^2\right)^{-n} g\right\|^2_{L^2(\mathbb{R}^+)}(p)\\\fl\indent =\int_0^{\infty}dx \left | \int_0^{\infty} dy \left(-\Delta_R+m^2\right)_x^{2n}C_R^{\Lambda_0}(p;x,y)\int_0^{\infty} dz\left(-\Delta_R+m^2\right)^{-n}(p;y,z)g(z,p)  \right|^2.\nonumber
\end{eqnarray}
Using again the Cauchy-Schwarz inequality we obtain 
\begin{eqnarray}
    \fl \left | \int_0^{\infty} dy \left(-\Delta_R+m^2\right)_x^{2n}C_R^{\Lambda_0}(p;x,y)\int_0^{\infty} dz\left(-\Delta_R+m^2\right)^{-n}(p;y,z)g(z,p)  \right|^2\\
    \leq \int_0^{\infty}dy \left |\left(-\Delta_R+m^2\right)_x^{2n}C_R^{\Lambda_0}(p;x,y)\right|^2 \left\|\left(-\Delta_R+m^2\right)^{-n}g\right\|^2_{L^2(\mathbb{R}^+)}.\nonumber
\end{eqnarray}
Therefore (\ref{star}) can be bounded by
\begin{equation}\label{leib1}
    \fl \indent \left(\int_0^{\infty} dx \int_0^{\infty}dy \left |\left(-\Delta_R+m^2\right)_x^{2n}C_R^{\Lambda_0}(p;x,y)\right|^2\right) \left\|\left(-\Delta_R+m^2\right)^{-n}g\right\|^2_{L^2(\mathbb{R}^+)}.
\end{equation}
We have
\begin{equation}\label{leib2}
   \fl \indent \left(-\Delta_R+m^2\right)_x^{2n}C_R^{\Lambda_0}(p;x,y)=\int_{\frac{1}{\Lambda^2_0}}^{\infty}d\lambda e^{-\lambda(p^2+m^2)}\left(-\partial_x^2+p^2+m^2\right)^{2n}p_R(\lambda;x,y). 
\end{equation}
Using the Leibniz formula we get
\begin{equation}\label{leibz0}
   \fl\indent  \left(-\partial_x^2+p^2+m^2\right)^{2n}p_R(\lambda;x,y)=\sum_{k=0}^{2n}C_{2n}^{k}(-1)^k(p^2+m^2)^{2n-k}\partial^{2k}_xp_R(\lambda;x,y)\ .
\end{equation}
One can prove by induction
\begin{equation}\label{leibz1}
 \partial^{2k}_x\left(\frac{1}{\sqrt{2\pi\lambda}}e^{-\frac{(x-y)^2}{2\lambda}}\right)=\lambda^{-k}\sum_{i=0}^k c_i(k) \left(\frac{x-y}{\sqrt{\lambda}}\right)^i \left(\frac{1}{\sqrt{2\pi\lambda}}e^{-\frac{(x-y)^2}{2\lambda}}\right)~,  
\end{equation}
where $c_i(k) \in \mathbb{R}$, which implies that
\begin{equation*}
    \left|\partial_x^{2k}p_B(\lambda;x,y)\right|\leq c_k~\lambda^{-k}p_B(2\lambda;x,y)~.
\end{equation*}
Here $c_k$ is a positive constant. Therefore we find using (\ref{2prime}) and (\ref{pneumann}) that
\begin{eqnarray*}
   \fl~~~~~~\left|\partial_x^{2k}p_R(\lambda;x,y)\right|\leq c_k~\lambda^{-k}\left(p_B(2\lambda;x,y)+p_B(2\lambda;x,-y)\right.\\\left.+2\int_0^{\infty}dw~ e^{-w}p_B\left(2\lambda;x,-\frac{w}{c}-y\right)\right).
\end{eqnarray*}
Using that $$2\int_0^{\infty}\frac{dw}{\sqrt{2\pi \lambda}}\ e^{-w}e^{-\frac{\left(x+y+\frac{w}{c}\right)^2}{2\lambda}}=p_N(\lambda;x,y)-p_R(\lambda;x,y)\ , $$
we obtain
\begin{eqnarray*}
   \fl\left|\partial_x^{2k}p_R(\lambda;x,y)\right|&\leq c_k~\lambda^{-k}\left(p_B(2\lambda;x,y)+p_B(2\lambda;x,-y)+p_N(2\lambda;x,y)-p_R(2\lambda;x,y)\right)\\
   &\leq 2c_k~\lambda^{-k} p_N(2\lambda;x,y),
\end{eqnarray*}
and this implies that
\begin{equation*}
    \fl{\left|\left(-\partial_x^2+p^2+m^2\right)^{2n}p_R(\lambda;x,y)\right|\leq 2C_n\sum_{k=0}^{2n}\lambda^{-k}(p^2+m^2)^{2n-k}p_N(2\lambda;x,y).}
\end{equation*}
where $C_n:=\sup_{1\leq k \leq n}c_k$.\\
Therefore we obtain
\begin{eqnarray*}
    \fl{\left |\left(-\Delta_R+m^2\right)^{2n}C_R^{\Lambda_0}(p;x,y)\right| 
    \leq \tilde{C}_n\Lambda_0^{4n} \,\sum_{k=0}^{2n}
\left(\frac{p^2+m^2}{\Lambda_0^2}\right)^{2n-k}
\int_{\frac{1}{\Lambda_0^2}}^{\infty}d\lambda\
 e^{-\lambda(p^2+m^2)} p_N(2\lambda;x,y)}\ ,
\end{eqnarray*}
where we used that 
$$p_R(\tau;x,y)\leq p_N(\tau;x,y)~~~~~~\forall \tau, x, y \in \mathbb{R}^+\ .$$
$C_n$ and $\tilde{C}_n$ are suitable
positive constants that depend on $n\,$.\\
We have by Cauchy-Schwarz
\begin{eqnarray*}
    \fl \left|\int_{\frac{1}{\Lambda^2_0}}^{\infty}d\lambda\ e^{-\lambda(p^2+m^2)}p_N(2\lambda;x,y)\right|^2 &\leq e^{-\frac{p^2+m^2}{\Lambda^2_0}}\int_{\frac{1}{\Lambda^2_0}}^{\infty}d\lambda e^{-{\lambda(p^2+m^2)}}|p_N(2\lambda;x,y)|^2\\
    &\leq  \frac{4\Lambda_0}{\sqrt{2\pi}} e^{-\frac{p^2+m^2}{\Lambda_0^2}}\int_0^{\infty}d\lambda\ e^{-\lambda(p^2+m^2)}\frac{1}{\sqrt{2\pi \lambda}}e^{-\frac{(x-y)^2}{2\lambda}}\\
    &\leq  \frac{4\Lambda_0}{\sqrt{2\pi}} e^{-\frac{p^2+m^2}{\Lambda^2_0}}\frac{e^{-\sqrt{p^2+m^2}|x-y|}}{\sqrt{p^2+m^2}}~,
\end{eqnarray*}
where we used that for $x,~y\in \mathbb{R}^+$
$$e^{-\frac{(x+y)^2}{2\lambda}}\leq e^{-\frac{(x-y)^2}{2\lambda}}~.$$
Therefore we have obtained the following bound for the first factor from (\ref{leib1})
\begin{eqnarray}\label{*}
    \fl~~\left(\int_0^{\infty} dx \int_0^{\infty}dy \left |\left(-\Delta_R+m^2\right)^{2n}C_R^{\Lambda_0}(p;x,y)\right|^2\right) \leq \Lambda_0^{8n+1}\mathcal{P}\left(\frac{p^2+m^2}{\Lambda_0^2}\right)\frac{e^{-\frac{p^2+m^2}{\Lambda_0^2}}}{\left(p^2+m^2\right)^{\frac{3}{2}}}~,
\end{eqnarray}
where $\mathcal{P}$ is a suitable polynomial with positive coefficients. All constants were absorbed in the polynomial $\mathcal{P}$, and we obtain the final bound for (\ref{amounti}) using again the Cauchy-Schwarz inequality w.r.t. the momenta $p\,$,
\begin{eqnarray}\label{***}
   \left|\langle f,C_R^{\Lambda_0}~g\rangle\right|
&\leq C_{\Lambda_0}\left\|\left(-\Delta_R+m^2\right)^{-n}f\right\|_{L^2(\mathbb{R}^+\times \mathbb{R}^3)}\left\|\left(-\Delta_R+m^2\right)^{-n} g\right\|_{L^2(\mathbb{R}^+\times \mathbb{R}^3)} \nonumber\\
   &=C_{\Lambda_0} \|f\|_{H_{-n}}\times\|g\|_{H_{-n}}~,
\end{eqnarray} 
The constant $C_{\Lambda_0}$ depends on $\Lambda_0$. To obtain (\ref{***}), we have performed the $p$-integral in (\ref{**}) using the bound (\ref{*}).
\item We apply Corollary \ref{coco} with $T=P^{-2}$. 
The operator $P^{-2}$ has the following kernel 
\begin{eqnarray*}
\fl\left(-\Delta_R+m^2\right)^{-2}\left(p;x,y\right)
:=\int_0^{\infty}\frac{du}{4(p^2+m^2)}
\left(e^{-\sqrt{p^2+m^2}|x-u|}+\frac{\sqrt{p^2+m^2}-c}
{\sqrt{p^2+m^2}+c}e^{-\sqrt{p^2+m^2}|x+u|}\right)\\
\times\ 
\left(e^{-\sqrt{p^2+m^2}|y-u|}
+\frac{\sqrt{p^2+m^2}-c}{\sqrt{p^2+m^2}+c}e^{-\sqrt{p^2+m^2}|y+u|}\right)\ .
\end{eqnarray*}
We can bound it as follows
\begin{eqnarray*}
\fl{\left(-\Delta_R+m^2\right)^{-2}\left(p;x,y\right)
\leq\int_0^{\infty}{du}~e^{-\sqrt{p^2+m^2}|x-u|}\ 
e^{-\sqrt{p^2+m^2}|y-u|}\frac{1}{\left(\sqrt{p^2+m^2}+c\right)^2}}~.
\end{eqnarray*}
$T$ is a Hilbert-Schmidt operator on $L^2(\mathbb{R}^+\times \mathbb{R}^3)$ since it is an integral operator with kernel in $L^2(\mathbb{R}^+\times \mathbb{R}^3)$. This is a consequence of
\begin{eqnarray}
  \fl \left\|\left(-\Delta_R+m^2\right)^{-2}\right\|_{L^2\left(\mathbb{R}^+\times \mathbb{R}^3\right)}^2\nonumber
   &\leq \int_{\mathbb{R}^3}\frac{d^3p}{(2\pi)^3}\frac{1}{\left(\sqrt{p^2+m^2}+c\right)^4}\left(\int_{0}^{\infty}dx~du~ e^{-2\sqrt{p^2+m^2}|x-u|}\right)^2  \nonumber\\&\leq C\int_{\mathbb{R}^3} \frac{d^3 p}{(2\pi)^3}\frac{1}{(p^2+m^2)^{4}}<\infty\ ~~~~~~~~~\mathrm{for~suitable~}C>0~.
\nonumber
\end{eqnarray}
$T$ satisfies the hypotheses of Corollary \ref{coco}. The dual of $H_{-n}$ is the space $H_n$ of functions whose image under $P^n$ is in $L^2$. Therefore, 
$\mu_{\Lambda_0,R}$ has support on the set $\left\{P^{2-n}f,~f\in L^2(\mathbb{R}^+\times \mathbb{R}^3)\right\}$. Since this is true for any $n\in \mathbb{N}^{*}$, we conclude that $\mu_{\Lambda_0,R}$ has its support within the set $ \bigcap_{n\geq 1} \left\{\left(-\Delta_R+m^2\right)^{-n}L^{2}\left(\mathbb{R}^+\times \mathbb{R}^3\right)\right\}\ $.
\end{proof}
\section{Test functions and Tree structures}
\subsection{Test functions}
Standard proofs of perturbative renormalizability by flow equations use inductive bounds on the $n$-point correlation functions. These objects are no more functions if considered in the mixed position-momentum space, but rather tempered distributions. We introduce tempered distributions in $\mathcal{S}'\left(\mathbb{R}^{+n}\right)$ w.r.t. the semi-norms $$\prod_{i=1}^n \mathcal{N}_{\alpha,\beta}\left(\phi_i\right)\ ,$$ 
where 
$\mathcal{N}_{\alpha,\beta}(\phi):=\sup_{0\leq \alpha,\beta\leq2}\left\|(1+z^{\beta})
\partial_z^{\alpha}\phi(z)\right\|_{\infty}$ 
and $\partial_z\phi|_{z=0}=\lim_{z\rightarrow 0^+}\partial_z\phi\,$. 
We refer the reader to \cite{5} for additional informations on the topological construction of $\mathcal{S}'\left(\mathbb{R}^+\right)$.\\ \indent 
We now introduce test functions against 
which  $\mathcal{L}_{l,n}^{\Lambda,\Lambda_0}$ will be integrated. In the sequel we will bound the CAS folded with 
 test functions of the following form:\\ Let $1\leq s \leq n$, we define 
 \begin{center}
 $\tau:=\inf \tau_{2,s}$ where $\tau_{2,s}=\left(\tau_2,\cdots,\tau_s\right)$ with $\tau_i>0$,
 \end{center}
 and similarly $z_{2,s}=\left(z_2,\cdots,z_s\right)$. Given $y_2,\cdots,y_s\in \mathbb{R}^+$, we define
\begin{equation}\label{phit}
    \phi_{\tau_{2,s},y_{2,s}}(z_{2,s}):=\prod_{i=2}^sp_R(\tau_i;z_i,y_i)
\prod_{i=s+1}^n \chi^{+}(z_i)\ ,
\end{equation}
where $\chi^{+}(z_i)$ is the characteristic function of the semi-line $\mathbb{R}^+$. This definition can be generalized by choosing any other subset of $s$ coordinates among $z_2,\cdots,z_n\, $. We also define for $2\leq j \leq s$
\begin{equation}\label{phij}
    \phi_{\tau_{2,s},y_{2,s}}^{(j)}(z_{2,n}):=p^{(1)}_R(\tau_j;z_j,z_1;y_j)\prod_{i=2,i\neq j}^s p_R(\tau_i;z_i,y_i)\prod_{i=s+1}^n \chi^{+}(z_i)
\end{equation}
with 
\begin{equation}\label{phij'}
    p^{(1)}_R(\tau_j;z_j,z_1;y_j)=p_R(\tau_j;z_j,y_j)-p_R(\tau_j;z_1,y_j)\ .
\end{equation}
\subsection{Tree structures}\label{ts}
We follow \cite{9} and  introduce the tree structures that will be used in establishing inductive bounds for the CAS. 
\begin{itemize}
\item[i)] We denote by $\mathcal{T}^s$ the set of all trees that have a root vertex and $s-1$ external vertices, where $s\geq 2\,$. 
Subsequently we will identify for shortness the vertices of the trees 
with a set of points in $\mathbb{R}^+\,$. 
For a tree $T^s \in \mathcal{T}^s$ we will call $z_1\in \mathbb{R}^+$ its root vertex, and $Y=\left\{y_2,\cdots,y_s\right\}$ the set of points in $\mathbb{R}^+$ to be identified with its external vertices. Likewise we call $z=\left\{z_1,\cdots,z_r\right\}$ with $r\geq 0$ the set of internal vertices of $T^s$. 
\item[ii)] We call $c_1=c(z_1)$ the incidence number of the root vertex, that is the number of the lines of the tree that have the root vertex as an edge.
The external vertices have incidence number $1\,$,
the internal vertices have incidence number $>1\,$.
 We call a line $p$ an external line of the tree if one of its edges is in 
$Y\,$. The set of external lines is denoted $\mathcal{J}\,$. 
The remaining lines are called internal lines of the tree and 
are denoted by $\mathcal{I}\,$.
\item[iii)] By $T^s_l$ we denote a tree $T^s \in \mathcal{T}^s$ satisfying $v_2+\delta_{c_1,1}\leq 3l-2+s/2$ for $l\geq 0$ and satisfying $v_2=0$ for $l=0$, where $v_n$ is the number of vertices having incidence number $n$. Then $\mathcal{T}^{s}_l$ denotes the set of all trees $T^{s}_l$. We indicate the external vertices and internal vertices of the tree by writing $T^{s}_l(z_1,y_{2,s},\vec{z})$ with $y_{2,s}=(y_2,\cdots,y_s)$ and $\vec{z}=(z_2,\cdots,z_{r+1})$.
\item[iv)] We also define for $i\leq s$ the set of twice rooted trees denoted as $\mathcal{T}^{s,(12)}_l$. The trees $T^{s,(12)}_l\in\mathcal{T}^{s,(12)}_l$ are defined exactly as the trees $T^s_l$ apart from the fact that they have two root vertices $z_1$ , $z_2$ with the property ii) above, and $s-2$ external vertices.
\item[v)] For a tree $T^{s+2}_l(z_1,y_{2,s+2},\vec{z})$ we define the reduced tree\\ $T^{s}_{l,y_i,y_j}(z_1,y_2,\cdots,y_{i-1},y_{i+1},\cdots,y_{j-1},y_{j+1},\cdots,y_{s+2},\vec{z}_{ij})$ to be the unique tree to be obtained from $T^{s+2}_l(z_1,y_{2,s},\vec{z})$ through the following procedure:
\begin{itemize}
    \item By taking off the two external vertices $y_i$, $y_j$ together with the external lines attached to them.
    \item By taking off the internal vertices -if any- which have acquired incidence number $c=1$ through the previous process, and by also taking off the lines attached to them.
    \item If a new vertex of incidence number $1$ is created, the second step of the process is repeated.
\end{itemize}
\end{itemize}
\subsection{Weight factors}
Let $0<\delta<1$. Given a set $\tau_{2,s}$ with $\tau:=\inf \tau_{2,s}$, a set of 
external vertices  
$y_{2,s}=\left\{y_2,\cdots,y_s\right\} \in (\mathbb{R}^+)^{s-1}\,$ 
and a set of internal vertices $\vec{z}=(z_2,\cdots,z_{r+1})
 \in (\mathbb{R}^+)^{r} ,$ and attributing positive parameters $\Lambda_{\mathcal{I}}=\left\{\Lambda_I|I\in \mathcal{I}\right\}$ to the internal lines, the weight factor $\mathcal{F}\left(\Lambda_{\mathcal{I}},\tau;T^s_l(z_1,y_{2,s},\vec{z})\right)$ of a tree $T^s_l(z_1,y_{2,s},\vec{z})$ at scales $\Lambda_I$ is defined as a product of heat kernels associated with the internal and external lines of the tree. We set
\begin{equation}\label{treeStr}
\mathcal{F}\left(\Lambda_{\mathcal{I}},\tau;T^s_l(z_1,y_{2,s},\vec{z})\right):=\prod_{I\in \mathcal{I}}p_B(\frac{1+\delta}{\Lambda_I^2};I)\prod_{J\in \mathcal{J}}p_B(\tau_{J,\delta};J)~,
\end{equation}
where $\tau_{J,\delta}$ denotes the entry $\tau_{i,\delta}$ in $\tau$ carrying the index of the external coordinate $y_i$ in which the external line $J$ ends, and $\tau_{i,\delta}:=(1+\delta) \tau_i$. For $I=\left\{a,b\right\}$ the notation $p_B(\frac{1+\delta}{\Lambda_I^2};I)$ stands for $p_B(\frac{1+\delta}{\Lambda_I^2};a,b)$. We also define the integrated weight factor 
\begin{equation}
\mathcal{F}\left(\Lambda,\tau;T^s_l;z_1,y_{2,s}\right):=\sup_{\Lambda\leq\Lambda_I\leq\Lambda_0}\int_{\vec {z}}\mathcal{F}\left(\Lambda_{\mathcal{I}},\tau;T^s_l(z_1,y_{2,s},\vec{z})\right).
\end{equation}
It depends on $\Lambda_0$, but note that its limit for $\Lambda_0\rightarrow \infty$ exists, and that typically the $\sup$ is expected to be taken for the minimal values of $\Lambda$ admitted. Therefore we suppress the dependence on $\Lambda_0$ in the notation. Finally we introduce the global weight factor $\mathcal{F}\left(\Lambda,\tau,z_1,y_{2,s}\right)$ or more shortly $\mathcal{F}_{s,l}^{\Lambda}(\tau)$ which is defined through
\begin{equation}\label{15}
     \mathcal{F}_{s,l}\left(\Lambda,\tau,z_1,y_{2,s}\right)
:=\sum_{T^s_l \in \mathcal{T}_l^s}\mathcal{F}
\left(\Lambda,\tau;T^s_l;z_1,y_{2,s}\right)\ .
\end{equation}
If this does not lead to ambiguity we write shortly 
\begin{equation}\label{1996}
    \mathcal{F}_{s,l}^{\Lambda}(\tau)\equiv 
\mathcal{F}_{s,l}\left(\Lambda,\tau,z_1,y_{2,s}\right)\ .
\end{equation}
In complete analogy we define the weight factors and global weight factors for twice rooted trees which we denote as $\mathcal{F}\left(\Lambda,\tau;T^{s,(12)}_l;z_1,z_2,y_{2,s}\right)$ resp. $\mathcal{F}_{s,l}^{(12)}\left(\Lambda,\tau,z_1,y_{2,s}\right)$ or $\mathcal{F}_{s,l}^{12}(\Lambda,\tau)\,$.\\
For $s=1$ we set $\mathcal{F}_{1,l}^{\Lambda}(\tau)\equiv 1$. This case corresponds to a tree $T^s_l$ with no external vertices.
\section{Boundary and renormalization conditions}
The relevant terms in the bare interaction are fixed by renormalization conditions at the value $\Lambda=0$ of the flow parameter, all other boundary terms are fixed at $\Lambda=\Lambda_0$. To extract the relevant terms contained in 
\begin{equation}
\mathcal{L}_{l,2}^{\Lambda,\Lambda_0}\left(z_1;0,0;\phi_2\right):=\int_{z_2}\mathcal{L}_{l,2}^{\Lambda,\Lambda_0}\left((z_1,0),(z_2,0)\right)\phi_2(z_2)
\end{equation}
and 
\begin{equation}
\mathcal{L}_{l,4}^{\Lambda,\Lambda_0}\left(z_1;0,0,0,0;\Phi_4\right):=\int_{z_{2,4}}\mathcal{L}_{l,4}^{\Lambda,\Lambda_0}\left((z_1,0),\cdots,(z_4,0)\right)\prod_{i=2}^4 \phi_i(z_i)\ ,
 \end{equation}
we use a Taylor expansion of the test functions $\phi_2$ and $\Phi_4$, which gives
\begin{eqnarray}\label{f1}
    \fl  \mathcal{L}_{l,2}^{\Lambda,\Lambda_0}\left(z_1;0,0;\phi_2\right)
    =a_l^{\Lambda,\Lambda_0}(z_1)\phi_2(z_1)-s_l^{\Lambda,\Lambda_0}(z_1)(\partial_{z_1}\phi_2)(z_1)+d_l^{\Lambda,\Lambda_0}(z_1) (\partial_{z_1}^2\phi_2)(z_1)
    \nonumber\\+l_{l,2}^{\Lambda,\Lambda_0}(z_1; \phi_2)\ ,
\end{eqnarray}
\begin{equation}\label{f2}
   \fl  \left(\partial_{p^2}\mathcal{L}_{l,2}^{\Lambda,\Lambda_0}\right)\left(z_1;0,0;\phi_2\right)=b_l^{\Lambda,\Lambda_0}(z_1)\phi_2(z_1)+\left(\partial_{p^2}l_{l,2}^{\Lambda,\Lambda_0}\right)(z_1; \phi_2)\ ,
\end{equation}
\begin{equation}\label{f3}
    \fl \mathcal{L}_{l,4}^{\Lambda,\Lambda_0}\left(z_1;0,\cdots,0;\Phi_4\right)=c_{l}^{\Lambda,\Lambda_0}(z_1)\phi_2(z_1)\phi_3(z_1)\phi_4(z_1)+l_{l,4}^{\Lambda,\Lambda_0}\left(z_1;\Phi_4\right)\ ,
\end{equation}
where $\Phi_4(z_2,z_3,z_4)=\prod_{i=2}^4\phi_i(z_i)\,$.\\
Then the relevant terms appear as 
\begin{eqnarray}\label{rel}
     a_l^{\Lambda,\Lambda_0}(z_1)=\int_0^{\infty} dz_2 \ \mathcal{L}_{l,2}^{\Lambda,\Lambda_0}\left((z_1,0),(z_2,0)\right),\\
    s_l^{\Lambda,\Lambda_0}(z_1)=\int_0^{\infty} dz_2 \, (z_1-z_2)\mathcal{L}_{l,2}^{\Lambda,\Lambda_0}\left((z_1,0),(z_2,0)\right),\\
    d_l^{\Lambda,\Lambda_0}(z_1)=\int_0^{\infty} dz_2 \, (z_1-z_2)^2\mathcal{L}_{l,2}^{\Lambda,\Lambda_0}\left((z_1,0),(z_2,0)\right),\\
    b_l^{\Lambda,\Lambda_0}(z_1)=\int_0^{\infty} dz_2\ \partial_{p^2}\left(\mathcal{L}_{l,2}^{\Lambda,\Lambda_0}\left((z_1,p),(z_2,-p)\right)\right)_{|_{p=0}},\\
     c_l^{\Lambda,\Lambda_0}(z_1)=\int_0^{\infty} dz_2dz_3dz_4 \ \mathcal{L}_{l,4}^{\Lambda,\Lambda_0}\left((z_1,0),\cdots,(z_4,0)\right)\label{lolo},
\end{eqnarray}
and the reminders $l_{l,2}^{\Lambda,\Lambda_0}\left(z_1; \phi_2\right)$, $\left(\partial_{p^2}l_{l,2}^{\Lambda,\Lambda_0}\right)\left(z_1;\phi_2\right)$ and $l_{l,4}^{\Lambda,\Lambda_0}\left(z_1;\Phi_4\right)$ have the form
\begin{eqnarray}
\fl\indent l_{l,2}^{\Lambda,\Lambda_0}(z_1;\phi_2)=\int_{0}^{\infty}dz_2\int_0^1 dt \frac{(1-t)^2}{2!}\partial_t^3{\phi_2}\left(t z_2+(1-t)z_1\right)\mathcal{L}_{l,2}^{\Lambda,\Lambda_0}((z_1;0),(z_2;0))\ ,
\end{eqnarray}
\begin{eqnarray}
\fl \left(\partial_{p^2}l_{l,2}^{\Lambda,\Lambda_0}\right)(z_1; \phi_2)
=\int_{0}^{\infty}\!\! dz_2\int_0^1 dt \frac{(1-t)^2}{2!}\partial_t^3{\phi_2}\left(t z_2+(1-t)z_1\right)\nonumber\\\times\partial_{p^2}\left(\mathcal{L}_{l,2}^{\Lambda,\Lambda_0}\left((z_1,p),(z_2,-p)\right)\right)_{|_{p=0}}\ ,
\end{eqnarray}
and
\begin{eqnarray}\label{loup}
\fl l_{l,4}^{\Lambda,\Lambda_0}(z_1;\Phi_4)\nonumber\\
\fl =\int_{0}^{\infty}dz_2dz_3dz_4~\mathcal{L}_{l,4}^{\Lambda,\Lambda_0}((z_1,0),\cdots,(z_4,0))\left[\int_0^1 dt~ {\partial_t\phi_2}\left(t z_2+(1-t)z_1\right)\phi_3(z_3)\phi_4(z_4)\right.\\\fl~~~~~ \left.
+\phi_2(z_1)\int_0^1 dt ~{\partial_t\phi_3}\left(tz_3+(1-t)z_1\right)\phi_4(z_4)+\phi_2(z_1)\phi_3(z_1)\int_0^1 dt ~{\partial_t\phi_4}\left(tz_4+(1-t)z_1\right)\right]~.\nonumber
\end{eqnarray}
\textbf{Boundary conditions at $\Lambda=\Lambda_0$}\\
The bare interaction implies that at $\Lambda=\Lambda_0$
\begin{eqnarray}\label{ii}
    \fl  \partial^{w}\mathcal{L}_{l,n}^{\Lambda_0,\Lambda_0}
\left((z_1,p_1),\cdots,(z_n,p_n)\right)=0~~~\forall n+|w|\geq 5,~~~~
\mathcal{L}_{0,2}^{\Lambda_0,\Lambda_0}\left((z_1,p),(z_2,-p)\right)=0~.
\end{eqnarray}
\textbf{Renormalization conditions at $\Lambda=0$}\\
The renormalization conditions are fixed at $\Lambda=0$ by imposing
\begin{eqnarray*}
   \fl a_l^{0,\Lambda_0}(z_1),~~s_l^{0,\Lambda_0}(z_1),~~
    d_l^{0,\Lambda_0}(z_1),~~b_l^{0,\Lambda_0}(z_1),~~
     c_l^{0,\Lambda_0}(z_1)
\end{eqnarray*}
to be smooth functions in $\mathcal{C}^{\infty}(\mathbb{R}^+)$, uniformly bounded w.r.t. $\Lambda_0$. \\
Typically all the renormalization conditions are assumed 
to be cutoff-independent. The simplest renormalization 
conditions are BPHZ-renormalization conditions, where we set 
\begin{eqnarray}\label{renoc}
    \fl \indent a_l^{0,\Lambda_0}(z_1)\equiv0,~~s_l^{0,\Lambda_0}(z_1)\equiv0,~~
    d_l^{0,\Lambda_0}(z_1)\equiv0,~~b_l^{0,\Lambda_0}(z_1)\equiv0,~~
     c_l^{0,\Lambda_0}(z_1)\equiv0\ .
\end{eqnarray}
These will be adopted in the following.
\section{Proof of renormalizability}
We define for all $n\geq 2$ and $0\leq r\leq 3$
\begin{eqnarray}\label{sh1}
    \fl \mathcal{L}_{l,n}^{\Lambda,\Lambda_0}(z_1;\vec{p}_n;
\left(z_1-z_i\right)^r\Phi_s)\nonumber\\
    :=\int_0^{\infty}dz_{2,n}(z_1-z_i)^r 
\mathcal{L}_{l,n}^{\Lambda,\Lambda_0}\left((z_1,p_1),\cdots,(z_n,p_n)\right)
\Phi_s(z_{2,s})~,
\end{eqnarray}
and for all $n\geq 3$
\begin{equation}\label{sh2}
    \fl  F_{12}\mathcal{L}_{l,n}^{\Lambda,\Lambda_0}(z_1,z_2;\vec{p}_n;\Phi_{s-1}):=(z_1-z_2)^3\int_0^{\infty}dz_{3,n}~\mathcal{L}_{l,n}^{\Lambda,\Lambda_0}\left((z_1,p_1),\cdots,(z_n,p_n)\right)\Phi_{s-1}(z_{3,s})~.
\end{equation}
For $n=2$ we define
\begin{equation}\label{weird}
    F_{12}\mathcal{L}_{l,2}^{\Lambda,\Lambda_0}(z_1,z_2;p):=(z_1-z_2)^3 \mathcal{L}_{l,2}^{\Lambda,\Lambda_0}\left((z_1,p),(z_2,-p)\right).
\end{equation}
\begin{theorem}(Boundedness)\label{thm}
We consider $0\leq\Lambda\leq \Lambda_0<\infty$, $1\leq s\leq n$, $2\leq i\leq n$, $2\leq j \leq s$ and $0\leq r\leq 3$. We consider test functions either of the form $\phi_{\tau_{2,s},y_{2,s}}(z_{2,n})$ or $\phi_{\tau_{2,s},y_{2,s}}^{(j)}(z_{2,n})$, which are also denoted in shorthand as $\phi_{\tau,y_{2,s}}$ resp. 
$\phi_{\tau,y_{2,s}}^{(j)}\,$, see (\ref{phit})-(\ref{phij}).
 Adopting (\ref{renoc})
 we claim
\begin{eqnarray}\label{c1}
    \fl(A)~\left| \partial^w \mathcal{L}_{l,n}^{\Lambda,\Lambda_0}(z_1;\vec{p}_n;\left(z_1-z_i\right)^r\phi_{\tau,y_{2,s}})\right|\nonumber\\\fl \indent~~~~~~\leq \left(\Lambda+m\right)^{4-n-|w|-r}\mathcal{P}_1\left(\log \frac{\Lambda+m}{m}\right)\mathcal{P}_2\left(\frac{\left\|\vec{p}_n\right\|}{\Lambda+m}\right) \mathcal{Q}_1\left(\frac{\tau^{-\frac{1}{2}}}{\Lambda+m}\right)\mathcal{F}^{\Lambda}_{s,l}(\tau)\ .
\end{eqnarray}
\begin{eqnarray}\label{c2}
    \fl(B)\left| F_{12} \mathcal{L}_{l,n}^{\Lambda,\Lambda_0}(z_1,z_2;\vec{p}_n;\phi_{\tau,y_{2,s}})\right|\leq \left(\Lambda+m\right)^{1-n}\mathcal{P}_3\left(\log \frac{\Lambda+m}{m}\right)\mathcal{P}_4\left(\frac{\left\|\vec{p}_n\right\|}{\Lambda+m}\right) \mathcal{F}^{12}_{s,l}(\Lambda,\tau)\ .
\end{eqnarray}
\begin{eqnarray}\label{c3}
    \fl (C)\left |\partial^w\mathcal{L}_{l,n}^{\Lambda,\Lambda_0}(z_1;\vec{p}_n;\phi_{\tau,y_{2,s}}^{(j)})\right|\nonumber\\\fl \indent ~~~~~~~\leq \left(\Lambda+m\right)^{3-n-|w|}\tau^{-\frac{1}{2}}_j\mathcal{P}_5\left(\log \frac{\Lambda+m}{m}\right)\mathcal{P}_6\left(\frac{\left\|\vec{p}_n\right\|}{\Lambda+m}\right)\mathcal{Q}_2\left(\frac{\tau^{-{\frac{1}{2}}}}{\Lambda+m}\right)\mathcal{F}^{\Lambda}_{s,l}(\tau)\ .
\end{eqnarray}
$(D)$ The functions defined in 
(\ref{sh1}), (\ref{sh2}) and (\ref{weird}) and their momentum derivatives are in
$\,\mathcal{C}^{\infty}\left(\mathbb{R}^+\right)\,$ w.r.t. $z_1\,$.\\
Here and in the following the $\mathcal{P}_i$ and $\mathcal{Q}_i$ denote (each time they appear possibly new) polynomials with nonnegative coefficients. The polynomials $\mathcal{Q}_i$ are reduced to a constant if $s=1$. The coefficients depend on $l,n,|w|,\delta$ but not on $\left \{ p_i \right \}$, $\Lambda$, $\Lambda_0$ and $z_1$. For $l=0$, all polynomials $\mathcal{P}_i$ reduce to constants. In the definition of $\mathcal{F}$ (\ref{treeStr}), $\delta>0$ may be chosen arbitrarily small.
\end{theorem}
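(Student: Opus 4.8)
The plan is to establish (A)--(D) by a single simultaneous induction driven by the system of flow equations (\ref{FEL}), ascending in the loop order $l$ and, for each fixed $l$, ascending in the number of external legs $n$ (with the number $|w|$ of momentum derivatives as an auxiliary parameter). This ordering is forced by the structure of (\ref{FEL}): the first (loop) term lowers the loop order to $l-1$ at the cost of two extra legs, hence is controlled by the induction on $l$; the second (tree) term factorizes into two CAS of loop orders $l_1+l_2=l$ and leg numbers $n_1+1,\,n_2+1$ with $n_1+n_2=n$, each strictly smaller in the order except for the single marginal contribution that would reproduce $\mathcal{L}_{l,n}^{\Lambda,\Lambda_0}$ itself, which forces the companion factor to be the tree-level two-point function $\mathcal{L}_{0,2}^{\Lambda,\Lambda_0}$; the latter vanishes identically, since at $l=0$ it obeys a homogeneous quadratic flow with zero datum at $\Lambda=\Lambda_0$ by (\ref{ii}). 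At $l=0$ the loop term is absent and the tree term builds the $n$-point functions from the elementary vertex, giving the base of the induction.

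For each $(l,n,|w|)$ I would integrate the flow equation in $\Lambda$ according to the canonical power counting. The irrelevant components, $n+|w|\geq 5$, are integrated downward from $\Lambda_0$ to $\Lambda$ using the vanishing boundary conditions (\ref{ii}), so that no integration constant survives, while the relevant components, $n+|w|\leq 4$, are integrated upward from $0$ to $\Lambda$ with the integration constants fixed by the BPHZ conditions (\ref{renoc}); the relevant parts are isolated exactly by the Taylor identities (\ref{f1})--(\ref{f3}), so that the remainders $l_{l,2}^{\Lambda,\Lambda_0}$, $\partial_{p^2}l_{l,2}^{\Lambda,\Lambda_0}$ and $l_{l,4}^{\Lambda,\Lambda_0}$ carry an improved weight. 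The weighted quantity $F_{12}\mathcal{L}_{l,n}^{\Lambda,\Lambda_0}$ of (B) and the difference quantity of (C) have, by construction, a lowered effective dimension and are handled within the same scheme; indeed (B) realizes the case $r=3$, whose exponent $4-n-r=1-n$ matches the claim. It is this upward integration of the relevant terms, carried out uniformly in $\Lambda_0$, that produces the exponents together with the logarithmic polynomials $\mathcal{P}_i(\log\frac{\Lambda+m}{m})$ and thereby encodes renormalizability.

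The core estimate is to bound the right-hand side of (\ref{FEL}), after folding with the test functions (\ref{phit})--(\ref{phij}), by an expression of the same shape, so that the weight factors $\mathcal{F}_{s,l}^{\Lambda}(\tau)$ and $\mathcal{F}_{s,l}^{12}(\Lambda,\tau)$ reproduce themselves. The tree structures of Section \ref{ts} perform the bookkeeping: the internal $z,z'$-integrations against $\dot{C}_R^{\Lambda}$, which carries the heat kernel $p_R(\frac{1}{\Lambda^2};z,z')$ by (\ref{p3}), convolve the heat kernels attached to the test functions and to the internal lines, and the semigroup (Chapman--Kolmogorov) property of $p_B$ turns such convolutions back into single heat kernels. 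Using the domination $p_R(\tau;z,z')\leq p_N(\tau;z,z')$ and the derivative bounds on $p_R$ derived in the proof of Proposition \ref{prop1}, the loop term grafts two new external lines onto a tree while the tree term joins two trees through the $\dot{C}_R^{\Lambda}$ line; in both operations the constraint $v_2+\delta_{c_1,1}\leq 3l-2+s/2$ defining $\mathcal{T}_l^s$ is exactly preserved, so the output is again an admissible (twice-rooted, for (B)) tree whose integrated weight factor is dominated by $\mathcal{F}_{s,l}^{\Lambda}(\tau)$. The polynomials $\mathcal{Q}_i(\frac{\tau^{-1/2}}{\Lambda+m})$ appear because the test-function scales $\tau$ may be short compared to $1/(\Lambda+m)$; collecting these pieces and performing the $\Lambda$-integral of powers of $(\Lambda+m)$ against the heat-kernel weights yields the stated bounds.

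The main obstacle, and the feature that sets this apart from the translation-invariant bulk theory, is the control of the boundary in the $z$-direction. Since $p_R$ is not translation invariant, one must retain its image-charge decomposition (\ref{2prime}) throughout the convolutions and track the genuinely boundary-generated relevant terms $s_l^{\Lambda,\Lambda_0}(z)$ and $d_l^{\Lambda,\Lambda_0}(z)$ built from the $z$-moments in (\ref{rel}); these are exactly what the weighted functional $F_{12}$ and the twice-rooted trees $\mathcal{T}_l^{s,(12)}$ are designed to capture, which is why claim (B) must be propagated in the induction alongside (A). Claim (C) is equally delicate: the difference kernel $p_R^{(1)}$ of (\ref{phij'}) is a discrete derivative in the boundary variable, and showing that it earns the factor $\tau_j^{-1/2}$ and the improved exponent $3-n-|w|$ requires estimating $p_R^{(1)}$ uniformly and feeding it consistently through the tree recursion. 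Coordinating these three boundary mechanisms -- the image-charge bookkeeping, the weighted twice-rooted trees, and the upward integration of the relevant counterterms -- while keeping every constant independent of $\Lambda_0$ is the crux of the argument; once (A)--(C) hold, the uniform bounds justify differentiating under the integral sign in $z_1$, which gives the smoothness asserted in (D).
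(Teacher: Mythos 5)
Your proposal reproduces the overall architecture of the paper's proof -- integration of the FE downward from $\Lambda_0$ for the irrelevant terms and upward from $0$ for the relevant coefficients, the tree/weight-factor bookkeeping with the semigroup property, the preservation of the constraint $v_2+\delta_{c_1,1}\leq 3l-2+s/2$ under grafting and joining, and the role of (B) and (C) in closing the induction. But there is a genuine gap in your inductive scheme: you ascend in $l$ and $n$ and treat $|w|$ as ``an auxiliary parameter'', and your handling of the relevant cases $n+|w|\leq 4$ stops at the upward integration from the BPHZ conditions (\ref{renoc}). Those conditions fix $a_l,s_l,d_l,b_l,c_l$ only at $\vec{p}_n=\vec{0}$, so the upward integration yields bounds on the relevant terms \emph{only at the renormalization point}. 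To obtain (A)--(C) at general momenta one must invoke the Taylor formula (\ref{taylorfor}), which expresses $\partial^{w}\mathcal{L}_{l,n}^{\Lambda,\Lambda_0}(z_1;\vec{p}_n;\phi_{\tau,y_{2,s}})$ through derivatives of order $|w|+1$ at interpolated momenta \emph{at the same} $(l,n)$. This is only available as an induction hypothesis if the induction proceeds \emph{downwards} in $|w|$, starting from some $|w_{\max}|\geq 3$ (for $n=2$ the chain $|w|=0\to 1\to 2\to 3$ terminates precisely because $n+|w|\geq 5$ is eventually reached and handled by the downward $\Lambda$-integration, which holds at all momenta). With $|w|$ as a spectator your scheme cannot close: the extension away from $\vec{p}_n=\vec{0}$ has no hypothesis to appeal to. The paper states this ordering explicitly at the start of its proof.

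Two secondary points. First, for claim (C) your phrase ``estimating $p_R^{(1)}$ uniformly and feeding it through the tree recursion'' conceals the actual mechanism, which is not a uniform estimate at all: one telescopes $p_R^{(1)}(\tau_j;\cdot,z_1;y_j)$ along the unique tree path, uses Lemma \ref{lemma2} and its corollary to extract $\tau_j^{-1/2}|v_a-v_{a-1}|$, and then must split scales according to $\delta'\Lambda^2\lessgtr b\,\tau_j^{-1}$: in the regime $\delta'\Lambda^2\geq b\,\tau_j^{-1}$, Lemma \ref{lemme1} trades $|v_a-v_{a-1}|$ for a factor $\Lambda^{-1}$ and a \emph{new} internal vertex of incidence number $2$ (whose admissibility must be rechecked against $v_2+\delta_{c_1,1}\leq 3l-2+s/2$), and the $\Lambda^{-1}$ is absorbed only because the factor $e^{-m^2/(2\Lambda^2)}$ extracted from $\dot{C}^{\Lambda}$ gives $(\Lambda+m)\,e^{-m^2/(2\Lambda^2)}\leq C\,\Lambda$; in the complementary regime one bounds $|p_R^{(1)}|$ by the sum of the two kernels and trades powers of $(\Lambda+m)$ for $\tau_j^{-1/2}$ directly. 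The same splitting reappears for the two-point remainder $l_{l,2}^{\Lambda,\Lambda_0}$. Second, the derivative estimates on $p_R$ you need are those of Appendix A (Lemma \ref{lemma2}, its corollary, and Lemma \ref{lemme1}), not the bounds from the proof of Proposition \ref{prop1}, which concern the support of the Gaussian measure; and within the relevant cases only the zero-momentum coefficients are integrated upward -- the remainders $l_{l,2}^{\Lambda,\Lambda_0}$, $\partial_{p^2}l_{l,2}^{\Lambda,\Lambda_0}$, $l_{l,4}^{\Lambda,\Lambda_0}$ are integrated \emph{downward} from $\Lambda_0$, where they vanish by locality of the bare action, so your blanket ``relevant components integrated upward'' misassigns their boundary data.
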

\begin{remarks}
\indent \begin{itemize}
    \item [-]
The bounds (\ref{c2}) and (\ref{c3}) are required to close the inductive argument in the subsequent proof. The bound (\ref{c1}) is the central result of the boundedness Theorem \ref{thm} needed later to prove the convergence of $\partial^w \mathcal{L}_{l,n}^{\Lambda,\Lambda_0}(z_1;p_1,\cdots,p_n;\left(z_1-z_i\right)^r\phi_{\tau,y_{2,s}})$ in the limits $\Lambda\rightarrow 0$ and $\Lambda_0\rightarrow \infty$.
\item [-] The proof that we give in the following holds also for
larger classes of test functions indexed by a strictly positive parameter $\tau\,$ such that
$$\left|\partial_z^{\alpha}\phi_{\tau}(z)\right|\leq \tau^{-\frac{\alpha}{2}}\left|\phi_{\tau}(z)\right|~~~~\forall z\in \mathbb{R}^+\ ,~~~\forall\alpha\geq 0~.$$
The role of the parameter $\tau$ as it appears in the proof, is to absorb negative powers of the flow parameter $\Lambda$ by producing powers of $\frac{\tau}{\Lambda+m}$ that contribute to the polynomial $\mathcal{Q}$ at each step of the induction. This preserves the power counting in terms of $\Lambda+m$. We choose a simple example of these functions, which are the Robin heat kernels. They can be proved to be dense in $\mathcal{S}\left(\mathbb{R}^{+n}\right)$.
\item [-] The value of the integral 
\begin{equation}\label{deltaw}
    \int_0^{\infty} dw~\delta_w
\end{equation}
admits two possible choices, which are $1$ and $\frac{1}{2}$. These two choices are called respectively, the weak and strong definitions of the Dirac distribution \cite{8}.\\
The subsequent proof uses the strong definition of the Dirac distribution. For the weak definition, all the points of Theorem 1 hold except for (D). In the weak convention, the functions defined  in 
(\ref{sh1}), (\ref{sh2}) and (\ref{weird}) are in
$\,\mathcal{C}^{\infty}\left(\mathbb{R}^{+*}\right)\,$ w.r.t. $z_1\,$ and are not continuous at 0. One can verify that the proof of renormalizability is independent of the chosen convention. This comes from the fact that for a continuous function $f$, in both conventions, we have that
\begin{eqnarray*}
\int_0^{\infty}dz~\left(\int_0^{\infty}dz'~\delta(z-z')\right)f(z)=\int_0^{\infty}dz~f(z).
\end{eqnarray*}
\end{itemize}
\end{remarks}
\begin{proof}
The bounds are proven inductively using the standard inductive scheme which proceeds upwards in $l$, 
for given $l$ upwards in $n$, and for given $(n,l)$ downwards in $|w|$ starting from some arbitrary $|w_{\max}|\geq 3$. 
The induction works because the terms on the r.h.s. of the FE always are prior to the one of the l.h.s. in the inductive order. So the bounds (\ref{c1})-(\ref{c3}) may be used as an induction hypothesis on the r.h.s. Once verified in the first induction step, we integrate the FE,
where the terms with $n+|w|+r\geq 5$ are integrated down from $\Lambda_0$ to $\Lambda$ because of the boundary conditions (\ref{ii}), and the  terms with $n+|w|+r\leq 4$ at the renormalization point are integrated upwards from $0$ to $\Lambda$ using (\ref{renoc}). We can write remembering (\ref{13*})
\begin{eqnarray}\label{renopoint}
\fl{\partial}^{w}\mathcal{L}_{l,n}^{\Lambda,\Lambda_0}\left(z_1;\vec{0};\phi_{\tau,y_{2,s}}\right)\nonumber\\={\partial}^{w}\mathcal{L}_{l,n}^{0,\Lambda_0}\left(z_1;\vec{0};\phi_{\tau,y_{2,s}}\right)+\int_{0}^{\Lambda} 
d\lambda\ \partial_{\lambda}\ {\partial}^{w}\mathcal{L}_{l,n}^{\lambda,\Lambda_0}\left(z_1;\vec{0};\phi_{\tau,y_{2,s}}\right)\ .
\end{eqnarray}
Once a bound has been obtained at the renormalization point, it is possible to move away from the renormalization point using the integrated Taylor formula
\begin{eqnarray}
\label{taylorfor}
\fl{\partial}^{w}\mathcal{L}_{l,n}^{\Lambda,\Lambda_0}\left(z_1;\vec{p}_n;\phi_{\tau,y_{2,s}}\right)={\partial}^{w}\mathcal{L}_{l,n}^{\Lambda,\Lambda_0}\left(z_1;\vec{0};\phi_{\tau,y_{2,s}}\right)\nonumber\\+\sum_{i=1}^n \sum_{\mu=1}^4 {p}_{i,\mu} \int_0^1 dt \left( {\partial}_{{p}_{i,\mu}} {\partial}^{w}\mathcal{L}_{l,n}^{\Lambda,\Lambda_0}\right)\left(z_1;t\vec{p}_n;\phi_{\tau,y_{2,s}}\right).\qquad
\end{eqnarray}
The induction starts with the pair $(0,4)$ for which the r.h.s. of the FE vanishes so that 
$$\mathcal{L}_{0,4}^{\Lambda,\Lambda_0}\left((z_1,p_1),\cdots,(z_4,p_4)\right)=\lambda\prod_{i=2}^4 \delta(z_1-z_i)$$
which implies for the test function $\phi_{\tau,y_{2,s}}$ defined in (\ref{phit}) that
\begin{equation}\label{4pt}
    \mathcal{L}_{0,4}^{\Lambda,\Lambda_0}\left(z_1;\vec{p}_4;\phi_{\tau,y_{2,s}}\right)=\lambda\prod_{i=2}^s p_R(\tau_i;z_1,y_i)~.
\end{equation}
Using the bounds (\ref{prb}) and (\ref{pbdelta}) from Appendix A, we deduce that $$\prod_{i=2}^s p_R(\tau_i;z_1,y_i)\leq 2^s \prod_{i=2}^s p_B\left(\tau_i;z_1,y_i\right)\leq 2^s(1+\delta)^{\frac{s}{2}}~\prod_{i=2}^s p_B\left(\tau_{i,\delta};z_1,y_i\right),$$ 
which implies that that (\ref{4pt}) can be bounded by a tree  with no internal vertices and with a root vertex $z_1$ linked to the external vertices $y_{2,s}\,$, 
which is in agreement with the bound (\ref{c1}). The constants are absorbed in the polynomial $\mathcal{P}_1$, which is of degree $0$ at the tree order.\\
\underline{(I) Bounds on the r.h.s. of the FE:}\\
We want to establish the bounds
\begin{eqnarray}\label{borne}
\fl|\partial_{\Lambda}\partial^{w}
\mathcal{L}^{\Lambda,\Lambda_0}_{l,n}
\left(z_1;\vec{p}_n;\left(z_1-z_i\right)^r\phi_{\tau,y_{2,s}}\right)|
\leq \left(\Lambda+m\right)^{3-n-|w|-r}\mathcal{P}_1\left(\log \frac{\Lambda+m}{m}\right)
   \mathcal{P}_2\left(\frac{\left\|\vec{p}_n\right\|}{\Lambda+m}\right)\nonumber\\
   \times\mathcal{Q}_1\left(\frac{\tau^{-\frac{1}{2}}}{\Lambda+m}\right)\mathcal{F}^{\Lambda}_{s,l}\left(\tau\right)\ .
\end{eqnarray}
(A) In this part we consider the case $r=0$.\\
($A_1$) Let $R_1$ be the first term on the RHS of the FE
$$R_1:=\int_p\int_{z_{2,n},z,z'}\partial^w \mathcal{L}_{n+2,l-1}^{\Lambda,\Lambda_0}\left((\vec{z_n},\vec{p_n}),(z,p),(z',-p)\right)\dot{C}_{R}^{\Lambda}
\left(p;z,z'\right)\prod_{i=2}^sp_R(\tau_i;z_i,y_i)$$
which can be written as 
\begin{equation}\label{R1*}
\fl -\frac{2e^{-\frac{p^2+m^2}{\Lambda^2}}}{\Lambda^3}\int_p\int_0^{\infty}du~\partial^w \mathcal{L}_{n+2,l-1}^{\Lambda,\Lambda_0}\left(z_1;p,-p,\vec{p_n};\phi_{\tau,y_{2,s}}\times p_{R}(\frac{1}{2\Lambda^2};u,\cdot)\times  
p_{R}(\frac{1}{2\Lambda^2};\cdot,u)\right)\ ,
\end{equation}
where we used (\ref{p3}) and the semi-group property for $p_R$, see (\ref{sgr}).\\
Applying the induction hypothesis gives the bound 
\begin{eqnarray}\label{79}
   \fl|R_{1}|\leq\left(\Lambda+m\right)^{2-n-|w|} \mathcal{P}_1\left(\log \frac{\Lambda+m}{m}\right)\mathcal{P}_2\left(\frac{\left\|\vec{p}_n\right\|}{\Lambda+m}\right)\mathcal{Q}_1\left(\frac{\tau^{-\frac{1}{2}}}{\Lambda+m}\right)\nonumber\\
   \times \int_0^{\infty}du\int_{\vec{z}}\sum_{T^{s+2}_{l-1}(z_1,y_{2,s},u,u)}\mathcal{F}\left(\Lambda,\left\{\tau,\frac{1}{2\Lambda^2},\frac{1}{2\Lambda^2}\right\};T^{s+2}_{l-1}(z_1,y_{2,s},u,u,\vec{z})\right).
\end{eqnarray}
In the sequel, we write the polynomials $\mathcal{P}_1$, $\mathcal{P}_2$ and $\mathcal{Q}_1$ without their lower indices. One should keep in mind that these polynomials may have, each time they appear,
 different positive coefficients which depend on $l,n,|w|,\delta$ only and not on $\left\{p_i\right\}$, $\Lambda$, $\Lambda_0$ and $z_1$.\\
For any contribution to (\ref{79}) we denote by $z', z''$ the vertices in the tree $T^{s+2}_{l-1}(z_1,y_{2,s},u,u)$ to which the test functions $p_B(\frac{1+\delta}{2\Lambda^2};u,\cdot)$ and $p_B(\frac{1+\delta}{2\Lambda^2};\cdot,u)$ are attached. Interchanging $\int_{\vec{z}}$ and $\int_u$ and performing the integral over $u$ using the semi-group property (\ref{sgr}), we obtain
$$\int_0^{\infty}du~ p_B(\frac{1+\delta}{2\Lambda^2};z',u)\ 
p_B(\frac{1+\delta}{2\Lambda^2};u,z'')\leq p_B(\frac{1+\delta}{\Lambda^2};z',z'')\leq O(1) \Lambda$$
with a positive constant $O(1)\,$.
Using this bound we obtain $$\int_0^{\infty}du~\mathcal{F}\left(\Lambda,\left\{\tau,\frac{1}{2\Lambda^2},\frac{1}{2\Lambda^2}\right\};T^{s+2}_{l-1};z_1,y_{2,s},u,u\right)\leq O(1) \Lambda~ \mathcal{F}\left(\Lambda,\tau;T^{s}_{l};z_1,y_{2,s}\right),$$
where the tree $T^{s}_l$ is the reduced tree obtained from $T^{s+2}_{l-1}$ by taking away the two external lines ending in $u$. Note that $v_2$, the number of vertices of incidence number $2$, verifies $v_2+\delta_{c_1,1}\leq 3l-3-2+s/2+1\leq 3l-2+s/2$, which explains the reduction of $T^{s+2}_{l-1}$ is in $\mathcal{T}^s_l$. The elimination of vertices of incidence number $1$ together with their adjacent line is justified by the fact that $\int_{z'}p_B(\frac{1+\delta}{\Lambda_I^2};z',z'')\leq 1$. The reduction process for each tree fixes uniquely the set of internal vertices of $T^s_l$ in terms of those of $T^{s+2}_{l-1}$. Therefore, we obtain 
\begin{eqnarray}\label{bR1b}
\fl|R_{1}|\leq\left(\Lambda+m\right)^{3-n-|w|} \mathcal{P}\left(\log \frac{\Lambda+m}{m}\right) \mathcal{P}\left(\frac{\left\|\vec{p}_n\right\|}{\Lambda+m}\right)\mathcal{Q}\left(\frac{\tau^{-\frac{1}{2}}}{\Lambda+m}\right)\nonumber\\
   \times  \sum_{T^{s}_{l}(z_1,y_{2,s})}\mathcal{F}\left(\Lambda,\tau;T^{s}_{l};z_1,y_{2,s}\right).
\end{eqnarray}
($A_2$) We now consider the second term on the RHS of the FE. It is enough to analyse the following term from the symmetrized sum 
 \begin{eqnarray*}
      \fl R_2:=\int_{z_{2,n},z,z'}\partial^{w_1}\mathcal{L}^{\Lambda,\Lambda_0}_{l_1,n_1+1}((z_1,p_1),\cdots,(z_{n_1},p_{n_1}),(z,p))\partial^{w_3}\dot{C}_R^{\Lambda}(p;z,z')\\ \times\partial^{w_2}\mathcal{L}^{\Lambda,\Lambda_0}_{l_2,n_2+1}((z_{n_1+1},p_{n_1+1}),\cdots,(z_{n},p_{n}),(z',-p))\phi_{\tau,y_{2,s}}(z_{2,n})\ ,
 \end{eqnarray*}
 in which the arguments $(z_i,p_i)$ appear in  $\mathcal{L}^{\Lambda,\Lambda_0}_{l_1,n_1+1}$ and $\mathcal{L}^{\Lambda,\Lambda_0}_{l_2,n_2+1}$ in an ordered way. \\ $R_2$ can be rewritten similarly as in ($A_1$)
\begin{eqnarray*}
      \fl R_2:=\int_u\int_{z_{2,n},z,z'}\partial^{w_1}\mathcal{L}^{\Lambda,\Lambda_0}_{l_1,n_1+1}((z_1,p_1),\cdots,(z_{n_1},p_{n_1}),(z,p))\partial^{w_3}\dot{C}^{\Lambda}(p)\\ \fl \times \ \partial^{w_2}\mathcal{L}^{\Lambda,\Lambda_0}_{l_2,n_2+1}((z_{n_1+1},p_{n_1+1}),\cdots,(z_{n},p_{n}),(z',-p)) \phi_{\tau,y_{2,s}}(z_{2,n})\\ 
\times\ p_R(\frac{1}{2\Lambda^2};z,u)\ p_R(\frac{1}{2\Lambda^2};u,z')\ .
 \end{eqnarray*}
We define
 $$\phi'_{s_1}(z_{2,n_1})=\prod^{s_1}_{r=2}p_R(\tau_r;z_r,y_r)
,~~~\phi''_{s_2}(z_{n_1+1,n-1})
=\prod_{r=n_1+1}^{s_2+n_1}\! p_R(\tau_r;z_r,y_r)~,$$
where $s=s_1+s_2$.\\
 Therefore $\,R_2$ can be rewritten as 
 \begin{eqnarray}\label{R2}
     \fl R_2=\int_{z_n}\int_u  \partial^{w_1}
\mathcal{L}^{\Lambda,\Lambda_0}_{l_1,n_1+1}
\Bigl(z_1;p_1,\cdots,p_{n_1},p;\phi'_{s_1}\times p_R(\frac{1}{2\Lambda^2};.,u)
\Bigr)
\, \partial^{w_3}\dot{C}^{\Lambda}(p)\\
     \times \partial^{w_2}\mathcal{L}^{\Lambda,\Lambda_0}_{l_2,n_2+1}
\Bigl(z_n;-p,p_{n_1+1},
\cdots,p_n;\phi''_{s_2}\times p_R(\frac{1}{2\Lambda^2};u,.)\Bigr)
\phi_n(z_n)\ ,\nonumber
 \end{eqnarray}
 where $\phi_n(z_n)=p(\tau_n;z_n,y_n)$ if $s=n$, and $\phi_n(z_n)\equiv1$ otherwise.
 Applying the induction hypothesis to both terms in (\ref{R2}) 
we obtain the bound
 \begin{eqnarray}\label{84}
     \fl|R_2|\leq \left(\Lambda+m\right)^{8-n-|w|-2-3}\mathcal{P}\left(\log \frac{\Lambda+m}{m}\right)\mathcal{P}\left(\frac{\left\|\vec{p}_n\right\|}{\Lambda+m}\right)\mathcal{Q}\left(\frac{\tau^{-\frac{1}{2}}}{\Lambda+m}\right)\nonumber\\\indent \indent\times \int_{z_n}\int_u \sum_{T^{s_1+1}_{l_1},T^{s_2+1}_{l_2}}\mathcal{F}\left(\Lambda,\left\{\tau',\frac{1}{2\Lambda^2}\right\};T^{s_1+1}_{l_1};z_1,y_{2,s_1},u\right)\nonumber\\\indent\indent \indent\times \mathcal{F}\left(\Lambda,\left\{\tau'',\frac{1}{2\Lambda^2}\right\};T^{s_2+1}_{l_2};z_n,u,y_{s_1+1},\cdots,y_{s(n)}\right)\phi_n(z_n)~,
\end{eqnarray}
 where we used 
\begin{equation}\label{propagateur}
|\partial^{w_3}\dot{C}^{\Lambda}(p)|\leq 
\left(\Lambda+m\right)^{-3-|w_3|}\mathcal{P}\left(\frac{|p|}{\Lambda+m}\right).
\end{equation}
 In (\ref{84}) we set $s(n)=s$ if $s<n$ and $s(n)=s-1$ if $s=n\,$. \\
 Interchanging the integral over $u$ with the sum over trees we obtain
 \begin{eqnarray*}
 |R_2|\leq \left(\Lambda+m\right)^{3- n-|w|}\mathcal{P}\left(\log \frac{\Lambda+m}{m}\right)\mathcal{P}\left(\frac{\left\|\vec{p}_n\right\|}{\Lambda+m}\right)\mathcal{Q}\left(\frac{\tau^{-\frac{1}{2}}}{\Lambda+m}\right) \\\indent \indent \times \sum_{T^s_l(T^{s_1+1}_{l_1},T^{s_2+1}_{l_2})}\int_{z_n}\mathcal{F}\left(\Lambda,\tau;T^{s}_{l};z_1,y_{2,s}\right),
 \end{eqnarray*}
 with the following explanations:
 \begin{itemize}
     \item Any contribution in the sum over trees $T^{s}_l\left(T^{s_1+1}_{l_1},T^{s_2+1}_{l_2}\right)(z_1,y_2,\cdots,y_s,\vec{z})$ is obtained from $T^{s_1+1}_{l_1}(z_1,y_2,\cdots,y_{s_1},u,\vec{z}^{~'})$ and $T^{s_2+1}_{l_2}(z_n,y_2,\cdots,y_{s_1+1},u,\vec{z}^{~''})$ by joining these two trees via the lines going from the vertices $z'$ and $z''$  to $u$, where $z'$ and $z''$ are the vertices attached to $u$ in the two trees. These two lines have parameters $\frac{1+\delta}{2\Lambda^2}$. We use (\ref{sgr}) from Appendix A to obtain
     $$\int_u p_B(\frac{1+\delta}{2\Lambda^2};z',u)\ 
p_B(\frac{1+\delta}{2\Lambda^2};u,z'')\leq p_B(\frac{1+\delta}{\Lambda^2};z',z'')$$
so that the new internal line has a parameter in the interval $\left[\Lambda,\Lambda_0\right]$ over which the sup is taken in the definition of $\mathcal{F}$.
\item When performing the integral over $z_n$, we note that $z_n$ has been viewed as the root vertex of $T^{s_2+1}_{l_2}(z_n,u,y_{s_1+1},\cdots,y_s,\vec{z})$. We distinguish two cases:
\begin{itemize}
    \item If $s=n$, we set $\phi_n(z_n)=p_B(\tau_n;z_n,y_n)$, and $z_n$ becomes an internal vertex and $y_n$ an external vertex of $T^s_l$.
    \item If $s<n$, then $\phi_n(z_n)\equiv 1$ and the vertex $z_n$ becomes an internal vertex of $T^s_l$ unless $c(z_n)\equiv 1$. In this case, we use 
    \begin{equation}\label{zk}
    \int_{z_n}p_B(t_{n,\delta};z_n,z_j)\leq1~.
    \end{equation}
    This integration permits to take away the vertex $z_n$ and the internal line joining it to an internal vertex $z_j$ of the tree $T^{s_2+1}_{l_2}$. If $c(z_j)=2$, let $z_l$ be the other vertex to which $z_j$ is joined. The semi-group property (\ref{sgr}) implies
    $$\int_{z_j}p_B(t_{n,\delta};z_{n},z_{j})~p_B(t_{j,\delta};z_j,z_l)\leq p_B(t_{n,\delta}+t_{j,\delta};z_n,z_l)\ .$$
    This elimination process continues until we reach $c(z_k)>2$. Applying (\ref{zk}) for $z_j=z_k$ takes away the internal vertex $z_n$. If $c(z_k)=3$ in $T^{s_2+1}_{l_2}$ then it becomes equal to $2$ after integrating out $z_n$. Therefore
 a new vertex of incidence number $2$ is created in the new tree, and this is compatible with the definition of $T^s_l$. Namely, let $v_{2,i}$ be the number of vertices with incidence number $2$ of the tree $T^{s_i+1}_{l_i}$. By definition of $T^{s_i+1}_{l_i}$, we have 
    \begin{equation*}
        v_{2,1}+\delta_{c_1,1}\leq 3l_1-2+\frac{s_1+1}{2},~~~v_{2,2}+\delta_{c_s,1}\leq 3l_2-2+\frac{s_2+1}{2}\ .
    \end{equation*}
    Since $c_s:=c(z_s)=1$ we deduce that 
    $$v_{2,2}\leq 3l_2-3+\frac{s_2+1}{2}\ .$$
    The number of vertices $v_2$ with incidence number $2$ in the new tree obtained from $T^s_l\left(T^{s_1+1}_{l_1},T^{s_2+1}_{l_2}\right)$ after integrating out $z_n$ is equal to $v_{2,1}+v_{2,2}+1$ so that 
    \begin{equation*}
        v_{2}+\delta_{c_1,1}\leq 3l-3+\frac{s}{2}~.
    \end{equation*}
\end{itemize}
Therefore we conclude that
$$|R_2|\leq 
\left(\Lambda+m\right)^{3-n-|w|}
\mathcal{P}\!\left(\log \frac{\Lambda+m}{m}\right)
\mathcal{P}\!\left(\frac{\left\|\vec{p}_n\right\|}{\Lambda+m}\right)
\mathcal{Q}\!
\left(\frac{\tau^{-\frac{1}{2}}}{\Lambda+m}\right)\!\sum_{T^{s}_{l}(z_1,y_{2,s})}\!
\mathcal{F}\!\left(\Lambda,\tau;T^{s}_{l};z_1,y_{2,s}\right)~.$$
 \end{itemize}
(B) We consider now the case $r\neq 0$:\\
For the first term on the r.h.s. of the flow equation (\ref{FEL}) the bounds are proven exactly as in ($A_1$). For the second term we proceed similarly as in ($A_2$). We pick a generic term on the r.h.s.,
\begin{eqnarray*}
   \fl \int_u\int_{z_{2,n},z,z'}\phi_{\tau,y_{2,s}}(z_{2,n})\partial^{w_3}
\,\dot{C}^{\Lambda}(p)\ p_R(\frac{1}{2\Lambda^2};z,u)\ 
p_R(\frac{1}{2\Lambda^2};u,z')(z_i-z_1)^r\\ \fl\times\partial^{w_1}
\mathcal{L}^{\Lambda,\Lambda_0}_{l_1,n_1+1}\left((z_1,p_1),\cdots,(z_{n_1},p_{n_1}),(z,p)\right)\ 
\partial^{w_2}\mathcal{L}^{\Lambda,\Lambda_0}_{l_2,n_2+1}\left((z_{n_1+1},p_{n_1+1}),
\cdots,(z_{n},p_{n}),(z',-p)\right)~.
\end{eqnarray*}
In the case where $i\leq n_1$ the proof is the same as for $r=0$, 
up to inserting the modified induction hypothesis for 
\begin{eqnarray*}
\fl\partial^{w_1}\mathcal{L}_{l_1,n_1+1}^{\Lambda,\Lambda_0}
\left(z_1;\vec{p}_{n};(z_i-z_1)^r\phi'_{s_1}\times p_R(\frac{1}{2\Lambda^2};
\cdot,u)\right)\\
\fl\indent
=\int_{z_2,\cdots,z_{n_1},z}\partial^{w_1}
\mathcal{L}^{\Lambda,\Lambda_0}_{l_1,n_1+1}
((z_1,p_1),\cdots,(z_{n_1},p_{n_1}),(z,p))(z_i-z_1)^r
\phi'_{s_1}(z_{2,n_1})p_R(\frac{1}{2\Lambda^2};z,u)~.
\end{eqnarray*}
If $i>n_1$ we assume without restriction $i=n$ and proceed 
again as in ($A_2$) to obtain the bound 
\begin{eqnarray*}
\fl (\Lambda+m)^{3-n-|w|}\mathcal{P}\left(\log \frac{\Lambda+m}{m}\right)
   \mathcal{P}\left(\frac{\left\|\vec{p}_n\right\|}{\Lambda+m}\right)
\mathcal{Q}\left(\frac{\tau^{-\frac{1}{2}}}{\Lambda+m}\right)
    \\ \times \int_{z_n}\int_u |z_n-z_1|^r \sum_{T^{s_1+1}_{l_1},T^{s_2+1}_{l_2}}
\mathcal{F}\left(\Lambda,\left\{\tau',\frac{1}{2\Lambda^2}\right\};
T^{s_1+1}_{l_1};z_1,y_{2,s_1},u\right)\\
     \indent \times \mathcal{F}\left(\Lambda,\left\{\tau'',
\frac{1}{2\Lambda^2}\right\};
T^{s_2+1}_{l_2};z_n,u,y_{s_1+1},\cdots,y_{s(n)}\right)\phi_n(z_n)~.
\end{eqnarray*}
We bound
\begin{equation}\label{dist}
    |z_n-z_1|\leq \sum_{a=1}^q |v_a-v_{a-1}|~,
\end{equation}
where $\left\{v_a\right\}$ are the positions of the internal vertices in the tree $T^s _l(T^{s_1+1}_{l_1+1},T^{s_2+1}_{l_2+1})$ defined as in ($A_2$), 
on the path joining $z_1=v_0$ and $z_n=v_q\,$. Using the inequality (\ref{in1}) from the Appendix A for $\tau=\frac{1+\delta}{\Lambda_a^2}$, we obtain (\ref{borne}). 
Note that the cases $s=n$ and $s<n$ are treated as in $(A_2)$.\\
The previous reasoning holds also for 
$\partial_{\Lambda}F_{12}
\mathcal{L}_{l,n}^{\Lambda,\Lambda_0}(z_1,z_2;\vec{p}_n;\phi_{\tau,y_{2,s}})$, 
where $z_2$ takes the role of $z_n$. 
After absorbing all constants in $\mathcal{P}$, we obtain 
\begin{eqnarray}\label{lam0}
    \fl\left| \partial_{\Lambda}\partial^w \mathcal{L}_{l,n}^{\Lambda,\Lambda_0}(z_1;\vec{p}_n;(z_1-z_i)^r\phi_{\tau,y_{2,s}})\right|\leq \left(\Lambda+m\right)^{3-n-|w|-r}\mathcal{P}\left(\log \frac{\Lambda+m}{m}\right)\mathcal{P}\left(\frac{\left\|\vec{p}_n\right\|}{\Lambda+m}\right)\nonumber\\
\times\ \mathcal{Q}\left(\frac{\tau^{-\frac{1}{2}}}{\Lambda+m}\right)
\mathcal{F}^{\Lambda}_{s,l}(\tau)~,\label{brel1}\\
   \fl \left| \partial_{\Lambda}F_{12} \mathcal{L}_{l,n}^{\Lambda,\Lambda_0}(z_1,z_2;\vec{p}_{n};\phi_{\tau,y_{2,s}})\right|\leq \left(\Lambda+m\right)^{-n}\mathcal{P}\left(\frac{\left\|\vec{p}_n\right\|}{\Lambda+m}\right)\mathcal{P}\left(\log \frac{\Lambda+m}{m}\right) \mathcal{F}^{12}_{s,l}(\Lambda,\tau)~.\label{12}
\end{eqnarray}
\noindent The bounds for (\ref{rel})-(\ref{lolo})
\begin{eqnarray}
    \fl|\partial_{\Lambda}c_l^{\Lambda,\Lambda_0}(z_1)|\leq \left(\Lambda+m\right)^{-1}\mathcal{P}\left(\log \frac{\Lambda+m}{m}\right),~~|\partial_{\Lambda}a_l^{\Lambda,\Lambda_0}(z_1)|\leq \left(\Lambda+m\right)\mathcal{P}\left(\log \frac{\Lambda+m}{m}\right),\label{brel}\\
    \fl |\partial_{\Lambda}b_l^{\Lambda,\Lambda_0}(z_1)|\leq \left(\Lambda+m\right)^{-1}\mathcal{P}\left(\log \frac{\Lambda+m}{m}\right),~~|\partial_{\Lambda}d_l^{\Lambda,\Lambda_0}(z_1)|\leq \left(\Lambda+m\right)^{-1}\mathcal{P}\left(\log \frac{\Lambda+m}{m}\right),\\
    |\partial_{\Lambda}s_l^{\Lambda,\Lambda_0}(z_1)|\leq \mathcal{P}\left(\log \frac{\Lambda+m}{m}\right)\label{bre}
\end{eqnarray}
are obtained on restricting the previous considerations to the case $s=1$, in which all the coordinates $z_{2,n}$ are integrated over with $n=2$ or $n=4$ and the momenta $\vec{p}_n$ set to $\vec{0}$. \\
(C) We come to the bound on $\partial_{\Lambda}\partial^w\mathcal{L}_{l,n}^{\Lambda,\Lambda_0}(z_1;\vec{p}_{n};\phi_{\tau,y_{2,s}}^{(j)})$ (i.e. remember (\ref{phij}-\ref{phij'})) . As compared to ($A$), the only case which requires new analysis is the bound on the second term from the r.h.s. of the FE (\ref{FEL}) in the case $j>s_1$. Then we assume without restriction, similarly as in ($A$), that $j=s$. The term to be bounded corresponding to (\ref{84}) is then 
\begin{eqnarray}\label{98}
    \fl\left(\Lambda+m\right)^{3-n-|w|}\mathcal{P}\left(\log \frac{\Lambda+m}{m}\right)\mathcal{P}\left(\frac{\left\|\vec{p}_n\right\|}{\Lambda+m}\right)\mathcal{Q}\left(\frac{\tau^{-\frac{1}{2}}}{\Lambda+m}\right)\nonumber\\\fl\indent\times\int_u\sum_{T^{s_1+1}_{l_1},T^{s_2+1}_{l_2}}\mathcal{F}\left(\Lambda,\left\{\tau',\frac{1}{2\Lambda^2}\right\};T^{s_1+1}_{l_1};z_1,y_{2,s_1},u\right)\nonumber\\
     \fl \indent \indent  \times e^{-\frac{m^2}{2\Lambda^2}}\int_{z_s}  \mathcal{F}\left(\Lambda,\left\{\tau^{''},\frac{1}{2\Lambda^2}\right\};T^{s_2+1}_{l_2};z_s,u,y_{s_1+1},\cdots,y_{s(n)}\right)|p_R^{(1)}(\tau_s;z_s,z_1;y_s)|~.
\end{eqnarray}
We factorized $e^{-\frac{m^2}{2\Lambda^2}}$ from the derivative of the flowing propagator $\dot{C}^{\Lambda}(p)$, and we will use the bound
\begin{equation}\label{étoile}
\left(\Lambda+m\right)e^{-\frac{m^2}{2\Lambda^2}} \,\leq \,C\ \Lambda~,
~~~C:= \|(1+x)e^{-\frac{x^2}{2}}\|_{\infty}~.
\end{equation}
To bound (\ref{98}) we telescope the difference 
$\,p_R^{(1)}(\tau_s;z_s,z_1;y_s)$ along the tree 
$\,T^s_l(T^{s_1+1}_{l_1},T^{s_2+1}_{l_2})$ 
similarly as in (\ref{dist}). We then have to bound expressions of the type
\begin{equation}\label{stor}
p_B (\frac{1+\delta}{\Lambda_I^2};v_{a-1},v_a)\left
|\,p_R(\tau_s;v_a,y_s)-p_R(\tau_s;v_{a-1},y_s)\right|,
\end{equation}
where $v_{a-1}$, $v_a$ are adjacent internal vertices in $T^s_l(T^{s_1+1}_{l_1},T^{s_2+1}_{l_2})$ on the unique path from $z_1$ to $y_s$. 
Taylor expansion of $p_R(\tau_s;v_a,y_s)$ gives
\begin{equation*}
    p_R(\tau_s;v_a,y_s)=p_R(\tau_s;v_{a-1},y_s)
+\int_0^1 dt ~(\partial_t{p}_R)(\tau_s;tv_{a-1}+(1-t)v_a,y_s)\ .
\end{equation*}
Lemma \ref{lemma2} from Appendix A implies that for all $0<\delta'<1$, we have
\begin{equation*}
    \fl~~~~~~|p_R^{(1)}(\tau_s;v_a,v_{a-1},y_s)|\leq ~C_{1,\delta}~
\frac{|v_a-v_{a-1}|}{\sqrt{\tau_s}}
\int_0^1 dt~ {p}_B(\tau_{s,\delta'};tv_{a-1}+(1-t)v_a,y_s)~.
\end{equation*}
Therefore (\ref{stor}) is bounded by 
\begin{equation}\label{sisi}
     C_{1,\delta}\tau_s^{-\frac{1}{2}}~ |v_a-v_{a-1}|~
p_B(\frac{1+\delta}{\Lambda_I^2};v_{a-1},v_a)
\int_0^1 ~dt~{p}_B((1+\delta')\tau_{s};tv_{a-1}+(1-t)v_a,y_s)~.
\end{equation}
Introducing for $2\delta<1$,
\begin{eqnarray*}
    b=2~\frac{1+2\delta}{1-2\delta}, 
\end{eqnarray*}
we distinguish  between the two cases:
\begin{itemize}
    \item Case (1): $\delta'\Lambda^2\leq b \tau_s^{-1}$\\
    Using (\ref{étoile}), we obtain
    $$
    (\Lambda+m)^{3-n-|w|}e^{-\frac{m^2}{2\Lambda^2}}\leq C\left(\frac{b}{\delta'}\right)^{\frac{1}{2}} (\Lambda+m)^{2-n-|w|}\tau_s^{-\frac{1}{2}}~.$$
The tree $T^s_l(T^{s_1+1}_{l_1},T^{s_2+1}_{l_2})$ is obtained from the two initial trees by joining them via $u$ as in ($A_2$), and we bound 
\begin{equation}
|p_R^{(1)}(\tau_s;z_s,z_1;y_s)|\leq |p_R(\tau_s;z_s,y_s)|+|p_R(\tau_s;z_1,y_s)|~.
\end{equation}
Here $p_R(\tau_s,z_1,y_s)$ is associated to $\phi_s(z_s)\equiv1$, and the integration over 
$z_s$ in $T^s_l(T^{s_1+1}_{l_1},T^{s_2+1}_{l_2})$ 
is performed similarly as in ($A_2$). This implies that for $\delta'\Lambda^2\leq b \tau_s^{-1}$, (\ref{98}) is bounded by 
\begin{eqnarray}\label{bornne1}
     \fl~~~~~\left(\frac{b}{\delta'}\right)^{\frac{1}{2}} (\Lambda+m)^{2-n-|w|}\tau_s^{-\frac{1}{2}}\mathcal{P}\!\left(\log \frac{\Lambda+m}{m}\right)
\mathcal{P}\!\left(\frac{\left\|\vec{p}_n\right\|}{\Lambda+m}\right)
\mathcal{Q}\!\left(\frac{\tau^{-{\frac{1}{2}}}}{\Lambda+m}\right)
\mathcal{F}^{\Lambda}_{s,l}(\tau)~.
\end{eqnarray}
\item Case (2): $\delta'\Lambda^2\geq b\tau_s^{-1}$\\
Using Lemma \ref{lemme1} we obtain the following bound
\begin{eqnarray*}
\fl |v_a-v_{a-1}|~p_B(\frac{1+\delta}{\Lambda_I^2};v_{a-1},v_a)~\int_0^1 ~ dt~{p}_B(\tau_{s,\delta'};tv_{a-1}+(1-t)v_a,y_s)\\
\leq C_{\delta}~\Lambda^{-1} ~p_B(\frac{2}{\Lambda_I^2};v_{a-1},v_a)
p_B((1+\delta')^3{\tau}_s;v_{a-1},y_s)\ ,
\end{eqnarray*}
which implies that (\ref{sisi}) can be bounded by 
\begin{eqnarray*}
\fl ~~~~ C_{\delta}C_{1,\delta}~\Lambda^{-1} \tau_s^{-\frac{1}{2}}
p_B(\frac{2}{\Lambda_I^2};v_{a-1},v_a)
~p_B((1+\delta')^3{\tau}_s;v_{a-1},y_s)~. 
\end{eqnarray*}
Choosing $\delta'$ such that $(1+\delta')^3=1+\delta$, that is $\delta'=\frac{\delta}{3}+O(\delta^2)$ and using the bound (\ref{pbdelta}), the final bound obtained for (\ref{stor}) reads then
\begin{equation}\label{cuelo}
\fl ~~~~C'_{\delta} ~
\tau_s^{-\frac{1}{2}}\Lambda^{-1}\int_0^{\infty}dv
~p_B(\frac{1+\delta}{\Lambda_I^2};v_{a-1},v)
\ p_B(\frac{1+\delta}{\Lambda_I^2};v,v_a)\,
p_B({\tau}_{s,\delta};v_{a-1},y_s)~,
\end{equation}
where we used the property (\ref{rr+}) from Appendix A and $C'_{\delta}=C_{\delta}C_{1,\delta}$.
\end{itemize}
\noindent
Note that the addition of a new internal vertex $v$ of incidence number $2$ in (\ref{cuelo}) is compatible with the inequality $v_2+\delta_{c_1,1}\leq 3l-2+s/2$, since $v'_2$, the number of vertices with incidence number $2$ of the new tree is equal to $v_{2,1}+v_{2,2}+1$, where $v_{2,i}$ is the number of internal vertices of incidence number $2$ of the tree $T^{s_i+1}_{l_i}$.
Therefore 
$$v'_2+\delta_{c_1,1}=v_{2,1}+v_{2,2}+1+\delta_{c_1,1}\leq 3(l_1+l_2)-4+\frac{s_1+s_2+2}{2}+1=3l-2+s/2\ .$$
The Case (2) corresponds to a sum of two new trees of type $\mathcal{T}_l^s\,$, 
where in comparison to $T_l^s\left(T_{l_1}^{s_1+1},T_{l_2}^{s_2+1}\right)$, the incidence number of $v_{a-1}$ has increased by one unit. In (\ref{98}) an integral over $z_s$ is performed. If in the new tree \\
a) $z_s$ has $c(z_s)>1$, then $z_s$ takes the role of an internal vertex of the new tree,\\
b) $z_s$ has $c(z_s)=1$ we integrate over $z_s$ using (\ref{intc}) so that the vertex $z_s$ disappears. 
As a consequence of the bounds (\ref{bornne1}) and (\ref{cuelo}), on replacing again $s\rightarrow j$  we thus obtain for $n\geq 2$
\begin{eqnarray*}
   \fl \left|\partial_{\Lambda}
\partial^w\mathcal{L}_{l,n}^{\Lambda,\Lambda_0}(z_1;p_{1,n};\phi_{\tau,y_{2,s}}^{(j)})\right|
\leq \left(\left(\frac{b}{\delta}\right)^{\frac{1}{2}}+C'_{\delta}\right)\left(\Lambda+m\right)^{2-n-|w|}\tau^{-\frac{1}{2}}_j\\\times
\mathcal{P}\!\left(\log \frac{\Lambda+m}{m}\right)
\mathcal{P}\!\left(\frac{\left\|\vec{p}_n\right\|}{\Lambda+m}\right)
\mathcal{Q}\!\left(\frac{\tau^{-{\frac{1}{2}}}}{\Lambda+m}\right)
\mathcal{F}^{\Lambda}_{s,l}(\tau)~.
\end{eqnarray*}
A possible choice for $\delta$ is $\sqrt{2}-1$. All the constants are absorbed in the polynomial $\mathcal{Q}\,$.\\
(D)  To prove (D) we use (\ref{R2}) and (\ref{R1*})
to show inductively that 
\begin{eqnarray}
\partial_{\Lambda}\partial^w 
\mathcal{L}_{l,n}^{\Lambda,\Lambda_0}
(z_1;p_1,\cdots,p_n;\left(z_1-z_i\right)^r\phi_{\tau,y_{2,s}})\ ,
\label{lrt}\\
~~~~~\partial_{\Lambda}F_{12} 
\mathcal{L}_{l,n}^{\Lambda,\Lambda_0}(z_1,z_2;p_{1,n};\phi_{\tau,y_{2,s}})\label{lrt2}
\end{eqnarray}
are $\mathcal{C}^{\infty}\left(\mathbb{R}^+\right)$ w.r.t. $z_1$. For (\ref{lrt2}), we will integrate from $\Lambda$ to $\Lambda_0$ and for (\ref{lrt}) we integrate from $0$ to $\Lambda$ for $n+|w|+r\leq4$, and from $\Lambda$ to $\Lambda_0$ for $n+|w|+r\geq 5$. The details of these integrations can be deduced from (II,a) and (II,b).\\
\underline{(II) Integration of the FE:}\\
 From the bounds on the derivatives $\partial_{\Lambda}\partial^{w}\mathcal{L}^{\Lambda,\Lambda_0}_{n+2,l-1}(z_1;p_1,\cdots,p_n;\phi_{\tau,y_{2,s}})$ we verify the induction hypothesis on integration over $\Lambda$. In all cases we need the bound
 $$\mathcal{F}_{s,l}^{\Lambda_2}(\tau)\leq \mathcal{F}_{s,l}^{\Lambda_1}(\tau) ~~~~~\mathrm{for}~~\Lambda_1\leq \Lambda_2$$ 
 which follows directly from the definition of
$\mathcal{F}_{s,l}^{\Lambda}(\tau)\,$.\\
\underline{a) Irrelevant terms:}
 \\
 Since 
 $$\partial^w\mathcal{L}^{\Lambda_0,\Lambda_0}_{l,n}((z_1,p_1),\cdots,(z_n,p_n))=0
~~~~~~\forall n+|w|+r\geq 5,$$
integration  from $\Lambda$ to $\Lambda_0$ for $n+|w|+r\geq 5\,$
gives 
 \begin{eqnarray*}
\fl |\partial^{w}\mathcal{L}^{\Lambda,\Lambda_0}_{l,n}(z_1;\vec{p}_n;(z_1-z_i)^r\phi_{\tau,y_{2,s}})|
\\\leq \left(\Lambda+m\right)^{4-n-|w|}\mathcal{P}\left(\log \frac{\Lambda+m}{m}\right)
   \mathcal{P}\left(\frac{\left\|\vec{p}_n\right\|}{\Lambda+m}\right) \mathcal{Q}\left(\frac{\tau^{-\frac{1}{2}}}{\Lambda+m}\right)\mathcal{F}^{\Lambda}_{s,l}\left(\tau\right)~,
\end{eqnarray*}
and
\begin{eqnarray*}
    \fl \left |\partial^w\mathcal{L}_{l,n}^{\Lambda,\Lambda_0}(z_1;\vec{p}_n;\phi_{\tau,y_{2,s}}^{(j)})\right|\nonumber\\\fl \indent ~~~~~~~\leq \left(\Lambda+m\right)^{3-n-|w|}\tau^{-\frac{1}{2}}_j\mathcal{P}_5\left(\log \frac{\Lambda+m}{m}\right)\mathcal{P}_6\left(\frac{\left\|\vec{p}_n\right\|}{\Lambda+m}\right)\mathcal{Q}_2\left(\frac{\tau^{-{\frac{1}{2}}}}{\Lambda+m}\right)\mathcal{F}^{\Lambda}_{s,l}(\tau)\ ,
\end{eqnarray*}
and 
\begin{eqnarray*}
\fl\left| F_{12} \mathcal{L}_{l,n}^{\Lambda,\Lambda_0}(z_1,z_2;\vec{p}_n;\phi_{\tau,y_{2,s}})\right|\leq \left(\Lambda+m\right)^{1-n}\mathcal{P}_3\left(\log \frac{\Lambda+m}{m}\right)\mathcal{P}_4\left(\frac{\left\|\vec{p}_n\right\|}{\Lambda+m}\right) \mathcal{F}^{12}_{s,l}(\Lambda,\tau)\ .
\end{eqnarray*}
 \underline{(b) Relevant terms:}\\
\underline{($b_1$) $n=4,w=0,r=0$:} We start from the decomposition (\ref{f3})
\begin{equation}\label{decoo}
\fl ~~~\mathcal{L}_{l,4}^{\Lambda,\Lambda_0}(z_1;0,\cdots,0;\Phi_{4,s})=c_{l}^{\Lambda,\Lambda_0}(z_1)\Phi_{4,s}(z_1,z_1,z_1)+l_{l,4}^{\Lambda,\Lambda_0}(z_1,\Phi_{4,s})~,~~s\leq 4~,
\end{equation}
 where 
 $$c_{l}^{\Lambda,\Lambda_0}(z_1):=\int_{0}^{\infty}dz_2dz_3dz_4\ 
\mathcal{L}^{\Lambda,\Lambda_0}_{l,4}\left((z_1,0),\cdots,(z_4,0)\right)$$
 and $$\Phi_{4,s}(z_2,z_3,z_4)=\prod_{i=2}^4 \phi_i(z_i),~~~\phi_i(z_i)=p_R\left(\tau_i;z_i,y_i\right)~~\mathrm{if~i\leq s,~~otherwise~~~}\phi_i\equiv1\ .$$
From the renormalization conditions we have for all $l\geq 1$
 $$c_{l}^{0,\Lambda_0}(z_1)\equiv0~.$$
Integrating (\ref{borne}) from $0$ to $\Lambda$
at zero momenta  then gives
 \begin{equation*}
|c_{l}^{\Lambda,\Lambda_0}(z_1)|
\leq \mathcal{P}\left(\log \frac{\Lambda+m}{m}\right)\   .
\end{equation*}
We decompose the test function 
\begin{equation}\label{etoile}
\fl \indent \Phi_{4,s}(z_2,z_3,z_4)
=\prod_{i=2}^s p_R(\tau_i;z_i,y_i)=\Phi_{4,s}(z_1,z_1,z_1)+\psi(z_2,z_3,z_4)\ ,
\end{equation}
where for $s=4$
\begin{equation*}
\fl \indent \psi(z_2,z_3,z_4):=\sum_{j=2}^4\prod^{j-1}_{f=2}p_R(\tau_f;z_1,y_f)p_R^{(1)}(\tau_j;z_j,z_1;y_j)\prod_{i=j+1}^4 p_R(\tau_i;z_i,y_i)=
\sum_{j=2}^4\phi^{(j)}_{\tau,y_{2,s}}(z_{2,4})
\end{equation*}
remembering definition (\ref{phij}).
Note that if $\phi_i\equiv1$ for some $i$ which corresponds to $s<4$, 
then the corresponding contribution to the sum vanishes.\\
Using (\ref{etoile}) and the linearity of 
$\mathcal{L}_{l,n}^{\Lambda,\Lambda_0}$ w.r.t. to the test functions, 
we deduce that
\begin{equation*}
\mathcal{L}_{l,4}^{\Lambda,\Lambda_0}(z_1;0,\cdots,0;\Phi_{4,s})
=c_{l}^{\Lambda,\Lambda_0}(z_1)\Phi_{4,s}(z_1,z_1,z_1)
+\mathcal{L}_{l,4}^{\Lambda,\Lambda_0}(z_1;0,\cdots,0;\psi)\ .
\end{equation*}
Therefore 
we have $l_{l,4}^{\Lambda,\Lambda_0}(z_1;\Phi_{4,s})
=\mathcal{L}_{l,4}^{\Lambda,\Lambda_0}(z_1;0,\cdots,0;\psi)\,$, 
and hence the FE (\ref{FEL}) provides 
\begin{eqnarray}\label{phiij}
   \fl \partial_{\Lambda}l_{l,4}^{\Lambda,\Lambda_0}\left(z_1;\Phi_{4,s}\right)=\frac{1}{2}\int_{z_{2,4},z,z'}  \psi(z_2,z_3,z_4)\\\left[\int_k\mathcal{L}_{l-1,6}^{\Lambda,\Lambda_0}\left((z,k),(z_1,0),\cdots,(z_{4},0),(z',-k)\right)
\dot{C}^{\Lambda}_R(k;z,z')\right.\nonumber\\
    -\frac{1}{2} \sum_{l_1+l_2=l}\sum_{n_1+n_2=4} \left[\mathcal{L}_{l_1,n_1+1}^{\Lambda,\Lambda_0}((z_1,0),\cdots,(z_{n_1},0),(z,0))
\dot{C}^{\Lambda}_R(0;z,z')\right.\nonumber\\\left.\left.\mathcal{L}_{l_2,n_2+1}^{\Lambda,\Lambda_0}((z',0),\cdots,(z_{4},0))\right]_{rsym}\right].
\end{eqnarray}
The r.h.s. is a sum over expressions of the same form as the one for $\partial_{\Lambda}\mathcal{L}_{l,4}^{\Lambda,\Lambda_0}(z_1,\phi_{\tau_{2,s},y_{2,s}}^{(j)})$
 in part (C). We obtain in the same way as there the bound 
\begin{equation*}
    \left|\partial_{\Lambda}l_{l,4}^{\Lambda,\Lambda_0}\left(z_1;\Phi_{4,s}\right)\right|\leq \left(\Lambda+m\right)^{-2}\tau^{-\frac{1}{2}}\mathcal{P}\left(\log \frac{\Lambda+m}{m}\right)\mathcal{Q}\left( \frac{\tau^{-\frac{1}{2}}}{\Lambda+m}\right)\mathcal{F}_{s,l}^{\Lambda}(\tau)~.
\end{equation*}
Integrating from $\Lambda$ to $\Lambda_0$ and majorizing $\left(\lambda+m\right)^{-1}$ by $\left(\Lambda+m\right)^{-1}$ we obtain 
\begin{equation*}
    \left|l_{l,4}^{\Lambda,\Lambda_0}\left(z_1;\Phi_{4,s}\right)\right|\leq \left(\frac{\tau^{-\frac{1}{2}}}{\Lambda+m}\right)\mathcal{P}\left(\log \frac{\Lambda+m}{m}\right)\mathcal{Q}\left( \frac{\tau^{-\frac{1}{2}}}{\Lambda+m}\right)\mathcal{F}_{s,l}^{\Lambda}(\tau)
\end{equation*}
which gives the bound for $l_{l,4}^{\Lambda,\Lambda_0}(z_1,\Phi_{4,s})\,$.
The bound  is extended to general momenta using 
the Taylor formula (\ref{taylorfor}).\\
\underline{($b_2$) $n=2,r=0,w=0$}: We start from the decomposition (\ref{f1})
\begin{eqnarray}\label{sum0}
   \fl \indent \mathcal{L}_{l,2}^{\Lambda,\Lambda_0}\left(z_1;0,0;\phi_2\right)=a_l^{\Lambda,\Lambda_0}(z_1)\phi_2(z_1)-s_l^{\Lambda,\Lambda_0}(z_1)(\partial_{z_1}\phi_2)(z_1)\nonumber\\
    +~d_l^{\Lambda,\Lambda_0}(z_1) (\partial_{z_1}^2\phi_2)(z_1)
    +l_{l,2}^{\Lambda,\Lambda_0}(z_1;\phi_2)~,
\end{eqnarray}
where $\phi_2(z_2):=\phi_{\tau,y_2}(z_2)=p_R(\tau;z_2,y_2)\,$. 
Using  $a_l^{0,\Lambda_0}(z_1),\ s_l^{0,\Lambda_0}(z_1),\ d_l^{0,\Lambda_0}(z_1)\equiv0\,$
we obtain on integration from $0$ to $\Lambda$
\begin{equation}\label{r1}
\fl ~~\left | a_l^{\Lambda,\Lambda_0}(z_1)\right|\leq (\Lambda+m)^2\mathcal{P}\left(\log \frac{\Lambda+m}{m}\right),~~~
    \left | s_l^{\Lambda,\Lambda_0}(z_1)\right|\leq (\Lambda+m)\mathcal{P}\left(\log \frac{\Lambda+m}{m}\right),
\end{equation}
\begin{equation}\label{r3}
    \left | d_l^{\Lambda,\Lambda_0}(z_1)\right|\leq \mathcal{P}\left(\log \frac{\Lambda+m}{m}\right).
\end{equation}
Since $(\partial_{z_1}\phi_2)(z_1)\leq \tau^{-\frac{1}{2}}\phi_2(z_1)$, $(\partial_{z_1}^2\phi_2)(z_1)\leq \tau^{-1}\phi_2(z_1)$ we obtain 
$$\left| s_l^{\Lambda,\Lambda_0}(z_1)(\partial_{z_1}\phi_2)(z_1)\right|\leq (\Lambda+m)^2\left(\frac{\tau^{-\frac{1}{2}}}{\Lambda+m}\right)\mathcal{P}\left(\log \frac{\Lambda+m}{m}\right)\phi_2(z_1)~,$$
$$\left| d_l^{\Lambda,\Lambda_0}(z_1)(\partial_{z_1}^2\phi_2)(z_1)\right|\leq (\Lambda+m)^2\left(\frac{\tau^{-\frac{1}{2}}}{\Lambda+m}\right)^2\mathcal{P}\left(\log \frac{\Lambda+m}{m}\right)\phi_2(z_1)~.$$
For the irrelevant part of the two-point function we have 
\begin{eqnarray}
\fl\partial_{\Lambda}l_{2,l}^{\Lambda,\Lambda_0}(z_1;\phi_2)\nonumber=\int_{0}^{\infty}dz_2\int_0^1 dt~ \frac{(1-t)^2}{2!}\partial_t^3{\phi}_{\tau,y_1}
\left(t z_2+(1-t)z_1\right)\partial_{\Lambda}
\mathcal{L}_{l,2}^{\Lambda,\Lambda_0}\left((z_1,0),(z_2,0)\right) \label{irr}
\\\fl~~~~=\int_{0}^{\infty}dz_2\int_0^1 dt ~\frac{(1-t)^2}{2!}
\frac{\partial_t^3{\phi}_2\left(tz_2+(1-t)z_1\right)}{(z_2-z_1)^3}
\partial_{\Lambda}F_{12}
\mathcal{L}_{l,2}^{\Lambda,\Lambda_0}\left((z_1,0),(z_2,0)\right).~\label{121}
\end{eqnarray}
The bound (\ref{12}) for $n=2,~r=0$ yields 
\begin{eqnarray*}
\fl \left|\partial_{\Lambda}l_{l,2}^{\Lambda,\Lambda_0}(z_1;\phi_2)\right|\leq  \left(\Lambda+m\right)^{-2}\mathcal{P}\left(\log \frac{\Lambda+m}{m}\right)\\ \times\int_{0}^{\infty}dz_2~\mathcal{F}^{12}_{s,l}(\Lambda,\tau)\int_0^1 dt \frac{(1-t)^2}{2!}\frac{\partial_t^3p_R\left(\tau;tz_2+(1-t)z_1,y_2\right)}{(z_2-z_1)^3}~.
\end{eqnarray*}
Using Lemma \ref{lemma2} we obtain 
\begin{eqnarray*}
\fl\left| \partial_{\Lambda}l_{2,l}^{\Lambda,\Lambda_0}(z_1;\phi_2)\right|\leq  ~O(1)~\left(\Lambda+m\right)^{-2}\tau^{-\frac{3}{2}}\mathcal{P}\left(\log \frac{\Lambda+m}{m}\right) \\
\times \int_{0}^{\infty}dz_2~\mathcal{F}^{12}_{s,l}(\Lambda,\tau)\int_0^1 dt ~p_R\left(\tau_{\delta'};tz_2+(1-t)z_1,y_2\right)~.
\end{eqnarray*}
Remembering (\ref{15}) and (\ref{1996}) we have 
\begin{eqnarray}
   \fl \mathcal{F}^{(12)}_{2,l}(\Lambda,\tau)
\noindent&=\mathcal{F}_{2,l}(\Lambda;z_1,z_2)
=\sum_{T^{2,(12)}}\mathcal{F}_{2,l}(\Lambda;T^{2,(12)}_l;z_1,z_2)\nonumber\\
     &=\sum_{n=1}^{3l-2}
\sup_{\left\{\Lambda_{I_{\nu}}|\Lambda\leq \Lambda_{I_{\nu}}\leq \Lambda_0\right\}}
\Biggl[\prod_{1\leq \nu \leq n}\int_{\tilde{z}_{\nu}}\Biggr]
p_B(\frac{1+\delta}{\Lambda_{I_1}^2};z_1,\tilde{z}_1)\cdots 
p_B(\frac{1+\delta}{\Lambda_{I_n}^2};\tilde{z}_{n},{z}_2)\label{ri}\\
    &\leq\sum_{n=1}^{3l-2}
\sup_{\left\{\Lambda_{I_{\nu}}|\Lambda\leq \Lambda_{I_{\nu}}\leq \Lambda_0\right\}} 
p_B(\frac{1+\delta}{\Lambda_{n}^2};z_1,z_2),\nonumber
\end{eqnarray}
where $\Lambda_n:=\left(\sum_{\nu=1}^n\Lambda_{I_{\nu}}^{-2}\right)^{-\frac{1}{2}}$. 
The sum (\ref{ri}) stems from the fact that the double rooted trees 
have all internal vertices with incidence number $2$ only. 
This  number $v_2$ is constrained by the relation 
$v_2+\delta_{c_1,1}\leq 3l-2+1/2$ from the definitions (iii)-(iv) 
of subsection \ref{ts}.\\
 Using Lemma \ref{lemme1} together with
 \begin{equation*}
     p_R(\tau_{\delta'};t z_2+(1-t)z_1,y_2)\leq 2~p_B(\tau_{\delta'};t z_2+(1-t)z_1,y_2)~.
 \end{equation*}
For $\delta'\Lambda^2
\geq  b\tau^{-1}$, we obtain
 \begin{equation}\label{120}
\fl~~~~\left|\partial_{\Lambda}l_{2,l}^{\Lambda,\Lambda_0}(z_1;\phi_2)\right|
\leq 
\tau^{-\frac{3}{2}}\left(\Lambda+m\right)^{-2}
\mathcal{P}\left(\log \frac{\Lambda+m}{m}\right)
   p_B((1+\delta')^3\tau;z_1,y_2)\ .
\end{equation}
As before, we choose $\delta'=\frac{\delta}{3}+O(\delta^2)$ such that $(1+\delta')^3=1+\delta$, which implies that
 \begin{equation}\label{122}
\fl~~~~\left|\partial_{\Lambda}l_{2,l}^{\Lambda,\Lambda_0}(z_1;\phi_2)\right|
\leq 
\tau^{-\frac{3}{2}}\left(\Lambda+m\right)^{-2}
\mathcal{P}\left(\log \frac{\Lambda+m}{m}\right)
   p_B(\tau_{\delta};z_1,y_2)\ .
\end{equation}
For $\delta'\Lambda^2\leq  b\tau^{-1}$ we
use 
\begin{eqnarray*}
    \fl \partial_{\Lambda}l_{l,2}^{\Lambda,\Lambda_0}(z_1;\phi_2)
=\partial_{\Lambda}\mathcal{L}_{l,2}^{\Lambda,\Lambda_0}\left(z_1;0,0;\phi_2\right)
-\partial_{\Lambda}a_l^{\Lambda,\Lambda_0}(z_1)\phi_2(z_1) 
- \partial_{\Lambda}s_l^{\Lambda,\Lambda_0}(z_1)(\partial_{z_1}\phi_2)(z_1)\\
\indent-\partial_{\Lambda}d_l^{\Lambda,\Lambda_0}(z_1) (\partial_{z_1}^2\phi_2)(z_1)
\end{eqnarray*}
and the bounds (\ref{lam0}) and  (\ref{brel}) to (\ref{bre}) to obtain
\begin{equation}\label{123}
\fl~~~~\left|\partial_{\Lambda}l_{2,l}^{\Lambda,\Lambda_0}(z_1;\phi_2)\right|
\leq \left(\Lambda+m\right)\mathcal{P}\left(\log \frac{\Lambda+m}{m}\right)\mathcal{Q}\left(\frac{\tau^{-\frac{1}{2}}}{\Lambda+m}\right)
   p_B(\tau;z_1,y_2)\ .
\end{equation}
Since $\delta'\Lambda^2\leq b\tau^{-1}\,$ we have
\begin{eqnarray}\label{1222}
    \fl\left|\partial_{\Lambda}l_{l,2}^{\Lambda,\Lambda_0}(z_1;\phi_2)\right|
\leq\max\Bigl(m^3,\tau^{-\frac{3}{2}}\left(\frac{b}{\delta'}\right)^{\frac{3}{2}}\Bigr)
\left(\Lambda+m\right)^{-2}\nonumber\\\mathcal{P}\left(\log \frac{\Lambda+m}{m}\right)\mathcal{Q}\left(\frac{\tau^{-\frac{1}{2}}}{\Lambda+m}\right)
   p_B({\tau};z_1,y_2)\ .
\end{eqnarray}
Combining (\ref{122}) and (\ref{1222}) and using the bound (\ref{pbdelta}), we obtain for all $\Lambda\geq 0$,
\begin{eqnarray}
    \fl\left|\partial_{\Lambda}l_{l,2}^{\Lambda,\Lambda_0}(z_1;\phi_2)\right|
\leq\max\Bigl(m^3,\tau^{-\frac{3}{2}}\left(\frac{b}{\delta'}\right)^{\frac{3}{2}}\Bigr)
\left(\Lambda+m\right)^{-2}\nonumber\\\mathcal{P}\left(\log \frac{\Lambda+m}{m}\right)\mathcal{Q}\left(\frac{\tau^{-\frac{1}{2}}}{\Lambda+m}\right)
   p_B({\tau}_{\delta};z_1,y_2)\ .
\end{eqnarray}
Integrating from $\Lambda$ to $\Lambda_0$ gives 
\begin{eqnarray*}
\fl\left|l_{2,l}^{\Lambda,\Lambda_0}(z_1,\phi_2)\right|&\leq  
\max \left(m^3,\tau^{-\frac{3}{2}}\left(\frac{b}{\delta'}\right)^{\frac{3}{2}}\right)(\Lambda+m)^{-1}
\mathcal{P}\left(\log \frac{\Lambda+m}{m}\right)\mathcal{Q}\left(\frac{\tau^{-\frac{1}{2}}}{\Lambda+m}\right)
   p_B({\tau}_{\delta};z_1,y_2)\\
   &\leq \frac{\max \left(m^3,\tau^{-\frac{3}{2}}\left(\frac{b}{\delta'}\right)^{\frac{3}{2}}\right)}
{(\Lambda+m)^3}(\Lambda+m)^2\mathcal{P}\left(\log \frac{\Lambda+m}{m}\right)\mathcal{Q}\left(\frac{\tau^{-\frac{1}{2}}}{\Lambda+m}\right)
  p_B({\tau}_{\delta};z_1,y_2)\\
   &\leq (\Lambda+m)^2\mathcal{P}\left(\log \frac{\Lambda+m}{m}\right)
   \tilde{\mathcal{Q}}\left(\frac{\tau^{-\frac{1}{2}}}{\Lambda+m}\right)\ 
p_B({\tau}_{\delta};z_1,y_2)~,
\end{eqnarray*}
where $\tilde{\mathcal{Q}}(x)=(1+x^3)\mathcal{Q}(x)$. Again, all the constants were absorbed in the coefficients of $\mathcal{P}$. This 
concludes the proof for $n=2, r=0$ and $w=0$.\\
To establish the bounds on $\partial^w 
\mathcal{L}_{l,2}^{\Lambda,\Lambda_0}(z_1;p_{1,n};(z_1-z_2)^r\phi_2)$ 
for $r=1, 2;~w=0$ and $r=0;~w=2$,  we expand the respective test 
functions as follows,
\begin{eqnarray*}
    \mathcal{L}_{l,2}^{\Lambda,\Lambda_0}(z_1;0,0;(z_1-z_2)\phi_2)
=-s_l^{\Lambda,\Lambda_0}(z_1)\phi_2(z_1)+
d_l^{\Lambda,\Lambda_0}(z_1) (\partial_{z_1}\phi_2)(z_1)\\
+\int_0^{\infty}dz_2\int_0^1 dt (1-t)\partial_t^2{\phi_2}
\left(t z_2+(1-t)z_1\right)(z_1-z_2)
\mathcal{L}_{l,2}^{\Lambda,\Lambda_0}\left((z_1,0),(z_2,0)\right),
\end{eqnarray*}
\begin{eqnarray*}
    \mathcal{L}_{l,2}^{\Lambda,\Lambda_0}(z_1;0,0;(z_1-z_2)^2\phi_2)
=d_l^{\Lambda,\Lambda_0}(z_1)\phi_2(z_1)
    \\+\int_0^{\infty}dz_2\int_0^1 dt~ \partial_t{\phi_2}\left(t z_2+(1-t)z_1
\right)(z_1-z_2)^2\mathcal{L}_{l,2}^{\Lambda,\Lambda_0}\left((z_1,0),(z_2,0)\right),
\end{eqnarray*}
\begin{eqnarray*}\label{f2}
    \left(\partial_{p^2}\mathcal{L}_{l,2}^{\Lambda,\Lambda_0}\right)
\left(z_1;0,0;\phi_2\right)=b_l^{\Lambda,\Lambda_0}(z_1)\phi_2(z_1)
    \\+\int_{0}^{\infty}dz_2\int_0^1 dt \frac{(1-t)^2}{2!}
\partial_t^3{\phi_2}\left(t z_2+(1-t)z_1\right)
\partial_{p^2}\left(\mathcal{L}_{l,2}^{\Lambda,\Lambda_0}
\left((z_1,p),(z_2,-p)\right)\right)_{|_{p=0}}.
\end{eqnarray*}
The relevant terms have been bounded in (\ref{r1})-(\ref{r3}), 
and the reminders are treated as 
$l_{2,l}^{\Lambda,\Lambda_0}(z_1;\phi_2)\,$. We  obtain
\begin{eqnarray*}
    \left| \mathcal{L}_{l,2}^{\Lambda,\Lambda_0}(z_1;0;(z_1-z_2)\phi_{2})\right|
\leq \left(\Lambda+m\right)\mathcal{P}\left(\log \frac{\Lambda+m}{m}\right)
\mathcal{Q}\left(\frac{\tau^{-\frac{1}{2}}}{\Lambda+m}\right)
\mathcal{F}^{\Lambda}_{s,l}(\tau)~,\\
\left| \mathcal{L}_{l,2}^{\Lambda,\Lambda_0}(z_1;0;(z_1-z_2)^2\phi_{2})\right|
\leq \mathcal{P}\left(\log \frac{\Lambda+m}{m}\right)
\mathcal{Q}\left(\frac{\tau^{-\frac{1}{2}}}{\Lambda+m}\right)
\mathcal{F}^{\Lambda}_{s,l}(\tau)~,\\
\left| \partial^2_{p}\mathcal{L}_{l,2}^{\Lambda,\Lambda_0}(z_1;0;\phi_{2})\right|
\leq \mathcal{P}\left(\log \frac{\Lambda+m}{m}\right)
\mathcal{Q}\left(\frac{\tau^{-\frac{1}{2}}}{\Lambda+m}\right)
\mathcal{F}^{\Lambda}_{s,l}(\tau)~.
\end{eqnarray*}
The extension to general momenta is performed using the Taylor 
expansion of $\mathcal{L}_{l,2}^{\Lambda,\Lambda_0}(z_1;0,0;(z_1-z_2)^r\phi_{2})$ 
for $r=0,1,2$ w.r.t. the variable $p\in \mathbb{R}^3$. \\
Finally, note that for
\begin{equation*}
    \phi_{2}^{(2)}(z_2)=p_R(\tau;z_2,y_2)-p_R(\tau;z_1,y_2)=\phi_2(z_2)-\phi_2(z_1)\ ,
\end{equation*}
we have 
\begin{eqnarray}
   \fl ~~ \mathcal{L}_{l,2}^{\Lambda,\Lambda_0}
\left(z_1;0,0;\phi_{2}^{(2)}\right)
=s_l^{\Lambda,\Lambda_0}(z_1)(\partial_{z_1}\phi_2)(z_1)
    +d_l^{\Lambda,\Lambda_0}(z_1) (\partial_{z_1}^2\phi_2)(z_1)
    +l_{l,2}^{\Lambda,\Lambda_0}(z_1;0,0;\phi_2)~.
\end{eqnarray}
Proceeding again similarly as before - 
see (\ref{121}), (\ref{122}) and (\ref{123}) - provides
\begin{equation*}
    \left |\mathcal{L}_{l,2}^{\Lambda,\Lambda_0}(z_1;0,0;\phi_{2}^{(2)})\right|
\leq \left(\Lambda+m\right)\tau^{-\frac{1}{2}}
\mathcal{P}\left(\log \frac{\Lambda+m}{m}\right)
\mathcal{Q}\left(\frac{\tau^{-{\frac{1}{2}}}}{\Lambda+m}\right)
\mathcal{F}^{\Lambda}_{s,l}(\tau)\ .
\end{equation*}
The extension to general momenta is done by Taylor expansion. 
This ends the proof of Theorem 1.
\end{proof}
\begin{theorem}(Convergence)
Let $0\leq\Lambda\leq \Lambda_0<\infty$. Using the same notations, conventions and adopting the same renormalization conditions (\ref{renoc}) as in Theorem 1,
 we have the following bounds
\begin{eqnarray}\label{cc1}
    \fl \left| \partial_{\Lambda_0}\partial^w \mathcal{L}_{l,n}^{\Lambda,\Lambda_0}(z_1;\vec{p}_n;\left(z_1-z_i\right)^r\phi_{\tau,y_{2,s}})\right|\nonumber\leq \frac{\left(\Lambda+m\right)^{5-n-|w|-r}}{\left(\Lambda_0+m\right)^2}{\tilde\mathcal{P}}_1\left(\log \frac{\Lambda_0+m}{m}\right)\\\times\tilde{\mathcal{P}}_2\left(\frac{\left\|\vec{p}_n\right\|}{\Lambda+m}\right) \tilde{\mathcal{Q}}_1\left(\frac{\tau^{-\frac{1}{2}}}{\Lambda+m}\right)\mathcal{F}^{\Lambda}_{s,l}(\tau)~~~~\forall 
n+|w|+r\geq 4\ ,
\end{eqnarray}
\begin{eqnarray}\label{ccc1}
    \fl \left| \partial_{\Lambda_0}\partial^w \mathcal{L}_{l,2}^{\Lambda,\Lambda_0}(z_1;p,-p;\phi_{\tau,y_{2}})\right|\nonumber\\\leq \frac{\left(\Lambda+m\right)^{3-|w|}}{\left(\Lambda_0+m\right)^2}\tilde{\mathcal{P}}_3\left(\log \frac{\Lambda_0+m}{m}\right)\tilde{\mathcal{P}}_4\left(\frac{\left\|\vec{p}_n\right\|}{\Lambda+m}\right) \tilde{\mathcal{Q}}_2\left(\frac{\tau^{-\frac{1}{2}}}{\Lambda+m}\right)\mathcal{F}^{\Lambda}_{s,l}(\tau)\ ,
\end{eqnarray}
\begin{eqnarray}\label{cc2}
    \fl \left| \partial_{\Lambda_0} F_{12} \mathcal{L}_{l,n}^{\Lambda,\Lambda_0}(z_1,z_2;\vec{p}_n;\phi_{\tau,y_{2,s}})\right|\nonumber\\\leq \frac{\left(\Lambda+m\right)^{2-n}}{\left(\Lambda_0+m\right)^2}
\tilde{\mathcal{P}}_5\!\left(\log \frac{\Lambda_0+m}{m}\right)
\tilde{\mathcal{P}}_6\!\left(\frac{\left\|\vec{p}_n\right\|}{\Lambda+m}\right)
 \mathcal{F}^{12}_{s,l}(\Lambda,\tau)~~~~\forall n\geq 2\ ,\quad
\end{eqnarray}
\begin{eqnarray}\label{cc3}
    \fl \left |\partial_{\Lambda_0}\mathcal{L}_{l,n}^{\Lambda,\Lambda_0}(z_1;\vec{p}_n;\phi_{\tau,y_{2,s}}^{(j)})\right|\leq \frac{\left(\Lambda+m\right)^{4-n}}{\left(\Lambda_0+m\right)^2}\tau^{-\frac{1}{2}}_j\tilde{\mathcal{P}}_7\left(\log \frac{\Lambda_0+m}{m}\right)\nonumber\\\times\tilde{\mathcal{P}}_8\left( \frac{\left\|\vec{p}_n\right\|}{\Lambda+m}\right)\tilde{\mathcal{Q}}_3\left(\frac{\tau^{-{\frac{1}{2}}}}{\Lambda+m}\right)\mathcal{F}^{\Lambda}_{s,l}(\tau)~~~~\forall 
n\geq 4\ .
\end{eqnarray}
\end{theorem}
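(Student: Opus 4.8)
The plan is to differentiate the flow equation (\ref{FEL}) with respect to the ultraviolet cutoff $\Lambda_0$ and then run the same inductive scheme as in Theorem \ref{thm}: upwards in $l$, for fixed $l$ upwards in $n$, and for fixed $(l,n)$ downwards in $|w|$ starting from some $|w_{\max}|\ge 3$. The decisive simplification is that, by (\ref{p3}), the single-scale propagator $\dot{C}^{\Lambda}_R$ does not depend on $\Lambda_0$ at all, since its definition involves the upper integration endpoint $1/\Lambda^2$ and never $\Lambda_0$; hence $\partial_{\Lambda_0}\dot{C}^{\Lambda}_R=0$. Consequently $\partial_{\Lambda_0}$ acts only on the CAS factors, and $\partial_{\Lambda_0}\partial^w\mathcal{L}^{\Lambda,\Lambda_0}_{l,n}$ obeys a \emph{linear} flow equation of exactly the same shape as (\ref{FEL}): the loop term becomes $\frac12\int\partial_{\Lambda_0}\mathcal{L}^{\Lambda,\Lambda_0}_{l-1,n+2}\,\dot{C}^{\Lambda}_R$, while Leibniz applied to the quadratic term produces two contributions, in each of which exactly one factor carries the $\partial_{\Lambda_0}$ and the other is an undifferentiated CAS.

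First I would bound the right-hand side of this linear equation. For the undifferentiated factors I insert the bounds (\ref{c1})--(\ref{c3}) of Theorem \ref{thm}, and for the $\partial_{\Lambda_0}$-factors the induction hypotheses (\ref{cc1})--(\ref{cc3}). Since the kernels $\dot{C}^{\Lambda}_R$, the momentum integrations, the auxiliary $u$-integrations and the tree reductions are literally those of part (I) of the proof of Theorem \ref{thm}, the entire combinatorial and tree-weight analysis --- including the semigroup estimate (\ref{sgr}), the telescoping (\ref{dist}) and the two-case discussion --- transcribes verbatim. Only the power counting changes: replacing one undifferentiated factor by a $\partial_{\Lambda_0}$-factor trades a bound $(\Lambda+m)^{4-\cdots}$ for $(\Lambda+m)^{5-\cdots}(\Lambda_0+m)^{-2}$, i.e.\ it multiplies the Theorem \ref{thm} estimate by the universal factor $(\Lambda+m)(\Lambda_0+m)^{-2}$. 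Carrying this through (\ref{borne}) gives $|\partial_{\Lambda}\partial_{\Lambda_0}\partial^w\mathcal{L}^{\Lambda,\Lambda_0}_{l,n}(z_1;\vec{p}_n;(z_1-z_i)^r\phi_{\tau,y_{2,s}})|\le(\Lambda_0+m)^{-2}(\Lambda+m)^{4-n-|w|-r}\,\tilde{\mathcal{P}}(\log\frac{\Lambda_0+m}{m})\,\mathcal{P}(\frac{\|\vec{p}_n\|}{\Lambda+m})\,\mathcal{Q}(\frac{\tau^{-1/2}}{\Lambda+m})\,\mathcal{F}^{\Lambda}_{s,l}(\tau)$, where $\log\frac{\Lambda+m}{m}\le\log\frac{\Lambda_0+m}{m}$ has been used to collect all logarithms into a single $\tilde{\mathcal{P}}(\log\frac{\Lambda_0+m}{m})$; the $\phi^{(j)}_{\tau,y_{2,s}}$ and $F_{12}$ versions are identical.

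It remains to integrate with the correct boundary data. For the irrelevant sector $n+|w|+r\ge 5$ the boundary sits at $\Lambda=\Lambda_0$: since $\partial^w\mathcal{L}^{\Lambda_0,\Lambda_0}_{l,n}\equiv 0$ as a function of $\Lambda_0$ by (\ref{ii}), the chain rule yields $[\partial_{\Lambda_0}\partial^w\mathcal{L}^{\Lambda,\Lambda_0}_{l,n}]_{\Lambda=\Lambda_0}=-[\partial_{\Lambda}\partial^w\mathcal{L}^{\Lambda,\Lambda_0}_{l,n}]_{\Lambda=\Lambda_0}$, which I bound by (\ref{borne}) at $\Lambda=\Lambda_0$, namely by $(\Lambda_0+m)^{3-n-|w|-r}(\cdots)$. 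Integrating $\partial_{\Lambda}\partial_{\Lambda_0}\mathcal{L}$ downward from $\Lambda_0$ to $\Lambda$ and using that the exponent $4-n-|w|-r\le -1$ makes the integral converge at its lower endpoint reproduces $(\Lambda+m)^{5-n-|w|-r}(\Lambda_0+m)^{-2}$; crucially the boundary value equals the target at $\Lambda=\Lambda_0$ and is dominated by it for $\Lambda<\Lambda_0$ because that exponent is non-positive. For the relevant/marginal sector $n+|w|+r\le 4$ the boundary sits at $\Lambda=0$, where the cutoff-independent BPHZ conditions (\ref{renoc}) force $\partial_{\Lambda_0}a_l,\partial_{\Lambda_0}s_l,\partial_{\Lambda_0}d_l,\partial_{\Lambda_0}b_l,\partial_{\Lambda_0}c_l$ to vanish; integrating upward from $0$ the inhomogeneity, which already carries $(\Lambda_0+m)^{-2}$ from the induction hypothesis, produces the marginal factor $(\Lambda+m)^{5-n-|w|-r}$ with at most logarithmic enhancement. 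These relevant pieces are extracted through the Taylor-subtraction machinery of part (II,b) of Theorem \ref{thm} (the decompositions (\ref{f1})--(\ref{f3}), the reminders such as (\ref{loup}), and the telescoping for $\phi^{(j)}_{\tau,y_{2,s}}$), and the auxiliary bounds (\ref{ccc1}), (\ref{cc2}), (\ref{cc3}) follow as in parts (B) and (C) there.

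I expect the main obstacle to be the treatment of the irrelevant boundary terms at $\Lambda=\Lambda_0$. Everywhere else the factor $(\Lambda_0+m)^{-2}$ is merely transported through the flow by the induction hypothesis, but at the ultraviolet endpoint it must be \emph{generated}, and this hinges on the identity $[\partial_{\Lambda_0}\mathcal{L}]_{\Lambda=\Lambda_0}=-[\partial_{\Lambda}\mathcal{L}]_{\Lambda=\Lambda_0}$ together with the single extra dimensional power supplied by the additional $\partial_{\Lambda}$ in (\ref{borne}). Verifying that this boundary contribution is consistently dominated by the claimed bound for all $\Lambda\le\Lambda_0$, in every sector and for each of the three families of test functions, is where the bookkeeping must be done most carefully. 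Once this is in place the bounds are uniform in $\Lambda_0$ and, for fixed $\Lambda$, integrable in $\Lambda_0$ near infinity (as $\int^{\infty}(\Lambda_0+m)^{-2}\tilde{\mathcal{P}}(\log\frac{\Lambda_0+m}{m})\,d\Lambda_0<\infty$), which yields the Cauchy criterion and hence the convergence of $\partial^w\mathcal{L}^{\Lambda,\Lambda_0}_{l,n}$ as $\Lambda_0\to\infty$.
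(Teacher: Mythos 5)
Your proposal reproduces the paper's own proof essentially step by step: you differentiate the integrated flow equation in $\Lambda_0$ using $\partial_{\Lambda_0}\dot{C}^{\Lambda}_R=0$, bound the resulting linear equation by combining Theorem 1 with the induction hypotheses, generate the factor $\left(\Lambda_0+m\right)^{-2}$ in the irrelevant sector from the endpoint term at $\Lambda=\Lambda_0$ (your chain-rule identity is precisely the paper's boundary contribution $\partial^w\mathcal{R}^{\Lambda_0,\Lambda_0}_{l,n}$, estimated via the Theorem 1 bound and dominated by the target since $n+|w|+r>4$), and treat the relevant sector via the $\Lambda_0$-independent renormalization conditions at $\Lambda=0$ together with the Taylor-subtraction machinery of part (II,b). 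This is correct and matches the paper's argument in both structure and detail.
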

\begin{proof}
We apply the method of the previous proof. The case $n=4$, $l=0$ evidently satisfies the claim (\ref{cc1}). We integrate the system of flow equations (\ref{FEL}) and derive the individual $n$-point folded distributions (\ref{sh1}) w.r.t. $\Lambda_0$. We denote the r.h.s. of (\ref{FEL}) by $\partial^w \mathcal{R}^{\Lambda,\Lambda_0}_{l,n}\left((z_1,p_1),\cdots,(z_n,p_n)\right)$.
 We bound separately the relevant and the irrelevant terms.
\begin{itemize}
    \item \underline{I) $n+r+|w|>4$:}\\
 In these cases the boundary condition (\ref{loup}) imply
    \begin{equation*}
        \fl -\partial^w 
\mathcal{L}_{l,n}^{\Lambda,\Lambda_0}
\left(z_1;\vec{p}_n;(z_1-z_i)^r\phi_{\tau,y_{2,s}}\right)
=\int_{\Lambda}^{\Lambda_0}d\lambda 
\ \partial^w \mathcal{R}^{\lambda,\Lambda_0}_{l,n}
\left(z_1;\vec{p}_n;(z_1-z_i)^r\phi_{\tau,y_{2,s}}\right)\ .
    \end{equation*}
Therefore
\begin{eqnarray}\label{loupe}
\fl -\partial_{\Lambda_0}\partial^w \mathcal{L}_{l,n}^{\Lambda,\Lambda_0}\left(z_1;\vec{p}_n;(z_1-z_i)^r\phi_{\tau,y_{2,s}}\right)=\partial^w \mathcal{R}^{\Lambda_0,\Lambda_0}_{l,n}\left(z_1;\vec{p}_n;(z_1-z_i)^r\phi_{\tau,y_{2,s}}\right)\nonumber\\
+\int_{\Lambda}^{\Lambda_0}d\lambda \ \partial_{\Lambda_0}\partial^w \mathcal{R}^{\lambda,\Lambda_0}_{l,n}\left(z_1;\vec{p}_n;(z_1-z_i)^r\phi_{\tau,y_{2,s}}\right)\ .
\end{eqnarray}
To the first term on the r.h.s. only the non-linear part 
on the r.h.s. of (\ref{FEL}) contributes 
because of the boundary condition (\ref{loup}). 
Using Theorem $1$ we obtain as before the bound
\begin{eqnarray*}
\fl \left|\partial^w \mathcal{R}^{\Lambda_0,\Lambda_0}_{l,n}
\left(z_1;\vec{p}_n;(z_1-z_i)^r\phi_{\tau,y_{2,s}}\right)\right|\\
\leq \left(\Lambda_0+m\right)^{3-n-|w|-r}
\mathcal{P}\left(\log\frac{\Lambda_0+m}{m}\right)
\mathcal{P}\left(\frac{\left\|\vec{p}_n\right\|}{\Lambda+m}\right)
\mathcal{Q}\left(\frac{\tau^{-\frac{1}{2}}}{\Lambda+m}\right)
\mathcal{F}_{s,l}^{\Lambda}\left(\tau\right)\ .\nonumber
\end{eqnarray*}
Since $n+|w|+r>4$, we obtain for all $0\leq \Lambda\leq \Lambda_0$ 
\begin{eqnarray*}
\fl \left|\partial^w \mathcal{R}^{\Lambda_0,\Lambda_0}_{l,n}\left(z_1;\vec{p}_n;(z_1-z_i)^r\phi_{\tau,y_{2,s}}\right)\right|\\
\leq \frac{\left(\Lambda+m\right)^{5-n-|w|-r}}{\left(\Lambda_0+m\right)^2}\mathcal{P}\left(\log\frac{\Lambda_0+m}{m}\right)\mathcal{P}\left(\frac{\left\|\vec{p}_n\right\|}{\Lambda+m}\right)\mathcal{Q}\left(\frac{\tau^{-\frac{1}{2}}}{\Lambda+m}\right)\mathcal{F}_{s,l}^{\Lambda}\left(\tau\right)\ .\nonumber
\end{eqnarray*}
For the second term on the r.h.s. of (\ref{loupe}) we obtain  
\begin{eqnarray}
\fl \partial_{\Lambda_0}\partial^w 
\mathcal{R}^{\Lambda,\Lambda_0}_{l,n}\left((z_1,p_1),\cdots,(z_n,p_n)\right)\\
\fl=\frac{1}{2}
\int_z \int_{z'}
\int_k \partial_{\Lambda_0}
\partial^w\mathcal{L}_{l-1,n+2}^{\Lambda,\Lambda_0}
\left((z_1,p_1),\cdots,(z_n,p_n),(z,k),(z',-k)\right)
\dot{C}^{\Lambda}_R(p;z,z')\\
   \fl \indent -\frac{1}{2}\int_z \int_{z'} \sum_{l_1,l_2}'
\sum_{n_1,n_2}'\sum_{w_i}c_{w_i} \left[\partial_{\Lambda_0}\partial^{w_1}
\mathcal{L}_{l_1,n_1+1}^{\Lambda,\Lambda_0}((z_1,p_1),\cdots,(z_{n_1},p_{n_1}),(z,p))\ 
\partial^{w_3}\dot{C}^{\Lambda}_R(p;z,z')\right.\nonumber\\
\left.
\times\ 
\partial^{w_2}
\mathcal{L}_{l_2,n_2+1}^{\Lambda,\Lambda_0}((z',-p),\cdots,(z_{n},p_{n}))
\right]_{rsym}\nonumber\\
\fl \indent -\frac{1}{2}\int_z \int_{z'} \sum_{l_1,l_2}'
\sum_{n_1,n_2}'\sum_{w_i}c_{w_i} \left[\partial^{w_1}
\mathcal{L}_{l_1,n_1+1}^{\Lambda,\Lambda_0}((z_1,p_1),\cdots,(z_{n_1},p_{n_1}),(z,p))
\partial^{w_3}\dot{C}^{\Lambda}_R(p;z,z')
\right.\nonumber\\\left.
\times\
\partial_{\Lambda_0}\partial^{w_2}
\mathcal{L}_{l_2,n_2+1}^{\Lambda,\Lambda_0}((z',-p),\cdots,(z_{n},p_{n}))
\right]_{rsym},\nonumber\\
    p=-p_1-\cdots-p_{n_1}=p_{n_1+1}+\cdots+p_n\nonumber,
\end{eqnarray}
where we used that 
$\,\partial_{\Lambda_0}\dot{C}_R^{\Lambda}\left(k;z,z'\right)=0\,$.
Using Theorem 1 and the induction hypothesis (\ref{cc1}), 
and following the same steps as in the proof of Theorem 1 we get
\begin{eqnarray*}
\fl \left|\partial_{\Lambda_0}\partial^w 
\mathcal{R}^{\Lambda,\Lambda_0}_{l,n}\left(z_1;\vec{p}_n;(z_1-z_i)^r
\phi_{\tau,y_{2,s}}\right)\right|\\
\leq \frac{\left(\Lambda+m\right)^{4-n-|w|-r}}{\left(\Lambda_0+m\right)^2}\mathcal{P}\left(\log \frac{\Lambda_0+m}{m}\right)\mathcal{P}\left(\frac{\left\|\vec{p}_n\right\|}{\Lambda+m}\right)\mathcal{Q}\left(\frac{\tau^{-\frac{1}{2}}}{\Lambda+m}\right)\mathcal{F}_{s,l}^{\Lambda}\left(\tau\right)\ .\nonumber
\end{eqnarray*}
Integrating from $\Lambda$ to $\Lambda_0$ and using that
\begin{eqnarray}
 \frac{\left\|\vec{p}_n\right\|}{\lambda+m}\leq \frac{\left\|\vec{p}_n\right\|}{\Lambda+m}\ ,
~~~~\frac{\tau^{-\frac{1}{2}}}{\lambda+m}\leq 
\frac{\tau^{-\frac{1}{2}}}{\Lambda+m},~~~~\mathcal{F}_{s,l}^{\lambda}\left(\tau\right)
\leq \mathcal{F}_{s,l}^{\Lambda}\left(\tau\right)~~~\forall \Lambda\leq \lambda
\end{eqnarray}
gives a bound on the second term on the r.h.s. of (\ref{loupe}), which 
is of the type (\ref{cc1}).\\
(\ref{cc2}) and (\ref{cc3}) are proved following the same steps.
\item \underline{II) $(n,r,|w|)=(4,0,0)$, $(n,r,|w|)=(2,0,0)$ and $(n,r,|w|)=(2,0,2)$:}\\
 The FE equation provides inductive bounds on the relevant parts in these cases, and since the renormalization conditions do not depend on $\Lambda_0$, we obtain
\begin{eqnarray}
    |\partial_{\Lambda_0}c_l^{\Lambda,\Lambda_0}(z_1)|\leq \frac{\left(\Lambda+m\right)}{\left(\Lambda_0+m\right)^2}\mathcal{P}\left(\log \frac{\Lambda_0+m}{m}\right),\label{clam}\\|\partial_{\Lambda_0}a_l^{\Lambda,\Lambda_0}(z_1)|\leq \frac{\left(\Lambda+m\right)^3}{\left(\Lambda_0+m\right)^2}\mathcal{P}\left(\log \frac{\Lambda_0+m}{m}\right),~~~~~~\label{5}\\
     |\partial_{\Lambda_0}b_l^{\Lambda,\Lambda_0}(z_1)|\leq \frac{\left(\Lambda+m\right)}{\left(\Lambda_0+m\right)^2}\mathcal{P}\left(\log \frac{\Lambda_0+m}{m}\right),\\|\partial_{\Lambda_0}d_l^{\Lambda,\Lambda_0}(z_1)|\leq \frac{\left(\Lambda+m\right)}{\left(\Lambda_0+m\right)^2}\mathcal{P}\left(\log \frac{\Lambda_0+m}{m}\right),~~~~~~\\
    |\partial_{\Lambda_0}s_l^{\Lambda,\Lambda_0}(z_1)|\leq \frac{\left(\Lambda+m\right)^2}{\left(\Lambda_0+m\right)^2} \mathcal{P}\left(\log \frac{\Lambda_0+m}{m}\right)~. \label{55}
\end{eqnarray}
In the case $n=4$, we use the decomposition (\ref{decoo}) together with (\ref{phij}), (\ref{cc3}) and (\ref{clam}) to obtain the bound 
\begin{eqnarray}\label{cc1'}
    \fl \left| \partial_{\Lambda_0}\partial^w \mathcal{L}_{l,4}^{\Lambda,\Lambda_0}(z_1;\vec{0};\phi_{\tau,y_{2,s}})\right|\nonumber\leq \frac{\left(\Lambda+m\right)}{\left(\Lambda_0+m\right)^2}\tilde{\mathcal{P}}_1\left(\log \frac{\Lambda_0+m}{m}\right)\\\times\tilde{\mathcal{P}}_2\left(\frac{\left\|\vec{p}_n\right\|}{\Lambda+m}\right) \tilde{\mathcal{Q}}_1\left(\frac{\tau^{-\frac{1}{2}}}{\Lambda+m}\right)\mathcal{F}^{\Lambda}_{s,l}(\tau)\ .
\end{eqnarray}
For $n=2$, we use the decomposition (\ref{sum0}) and follow the same steps as in part b2) of the proof of Theorem 1. Using (\ref{irr}) and the bound (\ref{cc2}), we obtain for all $0<\delta'<1$,
\begin{eqnarray*}
\fl  \partial_{\Lambda_0}
l_{2,l}^{\Lambda,\Lambda_0}(z_1;\phi_2)\nonumber=
\int_{0}^{\infty}dz_2\int_0^1 dt \frac{(1-t)^2}{2!}
\frac{\partial_t^3{\phi}_2\left(tz_2+(1-t)z_1\right)}{(z_2-z_1)^3}
\partial_{\Lambda_0}F_{12}
\mathcal{L}_{l,2}^{\Lambda,\Lambda_0}\left((z_1,0),(z_2,0)\right)\\
\fl \indent \leq  {\left(\Lambda_0+m\right)^{-2}}
\mathcal{P}
\left(\log \frac{\Lambda_0+m}{m}\right) 
\int_{0}^{\infty}dz_2
\ \mathcal{F}^{12}_{s,l}(\Lambda,\tau)
\int_0^1 dt~ p_R\left(\tau_{\delta'};tz_2+(1-t)z_1,y_2\right).
\end{eqnarray*}
Following the same steps as used before we obtain 
\begin{equation}\label{fia}
\fl~~~~~~\left|\partial_{\Lambda_0}l_{2,l}^{\Lambda,\Lambda_0}(z_1;\phi)\right|\leq \tau^{-\frac{3}{2}}
{(\Lambda_0+m)^{-2}}
\mathcal{P}\left(\log \frac{\Lambda_0+m}{m}\right)
   p_B(\tau_{\delta};z_1,y_2)~~~ \forall \delta'\Lambda^2\geq  b\tau^{-1}\ ,
\end{equation}
where we choose as before $\delta'=\delta/3+O(\delta^2)$.\\
For the case $\delta'\Lambda^2\leq b\tau^{-1}$, the bound is obtained as in the proof of Theorem 1 by using the decomposition (\ref{sum0}) which yields
\begin{eqnarray*}
    \fl \partial_{\Lambda}\partial_{\Lambda_0}l_{l,2}^{\Lambda,\Lambda_0}(z_1;\phi_2)=\partial_{\Lambda}\partial_{\Lambda_0}\mathcal{L}_{l,2}^{\Lambda,\Lambda_0}\left(z_1;0,0;\phi_2\right)-\partial_{\Lambda}\partial_{\Lambda_0}a_l^{\Lambda,\Lambda_0}(z_1)\phi_2(z_1) \\+\partial_{\Lambda}\partial_{\Lambda_0}s_l^{\Lambda,\Lambda_0}(z_1)(\partial_{z_1}\phi_2)(z_1)-\partial_{\Lambda}\partial_{\Lambda_0}d_l^{\Lambda,\Lambda_0}(z_1) (\partial_{z_1}^2\phi_2)(z_1)\ .
\end{eqnarray*}
Using the induction hypothesis (\ref{cc1}), we obtain as in (\ref{123})
\begin{equation*}
    \fl~~~~\left|\partial_{\Lambda}
\partial_{\Lambda_0}l_{l,2}^{\Lambda,\Lambda_0}(z_1;\phi_2)\right|
\leq\frac{\left(\Lambda+m\right)^2}
{\left(\Lambda_0+m\right)^2}\mathcal{P}\left(\log \frac{\Lambda_0+m}{m}\right)\mathcal{Q}\left(\frac{\tau^{-\frac{1}{2}}}{\Lambda+m}\right)
   p_B(\tau_{\delta};z_1,y_2)\ .
\end{equation*}
Since $\delta'\Lambda^2\leq b\tau^{-1}$ we find 
\begin{eqnarray}\label{cuicui}
\!
\fl \left|\partial_{\Lambda}
\partial_{\Lambda_0}l_{2,l}^{\Lambda,\Lambda_0}(z_1,\phi_2)\right|\nonumber\\
\fl~~~~~\leq \max\Bigl(m^4,{\tau^{-2}}\left(\frac{b}{\delta'}\right)^{2}\Bigr)
\frac{\left(\Lambda+m\right)^{-2}}
{\left(\Lambda_0+m\right)^{2}}
\mathcal{P}\Bigl(\log \frac{\Lambda_0+m}{m}\Bigr)
  \! \mathcal{Q}\Bigl(\frac{\tau^{-\frac{1}{2}}}{\Lambda+m}\Bigr)
p_B(\tau_{\delta};z_1,y_2)~.
\end{eqnarray}
Integrating from $\Lambda$ to $\Lambda_0$ (with $\Lambda_0$ large enough), we obtain
\begin{eqnarray*}
\fl~~~~~\partial_{\Lambda_0}l_{2,l}^{\Lambda,\Lambda_0}(z_1,\phi_2)=\int_{\Lambda}^{\frac{b}{\delta'}{\tau}^{-\frac{1}{2}}}d\lambda~\partial_{\lambda}\partial_{\Lambda_0}l_{2,l}^{\Lambda,\Lambda_0}(z_1,\phi_2)+\int_{\frac{b}{\delta'}{\tau}^{-\frac{1}{2}}}^{\Lambda_0}d\lambda~\partial_{\lambda}\partial_{\Lambda_0}l_{2,l}^{\Lambda,\Lambda_0}(z_1,\phi_2)~.
\end{eqnarray*}
Using (\ref{cuicui}) we obtain 
\begin{eqnarray}
\fl \left|\int_{\Lambda}^{\frac{b}{\delta'}{\tau}^{-\frac{1}{2}}}d\lambda~\partial_{\lambda}\partial_{\Lambda_0}l_{2,l}^{\Lambda,\Lambda_0}(z_1,\phi_2)\right|\nonumber\\\fl~~~~~~~\leq  \max \left(m^4,{\tau^{-2}}\left(\frac{b}{\delta'}\right)^{2}\right)\frac{(\Lambda+m)^{-1}}{\left(\Lambda_0+m\right)^{2}}\mathcal{P}\left(\log \frac{\Lambda_0+m}{m}\right)\mathcal{Q}\Bigl(\frac{\tau^{-\frac{1}{2}}}{\Lambda+m}\Bigr)
   p_B(\tau_{\delta};z_1,y_2)~,\nonumber\\
   \fl~~~~~~~\leq \frac{(\Lambda+m)^3}{\left(\Lambda_0+m\right)^2}\mathcal{P}\left(\log \frac{\Lambda_0+m}{m}\right)
  {\mathcal{Q}} \left(\frac{\tau^{-\frac{1}{2}}}{\Lambda+m}\right)p_B(\tau_{\delta};z_1,y_2)~.\label{sir}
\end{eqnarray}
We have 
\begin{eqnarray}
     \int_{\frac{b}{\delta'}{\tau}^{-\frac{1}{2}}}^{\Lambda_0}d\lambda~\partial_{\lambda}\partial_{\Lambda_0}l_{2,l}^{\Lambda,\Lambda_0}(z_1,\phi_2)=\partial_{\Lambda_0}l_{2,l}^{\frac{b}{\delta'}\tau^{-\frac{1}{2}},\Lambda_0}(z_1,\phi_2)~.
\end{eqnarray}
Using (\ref{fia}) for $\delta'\Lambda^2\geq b\tau^{-1}$, we deduce that 
\begin{eqnarray}\label{sir1}
    \fl ~~~ \left|\int_{{\tau}^{-\frac{1}{2}}}^{\Lambda_0}d\lambda~\partial_{\lambda}\partial_{\Lambda_0}l_{2,l}^{\Lambda,\Lambda_0}(z_1,\phi_2)\right|\leq \frac{(\Lambda+m)^3}{\left(\Lambda_0+m\right)^2}\mathcal{P}\left(\log \frac{\Lambda_0+m}{m}\right)
   \left(\frac{\tau^{-\frac{1}{2}}}{\Lambda+m}\right)^3p_B(\tau_{\delta};z_1,y_2)~,
\end{eqnarray}
(\ref{sir}) together with (\ref{sir1}) imply that for all $0\leq \Lambda\leq \Lambda_0$, we have
\begin{eqnarray*}
    \fl ~~~ \left|\partial_{\Lambda_0}l_{2,l}^{\Lambda,\Lambda_0}(z_1,\phi_2)\right|\leq \frac{(\Lambda+m)^3}{\left(\Lambda_0+m\right)^2}\mathcal{P}\left(\log \frac{\Lambda_0+m}{m}\right)
   \mathcal{Q}\left(\frac{\tau^{-\frac{1}{2}}}{\Lambda+m}\right)~p_B(\tau_{\delta};z_1,y_2)~.
\end{eqnarray*}
\end{itemize}
This concludes the proof for $n=2, r=0$ and $w=0$. The case $n=2$, $r=0$ and $w=2$ is treated similarly. Extension to general momenta is again achieved via the Taylor formula (\ref{taylorfor}). Note that compared to the proof of Theorem 1, we don't need extra bounds for the cases $n=2,~r=2$ and $n=2,~r=1$ since they are not required to close the inductive scheme. 
The bounds (\ref{5})-(\ref{55}) leading to convergence are obtained 
using only the FE together with the inductive hypotheses 
(\ref{cc1}) and (\ref{ccc1}) in addition to the bound (\ref{c1}).\\
Thus, the proof of Theorem 2 is complete.
\end{proof}
Integration of the bounds (\ref{cc1}) and (\ref{ccc1}) over ${\Lambda_0}$ immediately proves the convergence of all 
$\mathcal{L}_{l,n}^{\Lambda,\Lambda_0}(z_1,\vec{p}_n;\phi_{\tau,y_{2,s}})$ for fixed $\Lambda$ to finite limits when $\Lambda_0\rightarrow \infty$. In particular, one obtains 
for all $\Lambda_0'>\Lambda_0$ and $\vec{p}_n\in \mathbb{R}^{3n}\,$
\begin{eqnarray}\label{cauchy}
\fl \left | \mathcal{L}_{l,n}^{0,\Lambda_0} (z_1;\vec{p}_n;\phi_{\tau,y_{2,s}}) - \mathcal{L}_{l,n}^{0,\Lambda_0'}(z_1;\vec{p}_n;\phi_{\tau,y_{2,s}}) \right | \nonumber\\<  \frac{m^{5-n}}{\Lambda_0}\left(\log\frac{\Lambda_0+m}{m}\right)^{\nu}\tilde{\mathcal{P}}_2\left( \frac{\left\|\vec{p}_n\right\|}{m} \right)\tilde{\mathcal{Q}}_1\left(\frac{\tau^{-\frac{1}{2}}}{m}\right)\mathcal{F}^{0}_{s,l}(\tau)\ .
\end{eqnarray}
Then the Cauchy criterion in $\mathcal{C}^{\infty}(\mathbb{R}^+)$ 
w.r.t. to $\Lambda_0$ implies the existence of finite 
limits to all loop orders $l\,$.

\newpage
\section*{Appendix A: The heat kernel $p_B$: Properties and bounds}
Here we collect inequalities verified by the one-dimensional heat kernel 
defined by  
\begin{equation}\label{PB}
p_B\left(\tau;z_1,z_2\right)=\frac{1}{\sqrt{2\pi\tau}}e^{-\frac{\left(z_1-z_2\right)^2}{2\tau}},~~\tau>0\ .
\end{equation}
Clearly, we have the following properties of (\ref{PB})
\begin{itemize}
    \item Normalization:
    \begin{equation}\label{intc}
        \int_{\mathbb{R}}du~p_B(\tau;z,u)=1,~~~\forall z\in \mathbb{R}\ .
    \end{equation}
    \item The semi-group property:
    \begin{equation}\label{sgr}
        \int_{\mathbb{R}}du~ p_B(\tau_1;z_1,u)~p_B(\tau_2;u,z_2)
=p_B(\tau_1+\tau_2;z_1,z_2),~~~~~\forall z_1,z_2\in \mathbb{R}\ .
    \end{equation}
    \item Let $p_R$ be the Robin heat kernel given by 
    $$p_R(\tau;z,z')=p_B(\tau;z,z')+\, p_B(\tau;z,-z')
-2\int_0^{\infty}\!\! dw~ e^{-w}\ 
p_B\left(\tau;z,-\frac{w}{c}-z'\right)~~~~\forall z, z'
\in \mathbb{R}^+\ .$$
   Then 
    \begin{equation}\label{prb}
        p_R(\tau;z,z')\leq 2~p_B(\tau;z,z')
~~~~\forall z, z'\in \mathbb{R}^+\ .
    \end{equation}
    \item A simple 
computation gives that for all $z_1$ and $z_2$ in $\mathbb{R}^+$, we have 
    \begin{equation}\label{rr+}
       \fl  \int_{\mathbb{R}}du~ p_B(\tau_1;z_1,u)
~p_B(\tau_2;u,z_2)\leq 2 \int_{\mathbb{R}^+}du~ p_B(\tau_1;z_1,u)
~p_B(\tau_2;u,z_2)~.
    \end{equation}
    and
    \begin{equation}\label{in1}
    \fl \indent |z_1-z_2|^r\ 
p_B\left(\tau;z_1,z_2\right)
\leq C\  {\tau^{\frac{r}{2}}}p_B\left(\tau;z_1,z_2
\right),~~C=2^{\frac{r+1}{2}} \, \| \,x^r e^{-\frac{x^2}{2}}\|_{\infty}\ ,
\end{equation}
\item For $0<\delta<1$ and $z_1,z_2\in \mathbb{R}^+$, we have 
\begin{eqnarray}\label{pbdelta}
     p_B\left(\tau;z_1,z_2\right)\leq \sqrt{1+\delta}~p_B\left(\tau_{\delta};z_1,z_2\right).
\end{eqnarray}
\end{itemize}
The following lemmata are used repeatedly in the inductive proof 
of the bounds (\ref{c1})-(\ref{c3}).
\begin{lemma}\label{lemma2}
For all $t$, $u\,$, $v\,$ and $y$ in $\mathbb{R}^+$, $\tau>0$ 
and some constant $C_{k,\delta}>0$
\begin{equation*}
\left|{\partial_t^k}p_B\left(\tau;tu+(1-t)v,y\right)\right|
\leq C_{k,\delta} \frac{|u-v|^k}{\tau^{\frac{k}{2}}} {p}_B
\left(\tau_{\delta};tu+(1-t)v,y\right)\ .
\end{equation*}
\end{lemma}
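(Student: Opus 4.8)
The plan is to reduce the $t$-derivatives to spatial derivatives of the Gaussian and then to absorb the resulting polynomial factor by exploiting the extra decay gained on passing from $\tau$ to $\tau_\delta=(1+\delta)\tau$. First I would observe that the argument $\zeta(t):=tu+(1-t)v$ is \emph{affine} in $t$, with $\zeta'(t)=u-v$ and $\zeta''(t)=0$. Hence all higher $t$-derivatives collapse through the chain rule to a single spatial derivative,
\begin{equation*}
\partial_t^k\,p_B\!\left(\tau;\zeta(t),y\right)=(u-v)^k\,\bigl(\partial_z^k p_B\bigr)(\tau;z,y)\big|_{z=\zeta(t)}\ ,
\end{equation*}
which immediately produces the factor $|u-v|^k$ and reduces the claim to a bound on $\partial_z^k p_B(\tau;z,y)$.

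Next I would establish the Hermite-type representation of the spatial derivative, generalizing formula (\ref{leibz1}) from even to arbitrary order: by a straightforward induction on $k$ one gets, with real coefficients $c_i(k)$,
\begin{equation*}
\partial_z^k p_B(\tau;z,y)=\tau^{-k/2}\sum_{i=0}^{k}c_i(k)\left(\frac{z-y}{\sqrt{\tau}}\right)^{i}p_B(\tau;z,y)\ .
\end{equation*}
This already isolates the expected factor $\tau^{-k/2}$; the remaining job is to trade the Gaussian $p_B(\tau;\cdot)$ for $p_B(\tau_\delta;\cdot)$ at the cost of controlling the polynomial in $\xi:=(z-y)/\sqrt{\tau}$. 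A direct computation of the ratio of the two Gaussians yields
\begin{equation*}
p_B(\tau;z,y)=\sqrt{1+\delta}\;\exp\!\left(-\frac{\delta}{1+\delta}\,\frac{\xi^2}{2}\right)p_B(\tau_\delta;z,y)\ .
\end{equation*}

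Combining the three displays gives $\bigl|\partial_z^k p_B(\tau;z,y)\bigr|\le \tau^{-k/2}\sqrt{1+\delta}\,\bigl(\sum_{i=0}^k|c_i(k)|\,|\xi|^{i}\bigr)\,e^{-\frac{\delta}{1+\delta}\xi^2/2}\,p_B(\tau_\delta;z,y)$, and I would close the argument by setting
\begin{equation*}
C_{k,\delta}:=\sqrt{1+\delta}\,\sup_{\xi\in\mathbb{R}}\left(\sum_{i=0}^{k}|c_i(k)|\,|\xi|^{i}\right)e^{-\frac{\delta}{2(1+\delta)}\xi^{2}}<\infty\ ,
\end{equation*}
after which restoring $z=\zeta(t)$ and reinstating $|u-v|^k$ delivers the stated inequality. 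The only step requiring genuine care — and the reason the width is enlarged to $\tau_\delta$ rather than left at $\tau$ — is the finiteness of this supremum: it is precisely the \emph{strict positivity} of $\delta$ that furnishes the residual Gaussian factor $e^{-\frac{\delta}{2(1+\delta)}\xi^2}$ needed to dominate the polynomial growth of the Hermite factor uniformly in $\xi$. For $\delta=0$ no such uniform constant exists, so this is the load-bearing point of the proof; everything else is bookkeeping.
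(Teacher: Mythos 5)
Your proof is correct and takes essentially the same approach as the paper: the paper likewise establishes by induction the identity $\partial_t^k\,p_B(\tau;tu+(1-t)v,y)=\frac{(u-v)^k}{\tau^{k/2}}\,\mathcal{P}_k\bigl(\frac{tu+(1-t)v-y}{\sqrt{\tau}}\bigr)\,p_B(\tau;tu+(1-t)v,y)$ — which is precisely your chain-rule reduction combined with your Hermite-type formula for $\partial_z^k p_B$ — and then trades $p_B(\tau;\cdot)$ for $p_B(\tau_\delta;\cdot)$ at the cost of a residual Gaussian factor, defining $C_{k,\delta}$ as the supremum of the polynomial times that factor. Your observation that strict positivity of $\delta$ is the load-bearing point is exactly what makes the paper's constant $C_{k,\delta}=\sup_{x}\bigl|\mathcal{P}_k(x)\,e^{-\frac{x^2}{1+\delta}\cdot\frac{\delta}{1+2\delta}}\bigr|$ finite.
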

\begin{proof}
One can prove by induction that
\begin{equation*}
   \fl \indent {\partial_t^k}p_B\left(\tau;tu+(1-t)v,y\right)
=\frac{(u-v)^k}{\tau^{\frac{k}{2}}}\,
\mathcal{P}_k\left(\frac{tu+(1-t)v-y}{\sqrt{\tau}}\right)
\, p_B(\tau;tu+(1-t)v,y)~.
\end{equation*}
$\mathcal{P}_k$ is a polynomial  of degree $k$ and has at least one root if $k$ is odd. Therefore,
\begin{eqnarray*}
   \fl \indent 
\left|{\partial_t^k}p_B\left(\tau;tu+(1-t)v,y\right) \right|
\leq \left|\mathcal{P}_k\left(\frac{tu+(1-t)v-y}{\sqrt{\tau}}\right)\right|
\, e^{-\frac{\left(tu+(1-t)v-y\right)^2}{2(1+\delta)\tau}\cdot\frac{\delta}{1+2\delta}}\\
\indent \times\ 
\frac{|u-v|^k}{\tau^{\frac{k}{2}}}p_B(\tau_{\delta};tu+(1-t)v,y)\ .
\end{eqnarray*}
The lemma follows directly with 
$\,C_{k,\delta}:=\sup_{x\in \mathbb{R}}\left|\mathcal{P}_k(x)e^{-\frac{x^2}{1+\delta}\cdot\frac{\delta}{1+2\delta}}\right|$~.
\end{proof}
\begin{corollary}
For all $t$, $u\,$, $v\,$ and $y$ in $\mathbb{R}^+$, $\tau>0$ 
and some constant $C'_{k,\delta}>0$
\begin{equation*}
\left|{\partial_t^k}p_R\left(\tau;tu+(1-t)v,y\right)\right|
\leq C'_{k,\delta} \frac{|u-v|^k}{\tau^{\frac{k}{2}}} 
{p}_B\left(\tau_{\delta};tu+(1-t)v,y\right)\ .
\end{equation*}
\end{corollary}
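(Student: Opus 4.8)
The plan is to deduce the Robin estimate directly from the Gaussian estimate of Lemma~\ref{lemma2} by inserting the explicit image-charge decomposition of $p_R$ recorded in Appendix~A. Writing $z=tu+(1-t)v$ and $z'=y$, this reads
\[
p_R(\tau;z,y)=p_B(\tau;z,y)+p_B(\tau;z,-y)-2\int_0^{\infty}dw\,e^{-w}\,p_B\Bigl(\tau;z,-\frac{w}{c}-y\Bigr).
\]
First I would note that the factorization identity proved inside Lemma~\ref{lemma2}, expressing $\partial_t^k p_B(\tau;tu+(1-t)v,Y)$ as $(u-v)^k\tau^{-k/2}\,\mathcal{P}_k(\cdot)\,p_B(\tau;\cdot)$ together with the uniform bound on $|\mathcal{P}_k(x)e^{-x^2\delta/((1+\delta)(1+2\delta))}|$, uses nothing about the sign of the second argument $Y$. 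Hence the conclusion of Lemma~\ref{lemma2} holds verbatim for any fixed $Y\in\mathbb{R}$, and in particular for the reflected and shifted points $Y=-y$ and $Y=-\frac{w}{c}-y$ produced by the decomposition.

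Next I would differentiate the decomposition $k$ times in $t$, pulling $\partial_t^k$ under the $w$-integral. This is legitimate by dominated convergence, since the integrand and each of its $t$-derivatives is bounded by $e^{-w}$ times a Gaussian carrying at most polynomial prefactors in $w/c$, which is integrable over $\mathbb{R}^+$. Applying Lemma~\ref{lemma2} to each of the three terms then gives, with $C_{k,\delta}$ the constant of that lemma,
\[
\left|\partial_t^k p_R(\tau;z,y)\right|\leq C_{k,\delta}\frac{|u-v|^k}{\tau^{k/2}}\left[p_B(\tau_\delta;z,y)+p_B(\tau_\delta;z,-y)+2\int_0^{\infty}dw\,e^{-w}\,p_B\Bigl(\tau_\delta;z,-\frac{w}{c}-y\Bigr)\right].
\]

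To collapse the three Gaussian contributions into the single target $p_B(\tau_\delta;z,y)$, I would use the monotonicity of the Gaussian in the separation of its arguments. For $t\in[0,1]$ and $u,v\geq0$ one has $z\geq0$, whence for $y\geq0$ and $w,c>0$ the inequalities $z+\frac{w}{c}+y\geq z+y\geq|z-y|$ give $p_B(\tau_\delta;z,-\frac{w}{c}-y)\leq p_B(\tau_\delta;z,-y)\leq p_B(\tau_\delta;z,y)$. Evaluating the elementary integral $\int_0^{\infty}e^{-w}\,dw=1$ then bounds the third bracketed term by $p_B(\tau_\delta;z,y)$ as well, and summing the contributions yields the claim with $C'_{k,\delta}=4\,C_{k,\delta}$.

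I do not expect a genuine obstacle in this argument. The only two points requiring a word of care are the observation that Lemma~\ref{lemma2} is insensitive to the sign of its second slot --- which is precisely what lets the image-charge terms of the Robin kernel be fed into it --- and the standard dominated-convergence justification for differentiating under the $w$-integral; both are routine.
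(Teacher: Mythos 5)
Your proposal is correct and follows essentially the same route as the paper's own proof: the same image-charge decomposition of $p_R$, application of Lemma~\ref{lemma2} to each of the three Gaussian terms, the monotonicity bound $e^{-(z+y+w/c)^2/2}\leq e^{-(z+y)^2/2}\leq e^{-(z-y)^2/2}$ to collapse everything onto $p_B(\tau_{\delta};z,y)$, and the resulting constant $C'_{k,\delta}=4C_{k,\delta}$. Your two explicit remarks --- that the lemma is insensitive to the sign of its second argument, and the dominated-convergence justification for differentiating under the $w$-integral --- are routine points the paper leaves implicit.
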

\begin{proof}
We have 
$$p_R(\tau;z,z')=\,
p_B(\tau;z,z')+p_B(\tau;z,-z')
-2\int_0^{\infty}dw~ e^{-w}p_B\left(\tau;z,-\frac{w}{c}-z'\right)\ .$$
Using Lemma \ref{lemma2} we obtain 
\begin{eqnarray*}
\fl \left|\partial_t^k p_R\left(\tau;tu+(1-t)v,y\right)\right|
\leq C_{k,\delta}\frac{|u-v|^k}{\tau^{\frac{k}{2}}}\left({p}_B
\left(\tau_{\delta};tu+(1-t)v,y\right)\right.\\
\fl~~~\left.+~{p}_B
\left(\tau_{\delta};tu+(1-t)v,-y\right)+2\int_0^{\infty}dw~e^{-w}~
{p}_B\left(\tau_{\delta};tu+(1-t)v,-y-\frac{w}{c}\right)\right)\ .
\end{eqnarray*}
Using that for $z,~z',~w\in\mathbb{R}^+$
\begin{equation*}
    e^{-\frac{(z+z'+w/c)^2}{2}}\leq e^{-\frac{(z+z')^2}{2}}\leq e^{-\frac{(z-z')^2}{2}}
\end{equation*}
we obtain 
\begin{eqnarray*}
 {\partial_t^k}p_R\left(\tau;tu+(1-t)v,y\right)
\leq 4~C_{k,\delta}\frac{|u-v|^k}{\tau^{\frac{k}{2}}}{p}_B
\left(\tau_{\delta};tu+(1-t)v,y\right).
\end{eqnarray*}
We may set $C'_{k,\delta}:=4C_{k,\delta}$.
\end{proof}
 \begin{lemma}\label{lemme1}
Let $0<\delta'<1$, $\Lambda_I\geq \Lambda$ and $b>0$ such that $\delta'\Lambda^2 \geq b\tau^{-1}$. For $z_1,\,z_2\,$  
and $y_1$ in $\mathbb{R}^+$ and $\tau>0$ we have
 \begin{eqnarray}\label{37}
    \fl  \left|z_1-z_2\right|p_{B}\left(\frac{1+\delta}{\Lambda_{I}^2};z_1,z_2\right)\int_0^1 
dt~p_{B}\left(\tau_{\delta'};t z_2+(1-t)z_1,y_1\right)\nonumber\\ \leq C_{\delta}~  \Lambda^{-1}
p_{B}\left(\frac{2}{\Lambda_{I}^2};z_1,z_2\right)
p_{B}\left((1+\delta')^3\tau;z_1,y_1\right)\ .
 \end{eqnarray}
 \end{lemma}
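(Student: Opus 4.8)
The plan is to reduce everything to elementary one–dimensional Gaussian estimates, exploiting the hypothesis $\delta'\Lambda^2\geq b\tau^{-1}$ to guarantee that the inner heat kernel $p_B(\tau_{\delta'};\cdot,\cdot)$ is \emph{wide} (slowly varying) on the scale $\Lambda_I^{-1}\leq\Lambda^{-1}$ on which $p_B(\frac{1+\delta}{\Lambda_I^2};z_1,z_2)$ is concentrated. Writing the center of the inner kernel as $tz_2+(1-t)z_1=z_1+t(z_2-z_1)$, the idea is to recenter it at $z_1$, paying a small Gaussian price in $(z_1-z_2)^2$ that is then reabsorbed by the narrow kernel, while the surplus decay of the narrow kernel produces the prefactor $\Lambda^{-1}$.

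First I would recenter the inner kernel. Using the elementary inequality $(a+c)^2\geq\frac{1}{1+\delta'}a^2-\frac{1}{\delta'}c^2$, valid for all real $a,c$ because the associated quadratic form is positive semidefinite, with $a=z_1-y_1$, $c=t(z_2-z_1)$ and $|c|\leq|z_1-z_2|$, one obtains the pointwise, $t$-independent bound
\[
p_B(\tau_{\delta'};tz_2+(1-t)z_1,y_1)\leq\frac{1}{\sqrt{2\pi(1+\delta')\tau}}\,e^{-\frac{(z_1-y_1)^2}{2(1+\delta')^2\tau}}\,e^{\frac{(z_1-z_2)^2}{2\delta'(1+\delta')\tau}}\,,
\]
so that $\int_0^1dt\,(\cdots)$ merely contributes a factor $1$. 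Since $(1+\delta')^2\leq(1+\delta')^3$, the decaying factor in $(z_1-y_1)$ is already bounded by the exponential of $p_B((1+\delta')^3\tau;z_1,y_1)$, the ratio of the two Gaussian prefactors being the constant $(1+\delta')$, which I absorb into $C_\delta$.

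Next I would charge the growth factor $e^{(z_1-z_2)^2/2\delta'(1+\delta')\tau}$ against the narrow kernel. As $\Lambda_I\geq\Lambda$, the hypothesis yields $\tau^{-1}\leq\delta'\Lambda_I^2/b$, so the growth exponent is at most $\frac{(z_1-z_2)^2\Lambda_I^2}{2(1+\delta')b}$. Multiplying by $p_B(\frac{1+\delta}{\Lambda_I^2};z_1,z_2)$ leaves a Gaussian in $(z_1-z_2)$ with net coefficient $\Theta=\frac{\Lambda_I^2}{1+\delta}-\frac{1}{\delta'(1+\delta')\tau}\geq\Lambda_I^2\big(\frac{1}{1+\delta}-\frac{1}{b(1+\delta')}\big)$. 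The calibrated value $b=2\frac{1+2\delta}{1-2\delta}$ of the main text is chosen precisely so that $\frac{1}{1+\delta}-\frac{1}{b(1+\delta')}\geq\frac12$ (a short cross-multiplication confirms $\frac{1+2\delta}{1-2\delta}\geq\frac{1+\delta}{1-\delta}$ for $2\delta<1$), whence $\Theta\geq\frac{\Lambda_I^2}{2}$ with strictly positive margin.

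Finally I would peel off exactly the decay rate of the target kernel $p_B(\frac{2}{\Lambda_I^2};z_1,z_2)$, whose exponent is $-\frac{(z_1-z_2)^2\Lambda_I^2}{4}$, and use the surplus $\beta=\Theta-\frac{\Lambda_I^2}{2}\geq c\,\Lambda_I^2>0$ to control the factor $|z_1-z_2|$ via $|d|\,e^{-\beta d^2/2}\leq\beta^{-1/2}\sup_s(s\,e^{-s^2/2})$. This delivers a factor $\beta^{-1/2}\leq C\Lambda_I^{-1}\leq C\Lambda^{-1}$, i.e. the claimed $\Lambda^{-1}$, and collecting the remaining constant Gaussian prefactors gives the assertion with a constant $C_\delta$ depending only on $\delta$ (and on $\delta',b$, which are tied to $\delta$). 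The main obstacle is exactly this coefficient bookkeeping: one must verify that after \emph{both} the recentering growth factor and the extraction of $|z_1-z_2|$ have been debited against $p_B(\frac{1+\delta}{\Lambda_I^2};z_1,z_2)$, the surviving decay in $(z_1-z_2)$ is still at least that of $p_B(\frac{2}{\Lambda_I^2};z_1,z_2)$ — and this is precisely what the hypothesis $\delta'\Lambda^2\geq b\tau^{-1}$, with the tuned value of $b$, secures.
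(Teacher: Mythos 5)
Your proof is correct and is in substance the paper's own argument: the same Gaussian exponent bookkeeping in which the hypothesis $\delta'\Lambda^2\geq b\tau^{-1}$ (via $\Lambda_I\geq\Lambda$) converts the recentering penalty $e^{(z_1-z_2)^2/2\delta'(1+\delta')\tau}$ into at most the fraction $\frac{1}{b(1+\delta')}$ of the narrow kernel's decay rate, the tuned value $b=2\,\frac{1+2\delta}{1-2\delta}$ guarantees that the surviving coefficient strictly exceeds $\frac{\Lambda_I^2}{2}$, and your Peetre inequality $a^2\leq(1+\delta')(a+c)^2+\frac{1+\delta'}{\delta'}c^2$ is exactly the paper's step $\frac{1}{\delta'}|z_1-z_2|^2+|tz_2+(1-t)z_1-y_1|^2\geq\frac{|z_1-y_1|^2}{1+\delta'}$ read in the reverse direction. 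The only deviation is bookkeeping order — the paper peels off $|z_1-z_2|\leq C_\delta\,\Lambda^{-1}$ first, by widening $\frac{1+\delta}{\Lambda_I^2}\to\frac{1+2\delta}{\Lambda_I^2}$, whereas you extract it last from the surplus margin $\beta\geq\frac{\delta}{(1+\delta)(1+2\delta)}\Lambda_I^2$ (and you get the slightly stronger decay $(1+\delta')^2\tau$ before weakening to $(1+\delta')^3\tau$) — which changes nothing essential.
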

 \begin{proof}
 We have 
 \begin{equation*}
     p_{B}\left(\frac{1+\delta}{\Lambda_{I}^2};z_1,z_2\right)=\sqrt{\frac{1+2\delta}{1+\delta}} ~~p_{B}\left(\frac{1+2\delta}{\Lambda_{I}^2};z_1,z_2\right)e^{-\frac{\Lambda_I^2\left(z_1-z_2\right)^2}{2(1+\delta)}\cdot\frac{\delta}{1+2\delta}}~,
 \end{equation*}
 which implies 
 \begin{eqnarray}
    \fl\indent  \left|z_1-z_2\right| p_{B}\left(\frac{1+\delta}{\Lambda_{I}^2};z_1,z_2\right)
\leq C_{\delta}~\Lambda^{-1}~p_{B}\left(\frac{1+2\delta}{\Lambda_{I}^2};z_1,z_2\right),\label{few} 
 \end{eqnarray}
where $C_{\delta}:=\sqrt{\frac{1+2\delta}{1+\delta}}\left\|x~e^{-\frac{x^2}{2(1+\delta)}\cdot\frac{\delta}{1+2\delta}}\right\|_{\infty}$, and we used that $\Lambda\leq \Lambda_I$.\\
Now, we bound 
\begin{equation}\label{feeww}
   p_{B}\left(\frac{1+2\delta}{\Lambda_{I}^2};z_1,z_2\right)\int_0^1 
dt~p_{B}\left(\tau_{\delta'};t z_2+(1-t)z_1,y_1\right)~.
\end{equation}
 For  $0<\delta'< 1$ and $\delta'\Lambda_I^2\geq b\tau^{-1}$, we have
\begin{eqnarray}
     \fl \frac{\Lambda_{I}^2}{1+2\delta}(z_1-z_2)^2&+\frac{1}{\tau(1+\delta')}(t z_2+(1-t)z_1-y_1)^2\\&=\frac{\Lambda_I^2|z_1-z_2|^2}{b}+\frac{\Lambda_I^2|z_1-z_2|^2}{2}+\frac{1}{\tau(1+\delta')}(t z_2+(1-t)z_1-y_1)^2\nonumber\\
&\geq\frac{|z_1-z_2|^2}{\tau\delta'}+\frac{\Lambda_I^2|z_1-z_2|^2}{4}+\frac{1}{\tau(1+\delta')^2}(t z_2+(1-t)z_1-y_1)^2~.\nonumber
\end{eqnarray}
Let $0\leq t\leq 1$, we have
\begin{eqnarray}
\fl \frac{|z_1-z_2|^2}{\delta'}+\frac{|t z_2+(1-t)z_1-y_1|^2}{(1+\delta')^2}&\geq
\frac{1}{(1+\delta')^2}\left[\frac{1}{\delta'}|z_1-z_2|^2+|tz_2+(1-t)z_1-y_1|^2\right]~.\nonumber
 \end{eqnarray}
 Using that 
 \begin{equation*}
     |z_1-z_2|=|z_1-tz_2-(1-t)z_1|+|tz_2+(1-t)z_1-z_2|~,
 \end{equation*}
 we obtain 
 \begin{eqnarray*}
      \fl~~\frac{1}{\delta'}|z_1-z_2|^2&+|tz_2+(1-t)z_1-y_1|^2\nonumber\\
      &\geq \frac{1}{\delta'}|z_1-tz_2-(1-t)z_1|^2+|tz_2+(1-t)z_1-y_1|^2\\
      &\geq \frac{1}{1+\delta'}\left(|z_1-tz_2-(1-t)z_1|+|tz_2+(1-t)z_1-y_1|\right)^2\\
      &\geq \frac{|z_1-y_1|^2}{1+\delta'}~.
 \end{eqnarray*}
Therefore we obtain that 
\begin{eqnarray}
     \fl~~~ \frac{\Lambda_{I}^2}{1+2\delta}(z_1-z_2)^2&+\frac{1}{\tau(1+\delta')}(t z_2+(1-t)z_1-y_1)^2\geq\frac{|z_1-y_1|^2}{\tau(1+\delta')^3}+\frac{\Lambda_I^2|z_1-z_2|^2}{4}~,\nonumber
\end{eqnarray}
which implies that (\ref{feeww}) can be  bounded by 
 \begin{equation*}
     C_{\delta}~\Lambda^{-1}~ p_{B}\left(\frac{2}{\Lambda_{I}^2};z_1,z_2\right)p_{B}\left((1+\delta')^3\tau;z_1,y_1\right),
 \end{equation*}
 which ends the proof.
 \end{proof}
 \newpage
\section*{References}

\end{document}